\newcommand{\Disj}{\mathrm{Disj}}
\newcommand{\Branches}{\mathrm{Branches}}
\newcommand{\dnf}{\mathrm{Dnf}}
\newcommand{\depth}{\mathrm{depth}}
\newcommand{\fancyF}{\mathscr{F}}
\newcommand{\exG}{G}
\newcommand{\G}{G}
\newcommand{\calT}{{\cal{T}}}
\newcommand{\calP}{{\cal{P}}}
\newcommand{\calA}{{\cal{A}}}
\def\colorful{1}
\newcommand{\Tseitin}{\mathsf{Tseitin}}
\newcommand{\closure}{{\mathrm{cl}}}
\newcommand{\cl}{{\mathrm{cl}}}
\newcommand{\Rgrid}{\mcal{R}^{\textnormal{grid}}}
\newcommand{\hit}{\upharpoonright}
\DeclareMathOperator{\CCDT}{CCDT}
\DeclareMathOperator{\CDT}{CDT}
\DeclareMathOperator{\Bin}{Bin}
\DeclareMathOperator{\Geo}{Geo}
\DeclareMathOperator{\Ber}{Ber}
\newcommand{\scr}[1]{\mathscr{#1}}
\newcommand{\mcal}[1]{\mathcal{#1}}
\DeclarePairedDelimiter{\ceil}{\lceil}{\rceil}
\newtheorem*{rep@theorem}{\rep@title}
\newcommand{\newreptheorem}[2]{
\newenvironment{rep#1}[1]{
 \def\rep@title{#2 \ref{##1}}
 \begin{rep@theorem}\itshape}
 {\end{rep@theorem}}}
\theoremstyle{plain}
\newtheorem*{rep@claim}{\rep@title}
\newcommand{\newrepclaim}[2]{
\newenvironment{rep#1}[1]{
 \def\rep@title{#2 \ref{##1}}
 \begin{rep@claim}\itshape}
 {\end{rep@claim}}}
\theoremstyle{plain}
\newtheorem*{rep@defn}{\rep@title}
\newcommand{\newrepdefn}[2]{
\newenvironment{rep#1}[1]{
 \def\rep@title{#2 \ref{##1}}
 \begin{rep@defn}\itshape}
 {\end{rep@defn}}}
\theoremstyle{definition}
\begin{document}

\title{Tradeoffs for small-depth Frege proofs
 \vspace{15pt}}

\author{\hspace{-15pt}Toniann Pitassi\vspace{8pt} \\ \hspace{-15pt}{\sl \small{Columbia University}}\\ \hspace{-18pt} {\sl \small{University of Toronto}} \and \hspace{-12pt} Prasanna Ramakrishnan\vspace{8pt} \\
\hspace{-20pt}  {\sl \small{Stanford}}
\and \hspace{-10pt} Li-Yang Tan \vspace{8pt} \\ \hspace{-15pt} {\sl \small{Stanford}}}

\date{\vspace{15pt}\small{\today}}

\maketitle

\begin{abstract} 
We study the complexity of small-depth Frege proofs and give the first tradeoffs between the size of each line and the number of lines.  Existing lower bounds apply to the overall proof size---the sum of sizes of all lines---and do not distinguish between these notions of complexity.  

For depth-$d$ Frege proofs of the Tseitin principle on the $n \times n$ grid where each line is a size-$s$ formula, we prove that $\exp(n/2^{\Omega(d\sqrt{\log s})})$ many lines are necessary.  This yields new lower bounds on line complexity that are not implied by H{\aa}stad's recent $\exp(n^{\Omega(1/d)})$ lower bound on the overall proof size.  For $s = \poly(n)$, for example, our lower bound remains $\exp(n^{1-o(1)})$ for all $d = o(\sqrt{\log n})$, whereas H{\aa}stad's lower bound is $\exp(n^{o(1)})$ once $d = \omega_n(1)$. 

Our main conceptual contribution is the simple observation that techniques for establishing {\sl correlation bounds} in circuit complexity can be leveraged to establish such tradeoffs in proof complexity. 

\end{abstract}

 \thispagestyle{empty}
\newpage

\newpage

\hypersetup{linkcolor=magenta}
\hypersetup{linktocpage}
\setcounter{tocdepth}{2}

\tableofcontents
 \thispagestyle{empty}

 \newpage 

\section{Introduction}

\setcounter{page}{1}

Propositional proof complexity has its roots in the seminal paper of Cook and Reckow~\cite{CR74}, whose work was motivated by the observation that $\mathsf{NP} \ne \mathsf{coNP}$ if and only if there are no {\sl $p$-bounded} propositional proof systems, one in which every unsatisfiable formula admits a short proof of unsatisfiability.  Their paper was then devoted to introducing the {\sl Frege proof system}, which they considered to be ``a general kind of proof system which includes many of the standard systems appearing in logic textbooks."  Proving that the Frege proof system is not $p$-bounded is now considered to be the main open problem of propositional proof complexity~\cite{Raz03}.

The Frege proof system includes {\sl Resolution} as an important special case.  Each line of a Frege proof is a propositional formula, and in Resolution these formulas are restricted to have depth one.    The study of Resolution in fact predates~\cite{CR74}: an early influential paper of Tseitin~\cite{tseitin68} proved an exponential lower bound on the size of regular Resolution proofs for a principle that subsequently came to be known as the {\sl Tseitin principle}.  Two decades later, a breakthrough result of Haken~\cite{haken:pigeon} established exponential lower bounds on the size of general Resolution proofs of the pigeonhole principle.  This was followed by a flurry of results in the late 80's and 90's~\cite{urq87,cs:resolution,bkps:kcnf,bw:reswidth} establishing various other exponential lower bounds for general Resolution proofs.

The natural next step is to understand the complexity of small-depth Frege proofs more generally.   Ajtai took the first major step~\cite{Ajtai94} by showing that constant-depth Frege proofs (Frege proofs in which each line is a constant-depth formula) of the pigeonhole principle must have superpolynomial size.  Ajtai's proof was a sophisticated blend of nonstandard model theory and combinatorics;~\cite{BPU:92} subsequently gave a purely combinatorial reformulation.  This was followed by~\cite{PBI93,KPW95}, who significantly strengthened Ajtai's result with an $\exp(n^{\Omega(1/4^d)})$ on the size of depth-$d$ Frege proofs of the pigeonhole principle.  Urquhart and Fu~\cite{urq:php-frege} and Ben-Sasson~\cite{bs:tseitin} subsequently derived similar lower bounds for the Tseitin principle.  

Recent work of~\cite{PRST16} gives an $n^{\Omega((\log n)/d^2)}$ lower bound on the size of depth-$d$ Frege proofs of the Tseitin principle.  While this lower bound is at best quasipolynomial, it has the advantage of being superpolynomial for all $d \le o(\sqrt{\log n})$, whereas previous bounds trivialized once $d = \Omega(\log\log n)$.  H{\aa}stad~\cite{Has21} has recently improved~\cite{PRST16}'s lower bound to $\exp(n^{\Omega(1/d)})$.  This (essentially) matches the best known {\sl circuit} lower bounds for depth-$d$ circuits~\cite{Hastad86}, and brings the state of the art of proof complexity into close alignment with that circuit complexity.

\paragraph{Our main result.}  Every one of the aforementioned Frege lower bounds, from~\cite{Ajtai94} to~\cite{Has21}, applies to the {\sl overall size} of the proof: the sum, across all lines of the proof, of the sizes of the formulas at each line.  If we write $s$ to denote the size of the largest formula occurring in the proof and $S$ to denote the total number of lines, and~\cite{Has21}'s result as an example, it shows that 
\begin{equation*} S \times s \ge \exp(n^{\Omega(1/d)}). \label{eq:LtimesS}
\end{equation*} 

In fact all lower bounds for bounded-depth Frege proofs actually lower bound the number of distinct subformulas occurring in the proof, and it is known that distinct subformula size is a lower bound on line size. Therefore, \cite{Has21} actually proved $S \ge \exp(n^{\Omega(1/d)})$.
(See \cite{buss-remarks,beckmann-buss-ac0frege} for more details on the relationships between various size measures.)

In this work we are interested in understanding the relationship between $S$ and $s$, and in particular, the possible tradeoffs between them: can we derive a stronger lower bound on $S$ than is implied by~\cite{Has21} if $s$ is, say, $\poly(n)$?  We provide the first affirmative answers to this question:

\begin{theorem} 
\label{thm:main} 
Let $\mathcal{P}$ be a Frege proof of the Tseitin principle on the $n \times n$ grid graph.  If each line of $\mathcal{P}$ is a size-$s$ depth-$d$ formula, then $\mathcal{P}$ must have 
\[ S \ge \exp(n/2^{\Omega(d\sqrt{\log s})}) \] 
many lines. 
\end{theorem}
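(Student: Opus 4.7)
The plan is to combine a restriction-based simplification of the depth-$d$ formulas on each proof line with the width lower bound for Resolution refutations of $\Tseitin$ on expander-like graphs. Suppose for contradiction that $\mathcal{P}$ has $S < \exp(n/2^{\Omega(d\sqrt{\log s})})$ lines. I would seek a random restriction $\rho$ on the edge variables such that (i) the restricted instance $\Tseitin_{G|_\rho}$ remains unsatisfiable on a subgraph with $\Omega(n)$ edge-expansion, and (ii) every formula line of $\mathcal{P}$ collapses to a shallow decision tree under $\rho$.

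The key technical input for (ii) is a ``balanced'' switching lemma for depth-$d$ size-$s$ formulas. Iterating H{\aa}stad's classical switching lemma level-by-level yields a depth-reduction whose parameters degrade by $\Theta(\log s)$ per level, giving the standard $(\log s)^{d-1}$ dependence seen in the prior lower bounds of~\cite{PBI93,KPW95,Has21}. In the tradeoff regime of interest here, where $s$ may be much smaller than the overall proof size, I would instead use an AM-GM-optimized restriction schedule that pays a factor of $2^{O(\sqrt{\log s})}$ per level rather than $\log s$. This should give each line a probability of at least $1 - \exp(-n/2^{O(d\sqrt{\log s})})$ of collapsing, after restriction, to a decision tree of constant depth.

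Union-bounding this failure probability over the $S$ lines, and coupling the restriction with one that preserves $\Omega(n)$ edge-expansion of $G|_\rho$ with high probability, the assumed bound on $S$ implies that with positive probability every line of $\mathcal{P}|_\rho$ becomes a shallow decision tree on the restricted grid. The resulting object is effectively a bounded-width Resolution-style refutation of $\Tseitin_{G|_\rho}$, contradicting Ben-Sasson's $\Omega(n)$ Resolution width lower bound~\cite{bs:tseitin} for Tseitin on expanders. Hence $S \geq \exp(n/2^{\Omega(d\sqrt{\log s})})$, as desired.

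The main obstacle will be the balanced switching lemma: obtaining the $2^{O(d\sqrt{\log s})}$ parameter, rather than the $(\log s)^{d-1}$ achievable by naive iteration, requires simultaneously balancing the per-level failure probability against the surviving bottom-fan-in of the formula. This is precisely where the conceptual contribution of the paper enters: techniques originally developed to prove correlation bounds against $\oplus$ by small-depth circuits rely on exactly such balanced restriction schedules, and can be adapted here to produce simplification probabilities that, while too weak for an overall-size lower bound that improves on~\cite{Has21}, are sharp enough to drive the line-count vs.\ line-size tradeoff.
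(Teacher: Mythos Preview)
Your overall shape is right---hit the proof with a structured restriction, simplify each line, derive a contradiction---and you correctly identify the $2^{O(\sqrt{\log s})}$-per-level target. But two steps do not go through as written.

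First, an ``AM-GM-optimized schedule'' for a single switching lemma will not by itself produce a per-line failure probability of $\exp(-n/2^{O(d\sqrt{\log s})})$. Iterating a single switching lemma level-by-level forces you, at each stage, to union-bound over all $\le s$ bottom subformulas of the line; the next-round bottom fan-in then has to absorb a $\log(sS)$ term rather than a $\log s$ term, so the $S$-dependence leaks into the $\ast$-probability and the tradeoff disappears. What is actually needed is a \emph{multi}-switching lemma applied to each line: the common partial decision tree of depth $t$ carries the union bound over the $S$ lines, while the depth-$\ell$ leaf trees (with $\ell=\sqrt{\log s}$, independent of $S$) become the $\ell$-DNFs for the next round. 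The paper's main technical contribution is exactly such a multi-switching lemma for H{\aa}stad's grid restrictions, and the $s^{\lceil t/\ell\rceil}$ factor in its failure bound is where the $\sqrt{\log s}$ comes from.

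Second, the endgame via Resolution width does not work here. The trees you obtain are not constant depth---their depth is $\Theta(td)$ with $t$ comparable to $\log S$---and, more seriously, under grid restrictions the canonical trees are built with pruning: branches that fail to push the contradiction into the giant component are cut. Consequently these trees do \emph{not} compute the restricted lines as Boolean functions, and the semantic entailment $A_{j_1}\wedge\cdots\Rightarrow A_i$ that your Resolution simulation relies on need not hold at the level of the trees. The paper avoids this entirely by working with $k$-evaluations: an assignment of a good tree to every \emph{subformula occurrence} (not just to each line), required only to be locally consistent with the Frege inference structure. One then argues inductively that every line's tree is a $1$-tree, contradicting the fact that the final line $0$ must be a $0$-tree---no detour through Resolution, and no need for the trees to be honest computations.
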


Theorem~\ref{thm:main} is incomparable to~\cite{Has21}'s $S \ge \exp(n^{\Omega(1/d)})$ lower bound.  For $s = \poly(n)$, for example, our lower bound is $S \ge \exp(n^{1-o(1)})$ for all $d\le o(\sqrt{\log n})$, whereas~\cite{Has21}'s lower bound is $\exp(n^{o(1)})$ once $d = \omega_n(1)$.   On the other hand, our lower bound trivializes once $s = n^{\Omega((\log n)/d^2)}$, just like~\cite{PRST16}'s lower bound, whereas~\cite{Has21}'s result holds all the way up to $s = \exp(n^{\Omega(1/d)})$.  Indeed, Theorem~\ref{thm:main} is best thought of as a strict strengthening of~\cite{PRST16}'s result, and we prove Theorem~\ref{thm:main} by building primarily on~\cite{PRST16}'s techniques.

The main open problem that our work leaves open is that of obtaining an analogous strengthening of~\cite{Has21}'s result.  We discuss this in the conclusion and point out the technical obstacle that stands in the way.

\paragraph{Towards truly exponential lower bounds for small-depth Frege.}  As mentioned above, the problem of proving superpolynomial lower bounds against the Frege proof system, first considered by Cook and Reckhow almost four decades ago, is widely regarded as the flagship challenge of propositional proof complexity.  
The strongest unconditional size lower bounds for Frege proofs is $\Omega(n^2)$, and only $\Omega(n)$ on the number of lines, despite decades of research. The power of general Frege proofs is  closely related to that of small-depth Frege proofs.  Specifically, it is known that every polynomial-size Frege proof can be transformed into a constant-depth subexponential-size Frege proof~\cite{FPS15} (see also~\cite{PS19}); equivalently, truly exponential lower bounds, meaning $2^{\Omega(n)}$, on the size of constant-depth Frege proofs would yield superpolynomial lower bounds on the size of general Frege proofs. Our main result can be viewed as progress towards this.

We also mention that line complexity is an independently interesting complexity measure.  Indeed, for many proof systems the size of each line is inherently restricted to be of polynomial size, and hence the overall size of the proof is essentially equivalent to the number of lines.  Notable examples of such proof systems include Resolution and Cutting Planes.  
Furthermore, the line complexity of Frege proofs is polynomially related to the overall size complexity of {\sl Extended} Frege proofs.

\paragraph{A new connection to circuit complexity.}  Historically, progress in proof complexity lower bounds for Frege followed analogous progress in circuit lower bounds against small-depth circuits, sometimes soon after, and sometimes much later.  \cite{Ajtai94} obtained his superpolynomial lower bound against constant-depth Frege after his breakthrough superpolynomial lower bound against constant-depth circuits;~\cite{PBI93,KPW95}'s exponential lower bounds for constant-depth Frege similarly followed landmark exponential lower bounds against constant-depth circuits~\cite{yao1985,Hastad:86};~\cite{Has21}'s recent $\exp(n^{\Omega(1/d)})$ lower bound against depth-$d$ Frege (essentially) matches the best known lower bounds against depth-$d$ circuits~\cite{Hastad:86}.    All of this is no coincidence: the proofs of these Frege lower bounds all utilize techniques from circuit complexity, most notably that of random restrictions, which also plays a central role in our work. 

There has been an important line of work in circuit complexity that has thus far lacked an analogue in proof complexity---that of {\sl correlation bounds}.  For small-depth circuits specifically, works on this include~\cite{ajtai1983,yao1985,Hastad86,Cai86,babai1987,BeameIS12,IMP12,Has14}.  At first blush, it may seem like there is no sensible analogue of correlation bounds in proof complexity, since the hard instances in this context are tautologies. While this is certainly true, in this work we show how the techniques used to establish correlation bounds against small-depth circuits can be fruitfully utilized to prove the kind of tradeoffs for Frege discussed above.  At a high level, we think of the parallel as follows: 

\begin{center} 
$\le \frac1{2} + \eps$ correlation bounds against size-$s$ depth-$d$ formulas \vspace{5pt} \\
$\approx$ Frege proofs where each line is a size-$s$ depth-$d$ formula must have $\ge\frac1{\eps}$ lines. 
\end{center} 

\paragraph{Multiswitching lemmas for proof complexity.} In more detail, we give the first Frege analogue of the {\sl multi-switching lemma} machinery~\cite{IMP12,Has14} developed in the context of circuit complexity correlation bounds.  Our main technical result is a ``multiswitch version" of~\cite{PRST16}'s switching lemma that is at the heart of their lower bound.  

We briefly recap how traditional switching lemmas (``single-switch" ones) are applied in the context of Frege lower bounds, and contrast that with the new approach that we take in this work.  These traditional switching lemmas show that a single small-width DNF formula, when hit by a random restriction, switches to a small-width CNF formula with high probability (and vice versa, from CNF formulas to DNF formulas).    Let $\mathcal{P}$ be a depth-$d$ Frege proof with $S$ lines, with each line being a size-$s$ depth-$d$ formula.  The goal, much like in circuit lower bounds against small-depth circuits, is to reduce the depth of $\mathcal{P}$: to show that $\mathcal{P}$, when hit by a random restriction, collapses to a depth-($d-1$) proof with high probability.   Towards that end, we let $\mathscr{F}$ be the union, over all $S$ lines of the proof, of the $\le s$ many depth-$2$ formulas occurring at the bottom layers of these lines.  The size of $\mathscr{F}$ is $\le S \times s$.  One then applies a switching lemma to each of the formulas in $\mathscr{F}$, and sets the parameters appropriately so that the failure probability is small enough to union bound over the size of $\mathscr{F}$.

Note that in this approach, which captures the proof strategy of all previous Frege lower bounds, all the depth-$2$ formulas occurring in $\mathcal{P}$ are treated the same way---regardless of whether they occur in the same line or in different lines.  Put another way, consider two Frege proofs, one in which $s \ll S$ (say $s = \poly(n)$ whereas $S = \exp(n^{\eps})$) and another in which $s \approx S$ (say $s = S = \exp(n^{\eps})$).  In both cases, the size of $\mathscr{F}$ is $\poly(S)$, and we have not picked up on the very substantial difference in the complexity of these proofs.   %

We take a new approach in this work.  Instead of reasoning about all the depth-$2$ formulas as a single collection $\mathscr{F}$ of size $s \times S$, we consider $S$ many collections $\mathscr{F}_1,\ldots,\mathscr{F}_S$, each of size $s$, one corresponding to each line of the proof.  {\sl Multiswitching} lemmas, introduced in the context of circuit complexity correlation bounds~\cite{IMP12,Has14}, apply to a family $\mathscr{F}$ of depth-$2$ formulas rather than a single one.  Roughly speaking, these lemmas show that under a random restriction, with high probability every formula $F \in \mathscr{F}$ collapses to a small-depth decision tree $T$, where all but the last few layers of $T$ are common to all $F \in \mathscr{F}$.  The failure probability scales with $s$, and becomes stronger the smaller $s$ is.  Therefore when $s$ is small we are able to derive a much stronger lower bound on $S$ than in the previous approach of prior works where these parameters were conflated.   Implementing this new approach necessitates substantial changes to the proof complexity aspect of showing that $\mathcal{P}$ remains a valid proof under a random restriction.  We generalize the notion of a $k$-evaluation so that
 $k$-evaluations of different occurrences of the same subformula in the proof need not be the same; prior work all required that the $k$-evaluation was identical for every occurrence of the same subformua in the proof.

Random restriction arguments are well known to be considerably more subtle and challenging in the setting of proof complexity compared to circuit complexity, and this is the main reason why progress in Frege lower bounds have consistently lagged analogous progress in small-depth circuit lower bounds.  In proof complexity, one has to carefully design random restrictions that preserve the structure of the hard tautologies (i.e.~not make them ``obviously true"), which rules out fully independent random restrictions that are most commonly applied in circuit complexity.   This in turn means that the proofs of the associated switching lemmas have to be flexible enough to accommodate such random restrictions.  In this work we prove a multiswitch generalization of~\cite{PRST16}'s switching lemma, and we show that our proof is flexible enough to accommodate~\cite{Has21}'s ``grid random restrictions" that preserve the hardness of the Tseitin principle on the grid.

\section{Definitions} 
\label{sec:defs}

\subsection{Frege Systems} \label{sec:frege}

The underlying Frege system
that we will use here is Shoenfield's system, as presented in \cite{urq:php-frege}.
Because any two bounded-depth Frege systems over $\land$, $\lor$ and $\neg$
can polynomially simulate one another \cite{CR79}, our results hold more generally
for any bounded-depth Frege system over this basis.

Our proof system uses binary disjunction $\lor$ and $\neg$;
a conjunction $A \land B$ is treated as an abbreviation
for the formula $\neg(\neg A \lor \neg B)$.
In addition, we include the propositional constants $0$ and $1$,
representing ``false'' and ``true'' respectively.
If $A, B_1,\ldots,B_m$ are formulas over a sequence of  variables $p_1,\ldots, p_m$, then $A[B_1/p_1, \ldots, B_m/p_m]$ is the formula
resulting from $A$ by substituting $B_1,\ldots,B_m$ for $p_1,\ldots,p_m$.

A {\it rule} is defined to be a sequence of formulas,
written $A_1,\ldots,A_k \rightarrow A_0$.
In the case where $A_1,\ldots,A_k$ is empty, the rule is
referred to as an {\it axiom} {\it scheme}. The rule is \emph{sound}
if every truth assignment satisfying all of $A_1,\ldots, A_k$ also
satisfies $A_0$. If $A_1,\ldots, A_k \rightarrow A_0$ is a Frege rule,
then $C_0$ is {\it inferred} {\it from} $C_1,\ldots,C_k$ by this
rule if there is a sequence of formulas $B_1,\ldots,B_m$ and variables
$p_1,\ldots,p_m$ so that for all $i,0\leq i \leq k$,
$C_i = A_i[B_1/p_1,\ldots,B_m/p_m]$.
(In other words $C_1,\ldots,C_k \rightarrow C_0$ is a substitution
instance of $A_1,\ldots,A_k \rightarrow A_0$.)

Shoenfield's system $\fancyF$ consists of the following rules:

\begin{itemize}
\item (Excluded Middle): $(p \lor \neg p)$;
\item (Expansion rule): $p \rightarrow q \lor p$;
\item (Contraction rule): $p \lor p \rightarrow p$;
\item (Associative rule): $p \lor (q \lor r) \rightarrow (p \lor q)\lor r$;
\item (Cut rule): $p \lor q, \neg p \lor r \rightarrow q \lor r$.
\end{itemize}

Let $A= C_1 \land \ldots \land C_m$ be an unsatisfiable CNF formula. A {\it refutation} of 
$A$ in $\fancyF$ is a finite sequence of formulas such that every formula
in the sequence is one of $C_1,\ldots, C_m$ or inferred from earlier formulas
in the sequence by a rule in $\fancyF$, and the last formula is $0$.
$\fancyF$ is a sound and complete proof system: all of the rules are sound
and thus if $A$ has a refutation then $A$ is unsatisfiable, and
furthermore $\fancyF$ is complete since every unsatisfiable CNF formula has
a refutation in $\fancyF$.

We will be working with the above proof system which operates
with {\it binary} connectives; however we want to measure formula
depth using {\it unbounded} fan-in connectives.
To do this, we will measure the depth of a formula by the number
of alternations of $\lor$ and $\neg$.
More precisely, we can represent a formula by a tree in which each
leaf is labeled with a propositional variable or a constant, and each
interior node is labeled with $\lor$ if it is the parent of two nodes or $\neg$
if it is the parent of only one. A branch in the tree, when traversed
from the root to the leaf, is labeled with a block of operators
of one kind (say $\neg$) followed by a block of the other kind (say $\lor$), and so on, 
ending with a variable or constant. The {\it logical} {\it depth} of
a branch is defined to be the number of blocks of operators labelling the branch.
The {\it depth} of a formula is the maximum logical depth of the
branches in its formation tree.
This notion of depth is the same as the depth of the
corresponding unbounded fan-in $\land,\lor,\neg$ formula, up to small constant factors.

A refutation in  $\fancyF$ has depth $d$ if the maximum depth of any formula in the proof
is at most $d$.
The \emph{size} of a formula is the number of (binary) connectives in the formula.
A refutation $P$ has size $(s,S)$ if the number of formulas
in $P$ is at most $S$, and each formula in $P$ has size at most $s$.

\subsection{Restrictions and Decision Trees}

Given a set of variables $E$, a \emph{restriction $\beta$} is an element of $(\Z_2 \cup \{\ast\})^E$.  We may equivalently view $\beta$ as an element of $\Z_2^{S}$ for some $S \subseteq E$, and we say that $S$ is the \emph{support of $\beta$}, denoted $\supp(\beta).$   Given restrictions $\beta,\beta' \in (\Z_2 \cup \{\ast\})^E$ we say that $\beta'$ is a \emph{sub-restriction} of $\beta$ if $\supp(\beta') \subseteq \supp(\beta)$ and $\beta(e) =b \in \Z_2$ whenever $\beta'(e)=b \in \Z_2.$  We say that $\beta$ is an \emph{extension} of $\beta'$ if $\beta'$ is a sub-restriction of $\beta.$   We say that a set of restrictions $\{\beta_1,\dots,\beta_t\}$ is \emph{mutually compatible} if for every $i,j \in [t]$ we have that $\beta_i(e)=\beta_j(e)$ whenever $e \in \supp(\beta_i) \cap \supp(\beta_j)$.

A decision tree is said to be \emph{proper} if no variable is queried two or more times on any branch.  Throughout the entire paper all decision trees are assumed to be proper unless explicitly stated otherwise.

Let $T$ be a decision tree over a space of variables $E$ and let $\beta$ be a restriction.  We write ``$T \uhr \beta$'' to denote the decision tree obtained by simplifying $T$ according to $\beta$  ``in the obvious way.''  More precisely, if $T=(e;T_0,T_1)$, meaning that $T$ is the decision tree with variable $e$ at the root and left (respectively, right) child $T_0$ (respectively, $T_1$), then we have the following:

\begin{itemize}

\item If $\beta(e)=\ast$ then $T \uhr \beta = (e; T_0 \uhr \beta; T_1 \uhr \beta)$;

\item If $\beta(e)=0$ then $T \uhr \beta = T_0 \uhr \beta$;

\item If $\beta(e)=1$ then $T \uhr \beta = T_1 \uhr \beta$.

\end{itemize}

For $b \in \Z_2$, we write ``$T \uhr \beta = b$'' to indicate that $T \uhr \beta$ is the $1$-node tree consisting only of a leaf $b$.

A \emph{branch} of a decision tree $T$ over variable set $E$ is a root-to-leaf path (which, more formally, is a sequence $\pi = \pi_1 \circ \dots \circ \pi_k$ where each $\pi_i$ is an element of $E \times \Z_2$ and no element occurs twice in $\pi$).  Clearly every branch in a decision tree $T$ corresponds to a unique restriction and we will often take this perspective of viewing branches as restrictions.  

Unless specified otherwise a decision tree is assumed to have every leaf labeled with an element of $\Z_2$; we sometimes refer to such trees as \emph{total} to emphasize that every leaf is labeled with an element of $\Z_2.$   
Given a total decision tree $T$, we write $T^c$ to denote the tree obtained from $T$ by replacing each leaf bit with its complement.

For $b \in \Z_2 $ and a decision tree $T$ let $\mathrm{Branches}_b(T)$ denote the set of branches of $T$ whose leaf is labeled $b$. 
Let $\mathrm{Branches}(T) = \mathrm{Branches}_0(T) \cup \mathrm{Branches}_1(T)$.%
Note that if $T\uhr \beta = b$ for some $b \in \Z_2$ (i.e.~$T \uhr \beta$ is the constant-$b$ tree of depth $0$), then there exists a path $\pi \in \Branches_b(T)$ such that $\beta$ extends $\pi$.

We will also use the notation $T(\beta)$ to denote the subtree of $T$ that is consistent with $\beta$. (That is, $T(\beta)$ is the same as $T \uhr \beta$.)

A $1$-tree is a decision tree in which every leaf is labeled by $1$, and likewise $0$-tree.

\medskip

\noindent {\bf Note.} We alert the reader to the fact that in later sections we will define a {\it different} procedure for restricting (or pruning) decision trees, where both the decision trees as well as the process are specialized to our application. We will use the notation $T \uhr \beta$ in both cases but it should be clear from the context whether we are using the simple/standard procedure described above, or the nonstandard restriction process described later. 
(In the odd occasions where we use both the standard restriction as well as the nonstandard one, we will use the notation $T(\beta)$ to 
denote the standard restriction, and reserve $T \uhr \beta$ for the nonstandard one.)

We close this subsection with a notational convention:  many of our definitions, lemmas, etc. will involve restrictions.  As a helpful notational convention, in such definitions, lemmas, etc. we use $\rho,\rho',\tilde{\rho},$ etc.\ to denote restrictions which should be thought of as ``coming from'' a random restriction process, and we use $\pi, \pi', \sigma, \sigma'$ to denote restrictions which should be thought of as ``coming from'' a branch in some decision tree.  If the definition/lemma/etc.\ may arise in either context we use $\beta,\beta',\tilde{\beta},$ etc.

\subsection{DNFs and Formulas} \label{sec:DNFs-formulas}

Observe that any element of $\Branches_1(T)$, for any decision tree $T$, corresponds to a conjunction in an obvious way, so we may view $\Branches_1(T)$ as a collection of conjunctions (i.e.\ of terms).
Given a decision tree $T$ we write ``$\Disj(T)$'' to denote the
DNF whose terms correspond precisely to the 1-branches of $T$.
Observe that if $T_1,\dots$ are decision trees then
``$\lor_j \Disj(T_j)$'' is a DNF (whose terms are precisely the terms
that occur in some $\Disj(T_j)$).

For a term $t$, and a restriction $\beta$, $t \uhr {\beta}$ is 0 if any literal in $t$ is set to $0$
by $\beta$, otherwise, $t \uhr {\beta}$ is the conjunction of literals
that are unset by $\beta$. (By definition, a term of size zero is equal to 1.)
For an DNF formula $F = t_1 \lor \ldots \lor t_m$,
$F \uhr {\beta}$ is the DNF formula obtained as follows:
If any term is set to $1$ by $\beta$, then $F \uhr {\beta} =1$, and otherwise
$F_{\beta} = \lor_i  (t_i \uhr {\beta})$.

For a Boolean formula $A$ and a restriction $\beta$,  $A \uhr \beta$ denotes the formula obtained by performing simple variable substitution as dictated by $\beta$.  If $\Gamma$ is a collection of formulas $\Gamma = \{A_i\}$ then $\Gamma \uhr \beta$ denotes $\{A_i \uhr \beta\}.$

\subsection{Tseitin contradictions} \label{sec:tseitin}

Given an undirected graph $G=(V,E)$, and a charge vector $\alpha \in \{0,1\}^{|V|}$,
we define the Tseitin formula, $\Tseitin(G[\alpha])$, as follows.
The underlying variables are $\{x_e ~|~ e \in E\}$. For each vertex $v \in V$
we have a {\it parity} {\it constraint} $C_v$ which asserts that the sum of the variables $x_e$ such
that $e$ is incident to $v$, is equal to $\alpha_v$, the charge at $v$.

In this paper, we will invoke $\Tseitin(G[\alpha])$ where $G$ is a grid graph (or more precisely a torus),
and the charge vector $\alpha$ is the all-1 vector.
The grid graph of dimension $n$, $G_n$ contains $n^2$ vertices, indexed by $(i,j)$, for $0 \leq i,j \leq n-1$.
Each vertex $(i,j)$ is connected to the four nodes at distance 1, i.e.,
$(i,j)$ is connected to $(i-1,j)$, $(i+1,j)$, $(i,j-1)$, and $(i,j+1)$ all mod $n$.
For $\alpha = 1^{|V|}$, the {\it parity} {\it constraint} at a vertex $v$ states that the mod 2 sum of the four edges incident with $v$ is odd.
Each constraint can be formulated as a conjunction of 8 clauses of length 4.
Whenever $n$ is odd (and therefore $|V|=n^2$ is odd, $\Tseitin(G_n,1^{n^2})$ is unsatisfiable
since each edge appears in exactly two equations while the number of equations is odd.

\section{Our Tseitin Instances and H{\aa}stad's Restrictions}
\label{sec:grid-restrictions}

Now we'll describe the Tseitin instances and accompanying random restrictions that our results will be built on. These were originally defined by H{\aa}stad in \cite{Has21}, where he refers to them as \emph{full restrictions} (to distinguish from \emph{partial restrictions}). 

\begin{figure}[h!]
\centering
\begin{tikzpicture}[scale=0.1]
\draw[color=black, fill=black] (9.0, 9.0) circle (15pt);
\draw[color=black, fill=black] (18.0, 18.0) circle (15pt);
\draw[color=black, fill=black] (27.0, 27.0) circle (15pt);
\draw[color=black, fill=black] (9.0, 45.0) circle (15pt);
\draw[color=black, fill=black] (18.0, 54.0) circle (15pt);
\draw[color=black, fill=black] (27.0, 63.0) circle (15pt);
\draw[color=black, fill=black] (9.0, 81.0) circle (15pt);
\draw[color=black, fill=black] (18.0, 90.0) circle (15pt);
\draw[color=black, fill=black] (27.0, 99.0) circle (15pt);
\draw[color=black, fill=black] (45.0, 9.0) circle (15pt);
\draw[color=black, fill=black] (54.0, 18.0) circle (15pt);
\draw[color=black, fill=black] (63.0, 27.0) circle (15pt);
\draw[color=black, fill=black] (45.0, 45.0) circle (15pt);
\draw[color=black, fill=black] (54.0, 54.0) circle (15pt);
\draw[color=black, fill=black] (63.0, 63.0) circle (15pt);
\draw[color=black, fill=black] (45.0, 81.0) circle (15pt);
\draw[color=black, fill=black] (54.0, 90.0) circle (15pt);
\draw[color=black, fill=black] (63.0, 99.0) circle (15pt);
\draw[color=black, fill=black] (81.0, 9.0) circle (15pt);
\draw[color=black, fill=black] (90.0, 18.0) circle (15pt);
\draw[color=black, fill=black] (99.0, 27.0) circle (15pt);
\draw[color=black, fill=black] (81.0, 45.0) circle (15pt);
\draw[color=black, fill=black] (90.0, 54.0) circle (15pt);
\draw[color=black, fill=black] (99.0, 63.0) circle (15pt);
\draw[color=black, fill=black] (81.0, 81.0) circle (15pt);
\draw[color=black, fill=black] (90.0, 90.0) circle (15pt);
\draw[color=black, fill=black] (99.0, 99.0) circle (15pt);
\draw[thick, color=red] (9.0, 9.0) -- (8.0, 9.0);
\draw[thick, color=red] (8.0, 9.0) -- (8.0, 31.5);
\draw[thick, color=red] (8.0, 31.5) -- (10.0, 31.5);
\draw[thick, color=red] (10.0, 31.5) -- (10.0, 45.0);
\draw[thick, color=red] (10.0, 45.0) -- (9.0, 45.0);
\draw[thick, color=red] (9.0, 9.0) -- (9.0, 12.0);
\draw[thick, color=red] (9.0, 12.0) -- (33.5, 12.0);
\draw[thick, color=red] (33.5, 12.0) -- (33.5, 26.0);
\draw[thick, color=red] (33.5, 26.0) -- (63.0, 26.0);
\draw[thick, color=red] (63.0, 26.0) -- (63.0, 27.0);
\draw[thick, color=red] (9.0, 45.0) -- (7.0, 45.0);
\draw[thick, color=red] (7.0, 45.0) -- (7.0, 68.5);
\draw[thick, color=red] (7.0, 68.5) -- (19.0, 68.5);
\draw[thick, color=red] (19.0, 68.5) -- (19.0, 90.0);
\draw[thick, color=red] (19.0, 90.0) -- (18.0, 90.0);
\draw[thick, color=red] (9.0, 45.0) -- (9.0, 47.0);
\draw[thick, color=red] (9.0, 47.0) -- (32.5, 47.0);
\draw[thick, color=red] (32.5, 47.0) -- (32.5, 53.0);
\draw[thick, color=red] (32.5, 53.0) -- (54.0, 53.0);
\draw[thick, color=red] (54.0, 53.0) -- (54.0, 54.0);
\draw[thick, color=red] (18.0, 90.0) -- (18.0, 91.0);
\draw[thick, color=red] (18.0, 91.0) -- (34.5, 91.0);
\draw[thick, color=red] (34.5, 91.0) -- (34.5, 79.0);
\draw[thick, color=red] (34.5, 79.0) -- (45.0, 79.0);
\draw[thick, color=red] (45.0, 79.0) -- (45.0, 81.0);
\draw[thick, color=red] (63.0, 27.0) -- (61.0, 27.0);
\draw[thick, color=red] (61.0, 27.0) -- (61.0, 38.5);
\draw[thick, color=red] (61.0, 38.5) -- (57.0, 38.5);
\draw[thick, color=red] (57.0, 38.5) -- (57.0, 54.0);
\draw[thick, color=red] (57.0, 54.0) -- (54.0, 54.0);
\draw[thick, color=red] (63.0, 27.0) -- (63.0, 28.0);
\draw[thick, color=red] (63.0, 28.0) -- (73.5, 28.0);
\draw[thick, color=red] (73.5, 28.0) -- (73.5, 6.0);
\draw[thick, color=red] (73.5, 6.0) -- (81.0, 6.0);
\draw[thick, color=red] (81.0, 6.0) -- (81.0, 9.0);
\draw[thick, color=red] (54.0, 54.0) -- (53.0, 54.0);
\draw[thick, color=red] (53.0, 54.0) -- (53.0, 70.5);
\draw[thick, color=red] (53.0, 70.5) -- (47.0, 70.5);
\draw[thick, color=red] (47.0, 70.5) -- (47.0, 81.0);
\draw[thick, color=red] (47.0, 81.0) -- (45.0, 81.0);
\draw[thick, color=red] (54.0, 54.0) -- (54.0, 57.0);
\draw[thick, color=red] (54.0, 57.0) -- (72.5, 57.0);
\draw[thick, color=red] (72.5, 57.0) -- (72.5, 61.0);
\draw[thick, color=red] (72.5, 61.0) -- (99.0, 61.0);
\draw[thick, color=red] (99.0, 61.0) -- (99.0, 63.0);
\draw[thick, color=red] (45.0, 81.0) -- (45.0, 83.0);
\draw[thick, color=red] (45.0, 83.0) -- (68.5, 83.0);
\draw[thick, color=red] (68.5, 83.0) -- (68.5, 89.0);
\draw[thick, color=red] (68.5, 89.0) -- (90.0, 89.0);
\draw[thick, color=red] (90.0, 89.0) -- (90.0, 90.0);
\draw[thick, color=red] (81.0, 9.0) -- (78.0, 9.0);
\draw[thick, color=red] (78.0, 9.0) -- (78.0, 33.5);
\draw[thick, color=red] (78.0, 33.5) -- (100.0, 33.5);
\draw[thick, color=red] (100.0, 33.5) -- (100.0, 63.0);
\draw[thick, color=red] (100.0, 63.0) -- (99.0, 63.0);
\draw[thick, color=red] (99.0, 63.0) -- (97.0, 63.0);
\draw[thick, color=red] (97.0, 63.0) -- (97.0, 74.5);
\draw[thick, color=red] (97.0, 74.5) -- (93.0, 74.5);
\draw[thick, color=red] (93.0, 74.5) -- (93.0, 90.0);
\draw[thick, color=red] (93.0, 90.0) -- (90.0, 90.0);
\draw[thin] (0, 0) -- (108, 0);
\draw[thin] (0, 0) -- (0, 108);
\draw[thin] (0, 36) -- (108, 36);
\draw[thin] (36, 0) -- (36, 108);
\draw[thin] (0, 72) -- (108, 72);
\draw[thin] (72, 0) -- (72, 108);
\draw[thin] (0, 108) -- (108, 108);
\draw[thin] (108, 0) -- (108, 108);
\draw[dashed] (4.5, 4.5) -- (4.5, 31.5);
\draw[dashed] (4.5, 4.5) -- (31.5, 4.5);
\draw[dashed] (31.5, 4.5) -- (31.5, 31.5);
\draw[dashed] (4.5, 31.5) -- (31.5, 31.5);
\draw[dashed] (4.5, 40.5) -- (4.5, 67.5);
\draw[dashed] (4.5, 40.5) -- (31.5, 40.5);
\draw[dashed] (31.5, 40.5) -- (31.5, 67.5);
\draw[dashed] (4.5, 67.5) -- (31.5, 67.5);
\draw[dashed] (4.5, 76.5) -- (4.5, 103.5);
\draw[dashed] (4.5, 76.5) -- (31.5, 76.5);
\draw[dashed] (31.5, 76.5) -- (31.5, 103.5);
\draw[dashed] (4.5, 103.5) -- (31.5, 103.5);
\draw[dashed] (40.5, 4.5) -- (40.5, 31.5);
\draw[dashed] (40.5, 4.5) -- (67.5, 4.5);
\draw[dashed] (67.5, 4.5) -- (67.5, 31.5);
\draw[dashed] (40.5, 31.5) -- (67.5, 31.5);
\draw[dashed] (40.5, 40.5) -- (40.5, 67.5);
\draw[dashed] (40.5, 40.5) -- (67.5, 40.5);
\draw[dashed] (67.5, 40.5) -- (67.5, 67.5);
\draw[dashed] (40.5, 67.5) -- (67.5, 67.5);
\draw[dashed] (40.5, 76.5) -- (40.5, 103.5);
\draw[dashed] (40.5, 76.5) -- (67.5, 76.5);
\draw[dashed] (67.5, 76.5) -- (67.5, 103.5);
\draw[dashed] (40.5, 103.5) -- (67.5, 103.5);
\draw[dashed] (76.5, 4.5) -- (76.5, 31.5);
\draw[dashed] (76.5, 4.5) -- (103.5, 4.5);
\draw[dashed] (103.5, 4.5) -- (103.5, 31.5);
\draw[dashed] (76.5, 31.5) -- (103.5, 31.5);
\draw[dashed] (76.5, 40.5) -- (76.5, 67.5);
\draw[dashed] (76.5, 40.5) -- (103.5, 40.5);
\draw[dashed] (103.5, 40.5) -- (103.5, 67.5);
\draw[dashed] (76.5, 67.5) -- (103.5, 67.5);
\draw[dashed] (76.5, 76.5) -- (76.5, 103.5);
\draw[dashed] (76.5, 76.5) -- (103.5, 76.5);
\draw[dashed] (103.5, 76.5) -- (103.5, 103.5);
\draw[dashed] (76.5, 103.5) -- (103.5, 103.5);
\draw[color=blue, fill=blue] (9.0, 9.0) circle (20pt);
\draw[color=blue, fill=blue] (9.0, 45.0) circle (20pt);
\draw[color=blue, fill=blue] (18.0, 90.0) circle (20pt);
\draw[color=blue, fill=blue] (63.0, 27.0) circle (20pt);
\draw[color=blue, fill=blue] (54.0, 54.0) circle (20pt);
\draw[color=blue, fill=blue] (45.0, 81.0) circle (20pt);
\draw[color=blue, fill=blue] (81.0, 9.0) circle (20pt);
\draw[color=blue, fill=blue] (99.0, 63.0) circle (20pt);
\draw[color=blue, fill=blue] (90.0, 90.0) circle (20pt);
\end{tikzpicture}
\caption{The grid hit with a random restriction. Here, $\Delta = 3$, and $n/T = 3$ (which means $T = 36$ and $n = 108$). The central square of each subgrid is traced by the dashed lines, and the chosen center in each subgrid is highlighted in blue. The paths between chosen subgrids are in red, omitting those that wrap around the sides of the grid. After the random restriction, the $108\times 108$ grid becomes a $3 \times 3$ grid, as each of the red paths become a single edge.}\label{fig:diagram}
\end{figure}

Our underlying graph is an $n\times n$ grid (where $n$ is odd), divided into $(n/T)^2$ different $T \times T$ subgrids for some odd $T$, which is a parameter of the restrictions.  Within each subgrid, we consider the $\frac34 T \times \frac34 T$ central square, and $\Delta = \sqrt{T}/2$ vertices equally spaced along the diagonal of the central square. These are called \emph{centers}. It is not difficult to check that the centers are distance $3\Delta$ apart in each direction.

The rough idea of the restriction is to choose a uniformly random center from each of the subgrids, and create an $n/T \times n/T$ subgrid whose vertices are these centers, and whose edges correspond to  (pre-defined) paths between chosen centers in adjacent subgrids. Whenever we refer to ``paths'' in the grid, it is one of these fixed paths. Later on, it'll be crucial that any edge in the grid could only be in one of a handful of these paths, so we'll need to design the paths in a way that accounts for this. 

Consider two $T \times T$ subgrids with one just below the other in the grid. We'll describe the paths that connect the centers of these two subgrids, and the paths for other pairs of subgrids will be defined analogously. 

Note that there are $\Delta^2 = T/4$ possible pairs, as well as $T/4$ rows between the central squares of the two subgrids, so we'll assign a unique row $r_{i, j}$ between the central squares corresponding to the $i$th center of the top subgrid ($c_i'$) and the $j$th center of the bottom subgrid ($c_j$).

Then we connect $c_i'$ and $c_j$ as follows. We start with two L-shaped paths: one that moves left of $c_j$ for $i$ steps and up until it hits $r_{i, j}$, and one that goes right of $c_i'$ for $j$ steps and then down until it hits $r_{i, j}$. We then connect these two paths via a path in $r_{i, j}$. These five segments make up the path between $c_i'$ and $c_j$. See Figure~\ref{fig:diagram}.

The key property of these paths is the following:

\begin{lemma}[{\cite[Lemma~4.1]{Has21}}]\label{lem:path-disjointness}
The described paths are edge-disjoint except for the at most $\Delta$ edges closest to an endpoint. For each edge e, if there is more than one path containing $e$, these paths all have the same endpoint closest to $e$.
\end{lemma}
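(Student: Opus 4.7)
The plan is to decompose each path into five segments---two horizontal ``feet'' of lengths $i$ and $j$ at $c_j$ and $c_i'$, two vertical segments rising to the row $r_{i,j}$, and one horizontal middle segment along $r_{i,j}$---and argue edge-disjointness segment by segment. Paths between horizontally-adjacent subgrids decompose analogously with rows and columns swapped. Since $i,j \in [1,\Delta]$, each foot consists of at most $\Delta$ edges all lying within $\Delta$ of its endpoint, so any overlap confined to feet is already the near-endpoint overlap the lemma permits. It therefore suffices to show (i) no non-foot segment of one path shares an edge with any segment of another path, and (ii) feet of distinct paths can overlap only when they share the endpoint they are anchored at.

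First I would compare two paths $(i,j) \ne (i',j')$ between the same pair of vertically-adjacent subgrids. The middle horizontal segments lie in the distinct rows $r_{i,j} \ne r_{i',j'}$. For the vertical segments, the key arithmetic fact is that centers within a subgrid are $3\Delta$ apart on the diagonal while $i, i', j, j' \in [1,\Delta]$: the columns of the two bottom verticals agree iff $3\Delta(j-j') = i - i'$, which in this range forces $(i,j) = (i',j')$; similarly for top verticals; and a bottom-versus-top comparison yields $3\Delta(j - i') = i + j'$, which has no solution when $i, j' \ge 1$. Horizontal and vertical segments use different edge types and never collide. The only remaining sharing sites are feet: feet at distinct centers sit in distinct rows by the same diagonal spacing, and feet at a common center traveling in the same direction share the endpoint and overlap only in the first $\min(i,i')$ edges adjacent to it.

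Next I would compare paths from different subgrid-pairs. If the pairs share no subgrid, their non-foot segments live in disjoint gap-strips between the respective central squares, giving disjointness immediately. If the pairs share a subgrid $B$, the analysis splits on whether the second pair is vertically-adjacent to $B$ (on the opposite side) or horizontally-adjacent to $B$. In the former case, at any shared center $c \in B$ the two feet travel in opposite horizontal directions and the vertical segments leave $c$'s row into disjoint gap-strips above and below $B$'s central square. In the latter case, feet at $c$ are of opposite edge types, and non-foot activity of the vertical-pair path stays strictly within $B$'s central-square column range while the middle segment of the horizontal-pair path sits strictly in the column gap-strip outside that range---once again because centers lie $3\Delta$ apart while all shifts are at most $\Delta$.

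The hardest sub-case I anticipate is in the last setting: a short ``foot-to-middle'' horizontal segment of a horizontal-pair path, which lives inside $B$ and extends toward the $B$-neighbor gap, might in principle collide with a horizontal foot of a vertical-pair path also inside $B$. Both lie in rows of $B$ in the central-square band, but the former's row is a center row shifted by a strictly positive amount at most $\Delta$ while the latter's row is an unshifted center row, and the $3\Delta$ spacing of center rows prevents coincidence. Once this bookkeeping is made explicit, all remaining sub-cases reduce to the same template and the lemma follows.
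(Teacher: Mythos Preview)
The paper does not prove this lemma at all: it is quoted verbatim as \cite[Lemma~4.1]{Has21} and used as a black box, so there is no in-paper proof to compare against. Your proposal is therefore a self-contained verification of H{\aa}stad's claim rather than a reproduction of anything in this paper.

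On its own merits, your five-segment decomposition and the repeated use of the inequality ``centers are $3\Delta$ apart while all offsets are at most $\Delta$'' is exactly the right engine, and your case split (same subgrid-pair, disjoint subgrid-pairs, pairs sharing one subgrid, and within that vertical--vertical vs.\ vertical--horizontal) is the natural one. The column-collision equation $3\Delta(j-j') = i-i'$ forcing $(i,j)=(i',j')$ is correct, and the observation that non-foot segments of non-overlapping subgrid-pairs live in disjoint inter-central-square strips handles the easy cases cleanly.

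One place to tighten: in the shared-subgrid vertical--horizontal case you assert that the vertical-pair path's non-foot activity ``stays strictly within $B$'s central-square column range.'' That is true of its vertical segments and its foot, but the middle horizontal segment along $r_{i,j}$ can extend well outside $B$'s column range (it runs from one L-shape to the other and crosses the entire gap). What saves you is that this middle segment lies in the row-strip between the two central squares, hence outside $B$'s central-square \emph{row} range, while the horizontal-pair path's non-foot segments stay within $B$'s central-square row range by the same $3\Delta$-versus-$\Delta$ argument. Once you state the confinement in terms of the correct coordinate (rows for vertical-pair middles, columns for horizontal-pair middles), the argument goes through as you outline.
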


Given all of this, we sample a random restriction as follows. First, from each of the $T \times T$ subgrids, we choose a center uniformly at random. Then, we consider the Tseitin formula with charges of 0 at the chosen centers and 1 at every other vertex. Since $n$ and $T$ are odd, the number of chosen centers is odd, which means that the sum of the charges of this Tseitin instance is 0. 

This means that this Tseitin instance is satisfiable, so we choose a uniformly random satisfying assignment to the variables. These are the final assignments for variables that do not lie on any of the paths between chosen centers, and for the variables that do, these are the \emph{suggested} assignments. %

For each path $P$ between chosen centers we will assign a new variable, $x_P$. For each edge $e$ on the path corresponding to the variable $x_e$, $x_e$ is assigned to $x_P$ if the suggested assignment is 0, and $\lnot x_P$ if the suggested assignment is 1. An equivalent view is that instead of creating a new variable $x_P$, we pick some edge $e$ on the path (for concreteness, say one of the edges adjacent to a center), and map every variable on the path to either $x_e$ or $\lnot x_e$. This is the way we will view the restrictions in Section~\ref{sec:keval}.

The result is that the variables $x_P$ form a new Tseitin instance on an $n/T \times n/T$ grid, such that any satisfying assignment to the new Tseitin instance would give us a satisfying assignment to the original instance on the $n/T \times n/T$ grid. The number of variables is reduced by a factor of $1/T^2 = 1/16\Delta^4$, so this is analogous to the $*$-probability $p$ for uniform restrictions. We refer to this distribution of random restrictions as $\Rgrid_\Delta$.\\

We alert the reader to the following notational convention we will use heavily in Section~\ref{sec:switchinglemma}. We will refer to the variables that are not fixed to 0 or 1 by a restriction $\Rgrid_\Delta$ as ``$*$s'', and those that are fixed as ``non-$*$s'', and it will be understood that several $*$ map to a single live variable. Crucially, when we count $*$'s, say in a path of a decision tree, we are really counting the number of live variables that remain (since we'd like this to correspond to the length of paths in the restricted decision tree), and so $*$s that map to the same variable are not double counted.

\section{Independent Sets, Closures and Restrictions}

\label{sec:independent}

Given a graph $G$ and a charge vector $\alpha$, we want to consider
what happens to $\Tseitin(G[\alpha])$ when we apply a partial restriction $\beta$ which sets
a subset $S \subseteq E$ of the edges of $G$ to 0 or 1.
The resulting graph $G' = (V,E')$ has the same set of vertices as $G$
and edges $E' = E - S$. 
This induces a new charge vector, $\alpha' = \alpha \uhr \beta$,
defined next.

\begin{definition}[Restricting a charge]
Let $G = (V,E)$ be a graph and $\alpha \in \Z_2^V$. Let $S \sse E$ be a subset of the edges and $\rho \in \Z_2^S$.  We define $(\alpha \uhr \rho) \in \Z_2^V$ by
\[ (\alpha \uhr \rho)(v) = \alpha(v) - \sum_{e \sim v} \ind[e\in S\ \&\ \rho(e) = 1] \quad \text{for all vertices $v \in V$}.\]
That is, the induced charge of a vertex is toggled if the partial assignment sets an odd number of its incident edges to 1.
\end{definition}

Let $G'$ be a subgraph of $G_n = (V(G_n),E(G_n))$. (That is, $G' = G_n -S$ for some subset $S \subseteq E_n$.)
We say that a connected component $C = (V(C),E(C))$ of $G'$ is
\emph{giant} if $V(C) > |V(G_n)|/2$. Note that there can only be one giant component in any given subgraph graph
of $G_n$.
Given a graph $G$ with charge vector $\alpha$, and
a partial restriction $\beta$, We write $(G -\supp(\beta), \alpha \uhr \beta)$
to denote the subgraph, $G'= G-\supp(\beta)$ obtained by removing
the edges fixed by $\beta$, and updating the charge vector
from $\alpha$ to $\alpha \uhr \beta$.

Throughout this paper, we will be applying partial restrictions that are always {\it consistent} in the following sense.

\begin{definition}[Consistency]
Let $G_n = (V,E)$ be the $n$-dimensional grid graph, and let $S \subseteq E$ be
a subset of the edges, and let
$\beta \in \Z_2^S$ be a partial assignment to the edges $S$.  We say that $\beta$ is \emph{consistent} (with respect to a charge vector $\alpha$) if
for all isolated vertices $v \in G-S$, the parity constraint at vertex $v$ is satisfied.
\end{definition}

\begin{definition}[Nice graphs]
We say that $(G,\alpha)$ is \emph{nice} if: (i) $G$ has a giant
component, and (ii) the $\alpha$-charge of a connected component $C$ (that is, the mod 2 sum
of the charges of the vertices in $C$) 
is odd if and only if $C$ is giant.
\end{definition} 

\medskip

Intuitively ``nice'' is nice for the following reason: say  we apply a restriction $\beta$ to $\Tseitin(G[\alpha])$, to obtain the formula
$\Tseitin(G'[\alpha'])$ where $G'= G-\supp(\beta)$ and 
$\alpha' = \alpha \uhr \beta$. If $(G',\alpha')$ is nice,
then the contradiction of $\Tseitin(G'[\alpha'])$ is trapped in the giant component in the sense that any
assignment to all of the remaining variables must violate
some constraint that lies within the giant component of $G'$.
This guarantees that the restricted formula, $\Tseitin(G'[\alpha'])$ will
still be hard to refute since, intuitively, pinpointing a violated constraint
will be a highly non-local.

Next we want to define the notion of an independent set of edges of $G_n$.
Such sets of independent edges are good in the sense that
for any sufficiently small set of independent edges, and for {\it any} restriction $\beta$ to this set of edges,
the pair $(G',\alpha')$, where $G'=G - \supp(\beta)$ and $\alpha' = \alpha \uhr \beta$ will be nice, and therefore
the restricted Tseitin formula $\Tseitin(G'[\alpha'])$ will still be hard to refute.

\begin{definition}[Independent sets and decision trees]\label{defn:independent}
Let $G = (V,E)$ be connected. A set $I \subseteq E$ is {\em $G$-independent} if $G - I$ is connected. A \emph{$G$-independent decision tree} $T$ is a decision tree querying edges in $E$ such that every branch of $T$ queries a $G$-independent set. 
\end{definition}

\begin{definition}[Good trees] \label{def:good}
Let $G = (V,E)$ be a connected graph and $T$ be a decision tree over $E$. We say that $T$ is $(k,G)$-good if $T$ has depth $< k$, and is $G$-independent. 
\end{definition}

To motivate the following definition, we may think of the assignment to a bridge variable in $G-S$ as being ``forced'' by an assignment to the variables in $S$ (this too will be made much more precise later):

\begin{definition}[Closure of set]
Fix a graph $G = (V,E)$ and a subset $S \sse E$. The \emph{$G$-closure of $S$}, denoted $\closure_G(S)$, is the set $S \cup B$ where $B$ is the set of all bridges in $G - S$.  
\end{definition}

The following facts (which use the fact that the grid graph is a mild expander graph) will be used to satisfy the conditions of Proposition~\ref{prop:closure-is-unique} below.) 
\begin{fact}
\label{fact:closure-small-new}
For every $S\sse E(\exG_{n})$ of size at most $n/4$, $|\closure_{\exG_n}(S)| \leq O|S|.$  
\end{fact}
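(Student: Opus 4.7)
The plan is to control the number of bridges in $G_n - S$ by analyzing the $2$-edge-connected component structure of $G_n - S$ and exploiting the fact that the torus $G_n$ has edge-connectivity $4$. Writing $B$ for the set of bridges of $G_n - S$, we have $\cl_{G_n}(S) = S \cup B$, so it suffices to prove $|B| = O(|S|)$.

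First I would let $C_1,\ldots,C_k$ denote the $2$-edge-connected components of $G_n - S$, and let $c$ be its number of connected components. The standard block-forest structure (within each connected component of $G_n - S$, contracting every $2$-edge-connected component to a single vertex yields a tree whose edges are exactly the bridges) immediately gives the identity $|B| = k - c$. The next key structural observation to establish is that for each $i$, the edge boundary $\partial_{G_n}(V(C_i))$ computed in the full torus is contained in $S \cup B$: any edge of $G_n$ with exactly one endpoint in $V(C_i)$ is either removed by $S$, or survives in $G_n - S$ and must then be a bridge (since otherwise $C_i$ could be enlarged across it).

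With these two structural facts in hand, the rest is a short counting step. Since the torus $G_n$ has edge-connectivity $4$, every proper nonempty subset of $V(G_n)$ has at least $4$ boundary edges. (The case $V(C_i) = V(G_n)$ is trivial since it forces $k=c=1$ and $|B|=0$.) Summing $|\partial_{G_n}(V(C_i))| \geq 4$ over all $i$, and noting that each edge of $S \cup B$ contributes at most twice to the total (once per endpoint), yields
\[ 4k \;\leq\; \sum_{i=1}^{k} |\partial_{G_n}(V(C_i))| \;\leq\; 2(|S| + |B|). \]
Plugging in $k = |B| + c \geq |B| + 1$ and rearranging gives $|B| + 2c \leq |S|$, so $|B| \leq |S|$, hence $|\cl_{G_n}(S)| \leq 2|S| = O(|S|)$.

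The main step requiring care is verifying the structural containment $\partial_{G_n}(V(C_i)) \subseteq S \cup B$ together with the block-forest identity $|B| = k - c$; both are standard consequences of the definitions of bridges and $2$-edge-connected components but deserve to be written out cleanly. I note that the hypothesis $|S| \leq n/4$ is not actually required for this particular $O(|S|)$ bound, which uses only the $4$-edge-connectivity of the torus; it presumably enters in the companion closure facts of this subsection (e.g., ensuring niceness and the existence of a giant component), rather than in the bridge-counting bound itself.
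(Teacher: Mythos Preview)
Your argument is correct. The paper does not actually prove this fact; it merely asserts it, with the parenthetical remark that it ``uses the fact that the grid graph is a mild expander graph.'' Your proof is a clean, self-contained justification that in fact uses only the $4$-edge-connectivity of the torus (a strictly weaker hypothesis than any quantitative isoperimetry), and it yields the explicit constant $|\cl_{G_n}(S)| \le 2|S|$. The block-forest identity $|B| = k - c$ and the containment $\partial_{G_n}(V(C_i)) \subseteq S \cup B$ are exactly the right structural ingredients, and your handling of the degenerate case $V(C_i) = V(G_n)$ is fine. Your observation that the hypothesis $|S| \le n/4$ is unnecessary for this particular bound is also correct; that size restriction is needed elsewhere in the section (for the existence of a giant component and for Fact~\ref{fact:break-off-small}), not here.
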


\begin{fact} 
\label{fact:break-off-small}
For every $S\sse E(\exG_n)$ of size at most $n/4$, the number of vertices disconnected from the largest component in $\exG_n-\closure_{\exG_{n}}(S)$ is at most $O( |S|)$. 
\end{fact}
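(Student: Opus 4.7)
The plan is to combine Fact~\ref{fact:closure-small-new} with the standard edge-isoperimetric inequality for the two-dimensional torus $G_n$. First, set $R = \closure_{G_n}(S)$, so that $|R| = O(|S|)$ by the preceding fact. Let $C^*$ be the largest component of $G_n - R$ and let $A_1, A_2, \ldots$ denote the remaining (non-giant) components; the goal is to bound $\sum_i |A_i|$. Since each $A_i$ is a maximal connected component of $G_n - R$, every edge in its $G_n$-boundary $\partial_{G_n}(A_i)$ must lie in $R$. Counting each edge of $R$ at most twice yields $\sum_i |\partial_{G_n}(A_i)| \leq 2|R| = O(|S|)$.

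Next, I would invoke the edge-isoperimetric inequality for the 2D torus: for any $A \subseteq V(G_n)$ with $|A| \leq n^2/2$, $|\partial_{G_n}(A)| \geq c \cdot \min(\sqrt{|A|}, n)$ for an absolute constant $c > 0$. The hypothesis $|S| \leq n/4$ ensures $|R| \leq O(n)$, so each non-giant $A_i$ satisfies $|\partial_{G_n}(A_i)| < cn$, and the isoperimetric inequality reduces to $c\sqrt{|A_i|} \leq |\partial_{G_n}(A_i)|$. Summing then gives $\sum_i \sqrt{|A_i|} = O(|S|)$.

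The main obstacle is the final step of passing from $\sum_i \sqrt{|A_i|} = O(|S|)$ to $\sum_i |A_i| = O(|S|)$. A purely arithmetic application only gives $\sum_i |A_i| = O(|S|^2)$---indeed, a $k\times k$ block of the grid cut off by its $4k$ boundary edges shows that the bound on any single component cannot be improved below $\Theta(|\partial_{G_n}(A_i)|^2)$ from the isoperimetric bound alone. To obtain the claimed linear-in-$|S|$ bound I would exploit the structural property that each $A_i$ is in fact a 2-edge-connected component of $G_n - S$ (because the closure removes precisely the bridges of $G_n - S$). Using this 2-edge-connectedness, together with the bounded degree and planarity of $G_n$, I would set up a charging argument that assigns each vertex of $A_i$ to a nearby edge of $\partial_{G_n}(A_i) \cup (S \cap \text{internal edges of }A_i)$, yielding a per-component linear bound $|A_i| = O(|\partial_{G_n}(A_i)| + |S \cap (\text{neighborhood of }A_i)|)$. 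Summing such bounds across the $A_i$ and again invoking Fact~\ref{fact:closure-small-new} to control the closure-neighborhood of $S$ would then deliver $\sum_i |A_i| = O(|S|)$. Making this charging precise in the presence of possibly nested bridge structure in $G_n - S$ is the main technical challenge.
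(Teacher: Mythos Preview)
Your isoperimetric setup is exactly right, and you are also right that it only delivers $\sum_i \sqrt{|A_i|} = O(|S|)$, hence $\sum_i |A_i| = O(|S|^2)$. The difficulty you flag is real, but the fix you propose cannot work. You suggest exploiting that after removing the bridges each $A_i$ is a $2$-edge-connected subgraph of the grid, and then running a charging argument to obtain $|A_i| = O(|\partial_{G_n}(A_i)|)$. But the very example you cite already kills this: the interior of a $k \times k$ block of the grid is itself $2$-edge-connected for $k \ge 2$, has $|\partial_{G_n}(A_i)| = 4k$, contains no edges of $S$ internally, and still has $k^2$ vertices. So $2$-edge-connectedness buys you nothing here, and no charging of vertices to boundary edges (or to nearby edges of $S$) can possibly succeed with bounded load.

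In fact this same example shows that the stated $O(|S|)$ bound is simply false as written: taking $S$ to be the $4k$ edges bounding a $k \times k$ block (with $4k \le n/4$) gives $\closure_{G_n}(S) = S$ (neither side has bridges) and $\Theta(k^2) = \Theta(|S|^2)$ vertices cut off from the giant component. The paper states this as a ``Fact'' without proof; the bound it actually needs (and the one your argument correctly establishes) is $O(|S|^2)$, which is already $o(n^2)$ under the hypothesis $|S| \le n/4$ and is enough to guarantee a giant component for Proposition~\ref{prop:closure-is-unique}. So stop at your second step: conclude $\sum_i |A_i| = O(|S|^2)$ from the torus isoperimetric inequality plus Fact~\ref{fact:closure-small-new}, and note that this is the correct form of the statement.
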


\begin{definition}[Push the contradiction] \label{def:push}
Let $(G,\alpha)$ be nice.
We say that a restriction $\beta \in (\Z_2 \cup \{\ast\})^{E(G)}$ \emph{pushes the contradiction of $\alpha$ into the giant component of $G-\supp(\beta)$} if 
$(G - \supp(\beta),\alpha \uhr \beta)$ is nice.
\end{definition}

If $G$ is connected and $\alpha$ is odd, then \emph{every} restriction $\beta$ to an independent set $I\sse E$ pushes the contradiction of $\alpha$ into the giant component of $G-\supp(\beta)$, and therefore is $\alpha$-consistent.

\begin{fact}\label{fact:push-sub-super} 
Let $(G,\alpha)$ be nice. 
\begin{itemize}
\item If $\beta$ is a restriction that pushes the contradiction of $\alpha$ into the giant component of $G - \supp(\beta)$, then for all sub-restrictions $\beta'$ of $\beta$, we have that $\beta'$ pushes the contradiction of $\alpha$ into the giant component of $G-\supp(\beta')$. 
\item If $\beta$ is a restriction that does not push the contradiction of $\alpha$ into the giant component of $G - \supp(\beta)$, then for all extensions $\beta''$ of $\beta$, we have that $\beta''$ does not push the contradiction of $\alpha$ into the giant component of $G-\supp(\beta'')$. 
\end{itemize} 
\end{fact}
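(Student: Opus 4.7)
The plan is to prove the first bullet and then obtain the second as its contrapositive: if some extension $\beta''$ of $\beta$ pushes the contradiction, then by the first bullet every sub-restriction of $\beta''$ pushes it too, and $\beta$ is one such sub-restriction, so $\beta$ must push the contradiction. Thus ``$\beta$ does not push'' implies ``no extension pushes.''

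For the first bullet, fix a sub-restriction $\beta'$ of $\beta$; I need to check that $(G-\supp(\beta'),\alpha\uhr\beta')$ satisfies both clauses of niceness. Clause (i), existence of a giant component, is immediate: since $\supp(\beta')\subseteq\supp(\beta)$, the edge set of $G-\supp(\beta')$ contains that of $G-\supp(\beta)$. The giant component $H$ of $G-\supp(\beta)$, which exists by hypothesis and has $|V(H)|>|V(G)|/2$, therefore remains connected inside $G-\supp(\beta')$ and is contained in a component of $G-\supp(\beta')$ of size $>|V(G)|/2$.

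The substantive content lies in clause (ii). Fix a component $C'$ of $G-\supp(\beta')$. Removing the extra edges in $\supp(\beta)\setminus\supp(\beta')$ splits $C'$ into sub-components $C_1,\dots,C_k$ inside $G-\supp(\beta)$. The key observation is that any edge $e\in\supp(\beta)\setminus\supp(\beta')$ incident to a vertex of $C'$ in fact has \emph{both} endpoints in $C'$: such an $e$ still belongs to $G-\supp(\beta')$, so it joins vertices in the same component there. Consequently, the vertex-wise identity
\[
(\alpha\uhr\beta)(v)=(\alpha\uhr\beta')(v)-\!\!\sum_{\substack{e\sim v\\ e\in\supp(\beta)\setminus\supp(\beta')}}\!\!\beta(e)\pmod 2,
\]
when summed over $v\in C'$, counts each such edge twice (once per endpoint), so the correction vanishes mod $2$. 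Hence the $(\alpha\uhr\beta')$-charge of $C'$ equals $\sum_{i=1}^k$ of the $(\alpha\uhr\beta)$-charges of the $C_i$, mod $2$.

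To finish, I would invoke niceness of $(G-\supp(\beta),\alpha\uhr\beta)$: exactly one $C_i$ is giant in $G-\supp(\beta)$, and only that one has odd charge. The giant $C_i$ lies inside a unique component of $G-\supp(\beta')$, call it $C^\star$, which therefore has $>|V(G)|/2$ vertices and is the unique giant component of $G-\supp(\beta')$; every other $C'$ contains only non-giant $C_i$'s, all with even charge. So $C^\star$ has odd total charge and every other $C'$ has even total charge, matching clause (ii) exactly. The main obstacle is simply the bookkeeping in the parity calculation; the only conceptual point — and one I want to state cleanly — is that $\supp(\beta)\setminus\supp(\beta')$ consists entirely of edges sitting \emph{inside} the components of $G-\supp(\beta')$, which is what makes the mod-$2$ cancellation work.
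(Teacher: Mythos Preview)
Your argument is correct. The paper states this result as a Fact without proof, so there is no proof to compare against; your write-up supplies exactly the elementary verification the paper omits. The two ingredients you identify---that adding back the edges of $\supp(\beta)\setminus\supp(\beta')$ can only merge components (giving clause~(i) and the containment $V(C')=\bigcup_i V(C_i)$), and that each such edge has both endpoints in a single component of $G-\supp(\beta')$ so it cancels in the mod-$2$ charge sum---are precisely what is needed, and the contrapositive deduction of the second bullet from the first is clean.
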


Let $(G,\alpha)$ be nice. If $e$ is not a bridge in $G$, it is straightforward to see that $(G - \{ e \}, \alpha \uhr e \to b)$ is nice for both values $b\in \Z_2$. If $e$ is a bridge and $G-\{e\}$ has a giant component, then the next fact implies that there is a unique assignment $b\in \Z_2$ such that $(G - \{ e\}, \alpha \uhr e \to b)$ is nice. (If $G - \{e\}$ does not have a giant component then clearly $(G - \{ e\}, \alpha \uhr e \to b)$ is not nice for either $b\in \Z_2$.)

\begin{fact}[Bridges are forced]
\label{fact:bridge-forced} 
Let $G = (V,E)$ be a graph, $\alpha \in \Z_2^V$, and $e \in E$ be a bridge in $G$. Let $C$ be the component in $G$ that contains $e$, and $C_1$ and $C_2$ be the two connected components of $C - \{ e\}$.  Then for every $b' \in \Z_2$ there exist a unique $b \in \Z_2$ such that 
\[ \sum_{v \in V(C_1)} (\alpha \uhr (e \to b)) = b'. \] 
In particular, if the $\alpha$-charge of $C$ is odd then there is a unique assignment $b$ to $e$ so that the $(\alpha \uhr e \to b)$-charge (the ``induced charge'') of $C_1$ is odd (and hence the induced charge of $C_2$ is even). Likewise, if the $\alpha$-charge of $C$ is even then there is a unique assignment $b$ to $e$ so that the induced charges of both $C_1$ and $C_2$ are even. 
\end{fact}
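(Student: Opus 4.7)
The plan is to unpack the definition of the restricted charge vector and observe that restricting a single bridge edge affects the charges of exactly two vertices, one on each side of the bridge, so the mod-$2$ sum over one side changes by a controlled amount as $b$ varies.

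First I would fix notation: since $e$ is a bridge whose removal splits $C$ into $C_1$ and $C_2$, exactly one endpoint of $e$ lies in $V(C_1)$ and the other in $V(C_2)$; call them $u \in V(C_1)$ and $w \in V(C_2)$. By the definition of $\alpha \uhr (e \to b)$, for any vertex $v$ we have
\[ (\alpha \uhr (e \to b))(v) = \alpha(v) - \ind[e \sim v \text{ and } b = 1] \pmod 2, \]
so the only vertex in $V(C_1)$ whose charge is affected is $u$, and its charge is flipped iff $b = 1$. Summing over $V(C_1)$ this gives
\[ \sum_{v \in V(C_1)} (\alpha \uhr (e \to b))(v) \;=\; \Big(\sum_{v \in V(C_1)} \alpha(v)\Big) + b \pmod 2. \]

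Existence and uniqueness of $b$ for each $b' \in \Z_2$ are then immediate, since the right-hand side is an affine bijection of $b$: solving for $b$ gives $b = b' - \sum_{v \in V(C_1)} \alpha(v) \pmod 2$, which is a single element of $\Z_2$.

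For the ``in particular'' part, I would observe that the total induced charge over $V(C) = V(C_1) \sqcup V(C_2)$ is preserved by the restriction, since both endpoints $u$ and $w$ of $e$ have their charges flipped simultaneously when $b = 1$, contributing $2b \equiv 0 \pmod 2$. Thus
\[ \sum_{v \in V(C_1)} (\alpha \uhr (e \to b))(v) + \sum_{v \in V(C_2)} (\alpha \uhr (e \to b))(v) \;=\; \sum_{v \in V(C)} \alpha(v) \pmod 2. \]
Applying the uniqueness statement with $b' = 1$ when the $\alpha$-charge of $C$ is odd yields the unique $b$ making the induced charge of $C_1$ odd, and the identity above then forces the induced charge of $C_2$ to be even; similarly for even $\alpha$-charge of $C$, taking $b' = 0$ yields the unique $b$ for which both induced charges are even. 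There is no real obstacle here beyond keeping the bookkeeping straight between the restricted charges on $C_1$, $C_2$, and $C$; the whole argument is a one-line parity computation once one notices that only the bridge's two endpoints are touched by the restriction.
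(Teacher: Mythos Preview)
Your proof is correct. The paper states this as a Fact without proof, treating it as an elementary parity observation; your argument is exactly the one-line computation the authors presumably had in mind, namely that restricting the single bridge edge toggles exactly one vertex charge on each side, making the $C_1$-sum an affine bijection in $b$ and leaving the total $C$-charge invariant.
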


\begin{definition}[Closure of restriction]
\label{def:closure-restriction} Let $G = (V,E)$ and $\alpha \in \Z_2^{V}$ be an odd charge. Let  $\beta \in (\Z_2 \cup \{\ast\})^{E}$ be a $G$-independent restriction such that $G - \closure_G(\supp(\beta))$ has a giant component.  
The \emph{$(G,\alpha)$-closure of $\beta$}, denoted $\closure_{G,\alpha}(\beta) \in \Z_2^{\closure_G(\supp(\beta))}$, is the unique extension of $\beta$ with the following properties: 
Fix any $e \in \closure_G(\supp(\beta)) \setminus \supp(\beta)$, and recall that $e$ is a bridge in $G - \supp(\beta)$. 
Let $C$ be the component of $G- \supp(\beta)$ that contains $e$, and let $C_1$ and $C_2$ be the two disjoint components of $C - \{ e\}$ where $|V(C_1)| \ge |V(C_2)|$. (In the following, recall from Fact~\ref{fact:bridge-forced} that there is indeed a unique assignment as claimed below.)
\begin{itemize} 
\item If the $(\alpha\uhr \beta)$-charge of $C$ is even, then $(\closure_{G,\alpha}(\beta))(e) = b$ where $b \in \Z_2$ is the unique assignment such that the $((\alpha \uhr \beta) \uhr ( e \to b))$-charges of both $C_1$ and $C_2$ are even.
\item If the $(\alpha\uhr \beta)$-charges of $C$ is odd, then
$(\closure_{G,\alpha}(\beta))(e) = b$ where $b \in \Z_2$ is the unique
assignment such that the $((\alpha \uhr \beta) \uhr (e \to b))$-charge in $C_1$
is odd and $C_2$ is even.
\end{itemize} 
\end{definition} 

It is easy to see from the above definition that for any $G = (V,E)$, any charge $\alpha \in \Z_2^{V}$, and any $G$-independent restriction $\beta \in (\Z_2 \cup \{\ast\})^{E}$, there exists a unique $(G,\alpha)$-closure of $\beta$.  We have the following useful property of closure: 

\begin{proposition}[Key property of closure] 
\label{prop:closure-is-unique} 
Let $(G,\alpha)$ be nice, where $G$ is a subgraph of $G_n$.
Let $\beta \in (\Z_2 \cup \{\ast\})^{E(G)}$ be a $G$-independent restriction 
of support size $o(n)$. Then $\closure_{G,\alpha}(\beta)$ pushes the
contradiction of $\alpha$ into the giant component (of $G - \closure_G(\supp(\beta))$).
\end{proposition}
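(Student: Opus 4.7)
The plan is to unpack Definition~\ref{def:push} and verify that $(G', \alpha') := (G - \closure_G(\supp(\beta)),\, \alpha \uhr \closure_{G,\alpha}(\beta))$ is nice, which reduces to showing (i) $G'$ has a giant component, and (ii) a connected component of $G'$ has odd induced charge iff it is giant.

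For (i), I plan to invoke Fact~\ref{fact:break-off-small}: since $|\supp(\beta)| = o(n)$, only $O(|\supp(\beta)|) = o(n)$ vertices are separated from the largest component of $G'$. Combined with niceness of $(G, \alpha)$ (which in particular gives $|V(G)| > |V(G_n)|/2$), this shows the largest component of $G'$ still contains more than $|V(G_n)|/2$ vertices.

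For (ii), I will use a block-tree argument. Since $\beta$ is $G$-independent, $G - \supp(\beta)$ is connected; its \emph{block tree} $T$ has the $2$-edge-connected components as nodes and the bridges as edges, and by definition of $\closure_G$ the connected components of $G'$ are exactly these $2$-edge-connected blocks. Let $C^*$ be the block containing the giant component identified in (i), and root $T$ at $C^*$. For any non-root block $v$ with parent-bridge $e_v$, the subtree vertex set $V(T_v)$ has at most $|V(G)| - |V(C^*)| < |V(G_n)|/2 < |V(C^*)|$ vertices, so $V(T_v)$ is always the \emph{smaller side} $C_2$ in Definition~\ref{def:closure-restriction}; the definition then forces $\closure_{G,\alpha}(\beta)(e_v)$ so that $V(T_v)$ has even induced charge. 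A simple telescoping calculation then shows every non-root block has even charge on its own: for each non-root $v$ with children $w_1, \ldots, w_k$, the induced charge of $v$ equals the induced charge of $V(T_v)$ minus $\sum_i \mathrm{charge}(V(T_{w_i}))$, all of which are even. Finally, since any edge assignment toggles two vertex charges and hence preserves the total $\alpha$-charge mod $2$, the sum of induced charges across all blocks equals the original total $\alpha$-charge of $G$, which is odd by niceness. All non-giant blocks contribute even charges, so $C^*$ must carry the odd parity, as needed.

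The main obstacle I foresee is making (i) fully rigorous: Fact~\ref{fact:break-off-small} is stated for $G_n$, so to apply it to a nice subgraph $G$ I will argue that $|V(G_n)\setminus V(G)|$ is small (otherwise $G$ could not be nice) and that accounting for the edges missing from $G$ changes the closure size by at most $o(n)$ extra broken-off vertices. The tree calculation in (ii) is otherwise modular --- it uses only how Definition~\ref{def:closure-restriction} forces the smaller side of each bridge to be even, together with the fact that edge toggles preserve total mod-$2$ charge.
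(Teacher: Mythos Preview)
The paper states this proposition without proof, so there is no argument to compare against directly; your approach is the natural one and your block-tree argument for (ii) is correct. The key points---that $G - \supp(\beta)$ is connected (since $\beta$ is $G$-independent), that its $2$-edge-connected components are exactly the connected components of $G'$, that rooting the bridge tree at the giant block $C^*$ forces every subtree side $V(T_v)$ to be the smaller side $C_2$ in Definition~\ref{def:closure-restriction}, and that the total $(\alpha\uhr\beta)$-charge of $G-\supp(\beta)$ is odd so the definition forces each $V(T_v)$ to have even induced charge---all go through cleanly. One remark: the assignment in Definition~\ref{def:closure-restriction} to a bridge $e_v$ is made relative to $\alpha\uhr\beta$ alone, but since every other bridge has both endpoints on the same side of $e_v$, its assignment does not affect the parity of $V(T_v)$; so the even-parity conclusion survives applying the full closure, as your telescoping needs.

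The obstacle you flag in (i) is genuine and reflects an imprecision in the proposition as stated rather than a flaw in your plan. Nothing in the hypotheses bounds $|E(G_n)\setminus E(G)|$, and for an arbitrary connected nice subgraph (say a Hamiltonian path in $G_n$, with $\beta$ empty) the closure removes every edge and the conclusion fails. Facts~\ref{fact:closure-small-new} and~\ref{fact:break-off-small} are stated only for $G_n$, and the paper's remark that they ``will be used to satisfy the conditions'' of this proposition strongly suggests the intended setting is $G$ a grid graph $G_{n'}$ (as in all downstream uses in Section~\ref{sec:prune}), where those facts apply verbatim. For your write-up, either specialize to $G = G_{n'}$ or add the explicit hypothesis that $G$ satisfies analogues of these two facts; do not try to derive a bound on the missing edges from niceness alone, since no such bound exists.
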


Throughout the paper whenever we refer to $\closure_{G,\alpha}(\beta)$ for a restriction $\beta$ of the edge set $E$ of a graph $G=(V,E)$, it will be in a setting in which it is indeed the case that $\alpha \in
\Z_2^V$ is an odd charge, $S = \supp(\beta) \subseteq E$ is such that $G - \closure_G(S)$ has a giant component, and $\beta$ pushes the contradiction of $\alpha$ into the giant component of $G-S$, and hence $\closure_{G,\alpha}(\beta)$ is unique.

\subsection{Restricting Decision Trees} \label{sec:prune}

There will be two different contexts where we will be applying restrictions to 
decision trees. The first way is when we start with a good decision tree $T$ over
the variables of $\Tseitin(G_{n})$, and apply a full (H{\aa}stad) restriction $\rho$.
The second way is when we start with a good decision tree $T$ over
the variables of $\Tseitin(G_n)$ and apply a small partial restriction $\beta$.

\medskip

\noindent {\bf Restricting $T$ by a Full Restriction}

In this case we start with a good decision tree $T$ over $G_n$, the grid graph of dimension $n$
and apply a full restriction $\rho$.
Since $\rho$ is a full (H{\aa}stad) restriction, after applying $\rho$ the unset variables will correspond
to a smaller instance of the grid graph, $G_{n'}$, where $n' \ll n$.
When we apply this type of restriction, $T$ will always be $(k,G_{n})$-good where 
$k = o(n')$, so after restricting by $\rho$, $T$ 
will still have low depth with respect to $n'$. However,
even though all paths in $T$ were originally $G_{n}$-independent, they may fail to
be $G_{n'}$-independent. For example, suppose that some vertex $v$ is a center with respect to $\rho$,
and some branch in $T$ queries one edge variable along each of the four paths connecting $v$ to its
adjacent centers, where each of the edge variables queried is in the middle of the path.
This branch was independent in $G_{n}$ (since these four edges do not disconnect $G_{n}$) but
now they are dependent in $G_{n'}$ (since these edges project to the 4 edges adjacent to $v$ in $G_{n'}$).
Therefore, when we restrict $T$ by $\rho$ we will additionally prune bad paths, so that all
surviving branches in $T \uhr \rho$ will be independent (with respect to $G_{n'}$) and 
will push the contradiction into the giant component.
This process is described by the procedure below, which takes as
input a $(k,G)$-good tree where $G$ is a grid graph of dimension $n$
and a restriction $\rho$ which reduces $G$ to the smaller grid graph $G'$ (of dimension $n'$).

\begin{framed}
\noindent
$T \uhr \rho:$

\begin{enumerate}

\item Preprocess the tree so that the variables correspond to edges in $G'$:

\begin{itemize}
\itemsep 3pt
	\item For each subtree $T' = (e; T'_0,T'_1)$ of $T$, if $\rho(e)=b$ for some $b \in \{0,1\}$, replace $T'$ with $T'_b$.

	\item Relabel each variable $e$ in with its corresponding variable in $G'$ (according to which path $e$ sits on).

	\item For a path $\pi$ in $T$ leading to a subtree $T' = (e; T'_0,T'_1)$, if $e$ has already been queried in $\pi$, and $\pi(e) = b$, replace $T'$ with $T'_b$.

\end{itemize}

\item If $T=b$ for some $b \in \{0,1\}$ then output $b$.
\item Prune non-independent paths:
\begin{itemize}
\itemsep 3pt
\item If $e$ is not a bridge in $G'$ (where $G' = G \uhr \rho$)  output the tree
$(e; ~ T_0 \uhr \rho \circ e \rightarrow 0, ~ T_1 \uhr \rho \circ e \rightarrow 1).$

\item If $e$ is a bridge in $G'$, output the tree
$(T_b \uhr \rho \circ e \rightarrow b)$, where $b$ is the unique value for $e$ such that $\rho \circ e \rightarrow b$ pushes the contradiction into the giant component.

\end{itemize}
\end{enumerate}
\end{framed}

The following lemmas (which follow by examination of the above procedure and were
proven implicitly in earlier works \cite{PRST16,Has21}) allow us to relate paths in
$T$ to corresponding paths in $T \uhr \rho$.
The first lemma describes the paths in $T$ that survive the restriction, while
the second lemma relates paths in $T \uhr \rho$ to the path that they came from in $T$.

\begin{lemma}
\label{good-rho-paths-survive}
Let $T$ be a $(k,G_n)$-good tree over $G_n$ and let $\rho$ be
a full restriction which reduces $G_n$ to the smaller grid graph $G_{n'}$, where $k = o(n')$.
We say that a branch $\beta$ of $T$ survives the restriction process $T \uhr \rho$ if
there is a branch in $T'$ that is a subrestriction of $\beta$.
A branch $\beta$ of $T$ survives the restriction $T \uhr \rho$ if and only if
$\beta$ can be written as the composition of a pair $(\rho', \beta')$ such that:
\begin{enumerate}
\item $\rho'$ is a subrestriction of $\rho$;
\item $\beta'$ is an assignment to a set of edges in $G_{n'}$ such that
$\beta'$ pushes the contradiction to the giant component of $G_{n'}-\supp(\beta')$.
\end{enumerate}
\end{lemma}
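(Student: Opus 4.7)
The plan is to trace the branch $\beta$ through the three steps defining $T \uhr \rho$ and to show that $\beta$ emerges along a root-to-leaf path of the resulting tree exactly when conditions (1) and (2) hold. I would begin by decomposing $\beta$: the pairs $(e, b)$ with $e$ fixed by $\rho$ collect into $\rho'$, and each remaining pair is translated, via the relabeling in step 1, to an assignment on the $G_{n'}$-edge $x_{P_e}$ corresponding to the path $P_e$ that $e$ lies on. Taken together these translations form $\beta'$, provided they are mutually consistent. Condition (1) is then exactly the statement $\rho' \sse \rho$, and condition (2) packages both the well-definedness of $\beta'$ and the pushing property on $G_{n'}$.

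For the forward direction I will assume $\beta$ survives and derive (1) and (2). If some $(e,b) \in \beta$ has $\rho(e) = 1 - b$, step 1(i) collapses the subtree at $e$ to the side labeled by $\rho(e)$, killing $\beta$; so $\rho' \sse \rho$. If two translations disagree along $\beta$, then at the second of the conflicting nodes step 1(iii), invoked after the relabeling in step 1(ii), collapses the subtree to the side already chosen, killing $\beta$; so $\beta'$ is well-defined. Finally, if $\beta'$ fails to push the contradiction into the giant component of $G_{n'} - \supp(\beta')$, then by the second bullet of Fact~\ref{fact:push-sub-super} some initial segment of $\beta'$ in the order of queries along the branch is the first to fail, and at that node step 3(ii) keeps only the opposite value, killing $\beta$.

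For the reverse direction I will assume (1) and (2) and descend $T$ along $\beta$, verifying that each pruning operation retains it. At a $\rho$-fixed query, (1) forces $\beta$ to agree with $\rho$, so step 1(i) keeps $\beta$ alive. At a relabeling collision, the well-definedness of $\beta'$ gives consistency, so step 1(iii) keeps $\beta$ alive. At a bridge query in the appropriately reduced $G_{n'}$, the first bullet of Fact~\ref{fact:push-sub-super}, applied to $\beta'$ and the prefix seen so far, guarantees that the prefix also pushes and that the unique pushing value at the bridge is precisely the one $\beta$ assigns; step 3(ii) therefore keeps $\beta$ alive. The non-bridge case of step 3(i) retains both sides trivially, and descending to a leaf shows that $\beta$ survives.

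The main subtlety I anticipate is that two distinct $G_n$-edges queried on a single branch may map to the same $G_{n'}$-edge, so both the well-definedness of $\beta'$ and the meaning of ``pushes the contradiction'' must be interpreted incrementally against the closure of the already-queried edges along the branch. The consistency half is exactly what step 1(iii) enforces, while the incremental view of pushing is exactly what Fact~\ref{fact:push-sub-super} supplies (sub-restrictions of pushing restrictions push, and extensions of non-pushing ones do not). Once those two points are set up cleanly, the proof reduces to the case analysis through the three pruning operations sketched above.
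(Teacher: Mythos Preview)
Your proposal is correct and is precisely the kind of argument the paper has in mind. Note, however, that the paper does not actually spell out a proof of this lemma: it simply states that the result ``follows by examination of the above procedure and was proven implicitly in earlier works \cite{PRST16,Has21}.'' Your plan of tracing the branch through the preprocessing, relabeling, and bridge-pruning steps, and invoking Fact~\ref{fact:push-sub-super} to pass between the global pushing property of $\beta'$ and the incremental pushing property needed at each bridge node, is exactly that examination made explicit.
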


\begin{lemma}
\label{mapping-between-paths}
Let $T$ be a $(k,G_n)$-good decision tree, and let $\rho$ be a full restriction which
reduces $G_n$ to the smaller grid graph $G_{n'}$, where $k = o(n')$.
Let $T' = T \uhr \rho$.
For any branch $\pi'$ in $T'$ there exists a unique branch $\pi$ in $T$ such that:
\begin{enumerate}
\item  $\pi$ is a subrestriction of $\rho \circ cl_{G_{n'}}(\pi')$;
\item The leaf values are the same: $\pi'$ is a 1-branch of $T'$ if and only if $\pi$ is a 1-branch of $T$.
\end{enumerate}
\end{lemma}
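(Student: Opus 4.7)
The plan is to prove the lemma by induction on the structure (or depth) of $T$, exactly mirroring the three cases of the $T \uhr \rho$ procedure. In the base case $T = b$ is a leaf, $T' = b$, and the unique branch $\pi' = \emptyset$ corresponds to $\pi = \emptyset$; the empty restriction is a subrestriction of anything and leaf values match trivially.

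For the inductive step, write $T = (e; T_0, T_1)$ and examine which branch of the procedure is taken. (i) If $\rho(e) = b \in \{0,1\}$, preprocessing collapses the root so $T' = T_b \uhr \rho$; by the inductive hypothesis, any branch $\pi'$ of $T'$ corresponds to a unique $\pi_b$ in $T_b$ with $\pi_b \sse \rho \circ \cl_{G_{n'}}(\pi')$, and we simply set $\pi = (e\to b) \circ \pi_b$, which is a subrestriction of $\rho \circ \cl_{G_{n'}}(\pi')$ because $(e\to b) \in \rho$. Leaf values match by IH. (ii) If $\rho$ leaves $e$ live and relabels it to some $e' \in E(G_{n'})$ that is not a bridge in the effective current graph, then $T' = (e'; T_0 \uhr (\rho \circ e \to 0), T_1 \uhr (\rho \circ e \to 1))$; a branch $\pi' = (e' \to b) \circ \pi'_b$ yields by IH a unique $\pi_b$ in $T_b$, and setting $\pi = (e\to b) \circ \pi_b$ works, because the assignment $e' \to b$ in $\pi'$, combined with $\rho$'s identification of $e$ with $e'$ (up to a possible negation encoded in $\rho$), forces $e \to b$ in the composition $\rho \circ \pi'$ and hence in $\rho \circ \cl_{G_{n'}}(\pi')$.

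(iii) The delicate case is when $e'$ is a bridge in the effective current graph of $G_{n'}$: the procedure prunes to $T' = T_b \uhr (\rho \circ e \to b)$ where $b$ is the unique value such that $\rho \circ (e \to b)$ pushes the contradiction into the giant component. A branch $\pi'$ in $T'$ does not query $e'$, but since $e'$ is a bridge in $G_{n'}$ after the previous queries along $\pi'$, we have $e' \in \cl_{G_{n'}}(\supp(\pi'))$. By Fact~\ref{fact:bridge-forced}, the closure $\cl_{G_{n'},\alpha}(\pi')$ assigns $e'$ the unique value that keeps the resulting graph-and-charge pair nice — that is, pushes the contradiction to the giant component — and this is precisely the value $b$ chosen by the procedure. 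Thus $(e'\to b) \in \cl_{G_{n'},\alpha}(\pi')$, which lifts under $\rho$ to $(e\to b) \in \rho \circ \cl_{G_{n'}}(\pi')$. The inductive hypothesis then yields $\pi_b$ for $\pi'$ (as a branch in $T_b$), and $\pi = (e\to b) \circ \pi_b$ is the desired branch in $T$.

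Uniqueness is immediate in each case, since $\pi$ is constructed deterministically from $\pi'$, $\rho$, and the structure of $T$; and leaf-value preservation propagates directly through the induction. The main technical obstacle is case (iii): matching the $b$ produced by the pruning procedure to the $b$ produced by $\cl_{G_{n'},\alpha}$. Both are characterized by the same uniqueness property from Fact~\ref{fact:bridge-forced} (the sole value preserving niceness / pushing the contradiction to the giant component), so they coincide, but this requires carefully tracking that at the recursive call the induced charge vector $\alpha \uhr (\rho \circ \pi_{\text{prefix}})$ is the one being used in both the procedure and the closure. Once this alignment is verified, the induction closes cleanly.
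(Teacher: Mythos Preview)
The paper does not actually give a proof of this lemma; it is stated and then accompanied only by the remark that it ``follows by examination of the above procedure and [was] proven implicitly in earlier works \cite{PRST16,Has21}.'' So there is nothing in the paper to compare your argument against directly. Your approach --- structural induction on $T$ following the cases of the $T\uhr\rho$ procedure --- is exactly the natural one, and is what the paper's remark is gesturing at.

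Your sketch is correct in outline and you correctly isolate case (iii) as the nontrivial point. One comment on that case: as written, your induction hypothesis is being applied to $T_b \uhr (\rho \circ e\to b)$, but $\rho \circ e\to b$ is no longer a full restriction in the sense of the lemma statement, so the IH does not literally apply. What you really want is a slightly strengthened inductive statement that carries along the prefix of already-assigned edges (both queried and pruned) and asserts the conclusion relative to the closure in $G_{n'}$ of the full branch $\pi'$. Concretely, you need that (a) every edge pruned along the way remains a bridge in $G_{n'}-\supp(\pi')$ (this holds because bridges persist under further edge deletion, and earlier-pruned bridges cannot lie on a cycle so do not interfere), and (b) the procedure's assignment and the closure's assignment to each such bridge agree. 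For (b), the cleanest argument is not bridge-by-bridge but global: the procedure maintains niceness at every step, so the final composite restriction $\pi'\circ\beta_B$ (queries plus all pruned assignments) pushes the contradiction; since each $e'\in B$ is a bridge in $G_{n'}-\supp(\pi')$ and only one value for a bridge can be compatible with pushing (Fact~\ref{fact:bridge-forced}, using that $e'$ is the sole $C_1$--$C_2$ crossing edge so the relevant parity is unaffected by the other assignments), $\beta_B$ must agree with $\cl_{G_{n'},\alpha'}(\pi')$ on $B$. Once you phrase the induction this way, your argument closes without further difficulty.
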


\medskip

\noindent {\bf Restricting a Tree by a Partial Restriction}

For the second type of restriction we will again start
with a good decision tree over $G_n$, but now the restriction $\beta$ will be a partial restriction 
to a small set of independent edges of $G_n$.
In this case after applying $\beta$, the relevant graph is now
$G' := G - \supp(\beta)$, which will be connected since $\supp(\beta)$ is an
independent set of edges.
Initially $T$ will be $(k,G_n)$ good, which by definition means the height is at most $k$
and all branches of $T$ query independent sets of edges. Therefore as we argued in the
previous section, for any branch $\pi$ in $T$, the closure of $\pi$ pushes the contradiction
into the giant component (and thus even after applying $\pi$ and its closure, the reduced
Tseitin instance is still hard to refute).
However, once again after applying $\beta$ to $T$, some branches that were originally independent
(and therefore push the contradiction to the giant component), may no longer be independent
with respect to $G'$. 
For example, some branch of $T$ may query 3 of the 4 edges incident with some vertex $v \in V(G_n)$, so if $\beta$ sets
the remaining edge incident with $v$ then this branch will not be $G'$-independent.
Additionally, there is the possibility of the bad event that
some branches $\pi$ in $T$ do not push the contradiction of $\alpha \uhr \beta$ into the giant component.
However this will only happen when the depth of $T$ is large.
Luckily since our trees will {\it always} have depth $o(n)$ (where $n$ is
the current dimension of the underlying grid graph),
this bad event will never happen.

The input to this restriction procedure, described below, is a good decision tree $T$ over the edge set $E(G_n)$ of the grid graph $G_n$ of dimension $n$,
together with a partial restriction $\beta$ that pushes the contradiction to the giant component of $G_n$.

\begin{framed}
\noindent
$T \uhr \beta:$

Let $\beta' = \cl_{G_n,\alpha}(\beta)$.
\begin{enumerate}

\item If $T=b$ for some $b \in \{0,1\}$ then output $b$.

\item If $T = (e; T_0,T_1)$ and $\beta'(e) = b \in \{0,1\}$, then output $(T_b \uhr \beta')$.

\item If $T=(e;T_0,T_1)$ and $\beta'(e)=*$ then output 
$(e; ~ T_0 \uhr \beta' \circ e \rightarrow 0, ~ T_1 \uhr \beta' \circ e \rightarrow 1)$.
\end{enumerate}

\end{framed}

The following fact is self-evident:

\begin{fact}
\label{fact:Z2-branches-are-good}
Let $G' = G_n - \supp(\beta)$, and let $(G',\alpha)$ be nice where $\alpha$ is the induced charge of $G'$.
For any decision tree $T$ of depth $o(n)$,
for all $\pi \in \Branches(T \uhr \beta)$, the graph $G' - \supp(\pi)$ has a giant component, and moreover, 
$\pi$ pushes the contradiction of $\alpha$ into the giant component of $G'-\supp(\pi)$.
\end{fact}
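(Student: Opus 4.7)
The plan is to proceed by induction on the depth of $T$, tracking the state of the pruning procedure. Specifically, we maintain the invariant: at each recursive call, if $\gamma$ denotes the accumulated restriction---that is, $\beta$ composed with the queries along the partial branch traversed so far---then $(G_n - \supp(\gamma), \alpha \uhr \gamma)$ is nice, where we slightly abuse notation and view $\alpha$ as a charge on $G_n$ whose restriction to $G'$ is the induced charge from the fact statement. The base case of a leaf $T^\ast = c \in \{0,1\}$ is immediate: the partial branch is some $\pi \in \Branches(T \uhr \beta)$, and the invariant for $\gamma = \beta \circ \pi$ unwinds to $(G' - \supp(\pi), \alpha \uhr \pi)$ being nice, which is exactly the stated conclusion.

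For the inductive step, consider processing a subtree $(e; T_0, T_1)$ with current restriction $\gamma$ and closure $\gamma' = \cl_{G_n, \alpha}(\gamma)$. In step~(2) of the procedure, when $\gamma'(e) = b \in \Z_2$, either $e \in \supp(\gamma)$ (nothing to do) or $e$ is a bridge in $G_n - \supp(\gamma)$; in the latter case Fact~\ref{fact:bridge-forced} guarantees that the closure chose exactly the value $b$ that preserves niceness after removing $e$. In step~(3), when $\gamma'(e) = \ast$, the definition of closure gives that $e$ is not a bridge in $G_n - \supp(\gamma)$, so its two endpoints remain in the same connected component after $e$ is removed. Toggling their charges (when the assigned bit is $1$) therefore changes no component's charge parity, so $(G_n - \supp(\gamma \circ (e \to b)), \alpha \uhr (\gamma \circ (e \to b)))$ is nice for \emph{both} values $b \in \Z_2$, and the invariant passes into each recursive call.

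The main subtlety to verify is that the closure $\cl_{G_n, \alpha}(\gamma)$ remains well-defined throughout the recursion, which requires Proposition~\ref{prop:closure-is-unique} to apply at every level. Since the depth of $T$ is $o(n)$, the accumulated branch prefix contributes at most $o(n)$ to $|\supp(\gamma)|$, and Fact~\ref{fact:closure-small-new} bounds each closure's extra edges by $O(|\supp(\gamma)|)$, so the cumulative support remains $o(n)$ throughout. Fact~\ref{fact:break-off-small} then guarantees the survival of the giant component, and Proposition~\ref{prop:closure-is-unique} supplies the unique closure. This $o(n)$ depth bound on $T$ is precisely what rules out the ``bad event'' alluded to in the discussion preceding the fact.
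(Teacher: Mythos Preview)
Your argument is correct and makes explicit what the paper leaves as ``self-evident'' (no proof is given there).

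One small point worth tightening: the restriction actually passed to each recursive call is not your $\gamma = \beta \circ \pi$ but an extension $\delta$ of $\gamma$ that already folds in the closures taken at earlier levels, so the procedure is checking $\cl_{G_n,\alpha}(\delta)(e) = \ast$ rather than $\cl_{G_n,\alpha}(\gamma)(e) = \ast$. Your step-(3) conclusion still goes through, because $\supp(\gamma) \subseteq \supp(\delta)$ and an edge that is not a bridge in the subgraph $G_n - \supp(\delta)$ lies on a cycle there, hence also in the supergraph $G_n - \supp(\gamma)$; so $e$ is not a bridge in $G_n - \supp(\gamma)$ either, and niceness passes to $\gamma \circ (e \to b)$ as you claim. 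Alternatively, you could track the invariant for the actual passed restriction $\delta$ throughout and, at the leaf, invoke Fact~\ref{fact:push-sub-super} to transfer niceness from $\delta$ back to its sub-restriction $\gamma = \beta \circ \pi$. Either route closes the gap; also note that in step~(2) the branch $\pi$ does not grow, so your invariant is trivially maintained there and the bridge discussion is not needed.
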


The next lemma (which follows easily from the definitions and is proven in \cite{PRST16}) 
is analogous to Lemma \ref{good-rho-paths-survive}  above, and describes the branches in $T$ that survive
the restriction $T \uhr \beta$.

\begin{lemma} 
\label{lemma:good-beta-paths-survive} 
Let $T$ be a $(k,G_n)$-good decision tree, where $k = o(n)$ and let $\beta$ be 
an restriction to an independent set of edges.
Then the branches in $T$ that survive the restriction $T \uhr \beta$ 
are exactly the branches $\pi$ in $T$ that push the contradiction to the giant component
of $G_n - \supp(\pi)$.
\end{lemma}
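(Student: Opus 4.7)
The plan is to directly unwind the pruning procedure for $T\uhr\beta$ and show that the condition ``survives pruning'' matches, line by line, the condition ``pushes the contradiction into the giant component.'' Recall that $T\uhr\beta$ begins by replacing $\beta$ with its $(G_n,\alpha)$-closure $\beta' = \cl_{G_n,\alpha}(\beta)$ and then performs a standard restriction of $T$ by $\beta'$: at every internal node queried on variable $e$, if $\beta'(e)\in\{0,1\}$ the opposite child is discarded, and if $\beta'(e)=*$ the node is retained. Consequently, a branch $\pi$ of $T$ appears as a branch of $T\uhr\beta$ if and only if $\pi$ and $\beta'$ are mutually compatible, i.e.\ $\pi(e)=\beta'(e)$ for every $e\in\supp(\pi)\cap\supp(\beta')$.

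First I would prove the forward direction: if $\pi$ survives, then $\pi$ pushes the contradiction into the giant component of $G_n-\supp(\pi)$ (more precisely, when composed with $\beta$, into the giant component of $G_n-\supp(\beta\circ\pi)$). Since $\pi$ is compatible with $\beta'$, the composition $\beta'\circ\pi$ is a well-defined restriction. By Proposition~\ref{prop:closure-is-unique} applied to $\beta$ (whose support is an independent set of size $o(n)$), $\beta'$ pushes the contradiction of $\alpha$ into the giant component of $G_n-\closure_{G_n}(\supp(\beta))$; in particular $(G_n-\supp(\beta'),\alpha\uhr\beta')$ is nice. Appending $\pi$ removes only the $o(n)$ edges in $\supp(\pi)\setminus\supp(\beta')$, all of which are $G_n$-independent, so by Fact~\ref{fact:break-off-small} the giant component survives, and applying Fact~\ref{fact:push-sub-super} (to extensions of $\beta'$) together with $G_n$-independence of $\supp(\pi)$ shows that $\beta'\circ\pi$—and hence $\pi$—pushes the contradiction into the giant component.

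For the reverse direction I would argue the contrapositive: if $\pi$ does not survive, then $\pi(e)\ne\beta'(e)$ for some $e\in\supp(\pi)\cap\supp(\beta')$. I split on whether $e\in\supp(\beta)$ or $e\in\supp(\beta')\setminus\supp(\beta)$. In the first case, the composed restriction flips an edge already fixed by $\beta$, so $\alpha\uhr(\beta\circ\pi)$ does not even equal $\alpha\uhr\pi$ restricted to the unfixed edges in the intended way, and the induced charge/nicety condition fails. In the second case, $e$ is a bridge in $G_n-\supp(\beta)$ forced by Fact~\ref{fact:bridge-forced}; choosing the other value for $e$ yields induced-charge parities on the two sides of $e$ that are incompatible with ``odd charge on the giant component and even elsewhere,'' so $(G_n-\supp(\beta\circ\pi),\alpha\uhr(\beta\circ\pi))$ is not nice and $\pi$ fails to push the contradiction.

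The main obstacle I anticipate is the second case of the reverse direction—carefully verifying that violating the forced assignment of a bridge $e$ in the closure really does destroy niceness (as opposed to merely shuffling charges within still-connected components). This should follow from the unique-assignment characterization in Fact~\ref{fact:bridge-forced} together with the fact that $G_n-\closure_{G_n}(\supp(\beta))$ has a giant component (Fact~\ref{fact:break-off-small}), so the two sides of $e$ have fundamentally different roles; but spelling this out requires care with the parity bookkeeping. The remainder is essentially a direct reading of the pruning procedure against the definitions in Section~\ref{sec:independent}.
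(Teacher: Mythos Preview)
Your reading of the pruning procedure is off in a way that breaks both directions. You treat $T\uhr\beta$ as a one-shot restriction by $\beta'=\cl_{G_n,\alpha}(\beta)$, but look at the $*$ case: the recursive call is $T_b\uhr(\beta'\circ e\to b)$, and the first line of the procedure then re-takes the closure of this extended restriction. So each time a new edge is queried along the branch it is added to the running restriction and fresh bridges may be forced. A branch $\pi$ survives iff it agrees with \emph{every} intermediate closure encountered on the way down, not merely with the initial $\beta'$.

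This gap is not cosmetic. In the forward direction your implication ``compatible with $\beta'$ $\Rightarrow$ pushes'' is false: take $\pi$ that agrees with $\beta'$ everywhere on $\supp(\beta')$, whose first queried edge $e_1$ is a non-bridge (so $\beta'(e_1)=*$), but such that removing $e_1$ turns a later edge $e_2\in\supp(\pi)$ into a bridge of $G_n-\supp(\beta')-\{e_1\}$, with $\pi(e_2)$ the non-pushing value from Fact~\ref{fact:bridge-forced}. Such a $\pi$ is compatible with $\beta'$, yet it neither survives nor pushes. Relatedly, your appeal to Fact~\ref{fact:push-sub-super} is in the wrong direction: that fact propagates ``pushes'' to sub-restrictions, not to extensions, so it cannot carry $\beta'$ pushing to $\beta'\circ\pi$ pushing. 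In the reverse direction your case split ($e\in\supp(\beta)$ versus $e\in\supp(\beta')\setminus\supp(\beta)$) omits precisely the bridges that only appear in later closures, which is where the counterexample above lives.

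The repair is to run the argument inductively along $\pi$: show that the prefix restriction $\beta\circ\pi|_{\le i}$ pushes the contradiction iff the procedure has not pruned $\pi$ by depth $i$. One step of this uses that setting a non-bridge in either direction preserves niceness, while a bridge has exactly one pushing value (Definition~\ref{def:closure-restriction} and Fact~\ref{fact:bridge-forced}); the monotonicity you need is the second bullet of Fact~\ref{fact:push-sub-super} (once a prefix fails to push, no extension can recover). With that induction in place the equivalence drops out directly.
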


\section{$k$-evaluations}
\label{sec:keval}

In the previous section, we defined good decision trees,
which are tailored to the Tseitin formulas. The variables
that are queried and set along any path are required to be independent,
but we view the associated restriction as not just the assignment
to these variables, but the unique assignment for the closure.
Moreover, we defined a pruning procedure where we truncate any path that could quickly lead to
a contradiction. Specifically, if a path (and its associated restriction)
leaves us with a graph containing a small component of odd charge,
then this path will be truncated since the Tseitin contradiction
under this partial assignment has become ``too easy."

In this section we  define what it means for a good
decision tree to represent a formula. We stress that a decision
tree representing a formula is in no way truth functionally equivalent to the formula.
In fact, the original Tseitin formula (which is unsatisfiable) will be represented by a
1-tree --- a shallow tree where all leaves are labelled by 1 --- and indeed this is essential
to the proof complexity argument.

The sense in which a decision tree represents a formula is purely local:
if a verifier checks the soundness of any given step in the proof (using the locally
consistent decision trees in place of the active subformulas for that inference step),
no inconsistency will be detected. This means that if we follow
a branch $\pi$ down the tree ${\cal T}(A)$ (this is the good tree
that will be associated with the formula $A$)
and it leads to a leaf labeled 1, then any branch in the tree ${\cal T}(\neg A)$ that is consistent with $\pi$
will lead to a leaf labeled  0. Similarly, if we follow a branch down the
tree ${\cal T}(A \lor B)$ for a formula $A \lor B$ and the branch reaches a 1-leaf, then
there is either a consistent branch in ${\cal T}(A)$ leading to a 1-leaf or
there is a consistent branch in ${\cal T}(B)$ leading to a 1-leaf.

\begin{definition}(Isomorphic formulas)
\label{def:iso}
Let $A$ and $A'$ be two formulas.
We say that $A$ and $A'$ are isomorphic if
they are equivalent up to a reordering of the OR gates.
If $A$ and $A'$ are isomorphic, then
we say that $A$ and $A'$ are two occurrences of the same formula.
\end{definition}

\begin{definition} (Consistency of Decision Trees)
\label{def:consistency}
Let $T, T_1,T_2,\ldots,T_m$ be $(k,\G_n)$-good decision trees
over the variables of $\Tseitin(G[\alpha)$.
\begin{enumerate}

\item $T_1,T_2$ are {\it consistent} if for all $b$, 
$$\pi \in \Branches_b(T_1) \rightarrow  T_2 \uhr \pi =b,$$
$$\pi \in \Branches_b(T_2) \rightarrow T_1 \uhr \pi =b.$$

\item  $T_1,T_2$ are $\neg$-{\it consistent} if for all $b$,

$$\pi \in \Branches_b(T_1) \rightarrow  T_2 \uhr \pi = \neg b,$$
$$\pi \in \Branches_b(T_2) \rightarrow T_1 \uhr \pi = \neg b.$$

\item $T$ is said to represent $\lor_{j=1}^m T_j$ with respect to $(G,\alpha)$ if:

$$\pi \in \Branches_0(T) \rightarrow \forall j \in [m], ~
        T_j \uhr \pi =0,$$

$$\pi \in \Branches_1(T) \rightarrow \exists j \in [m], ~
        T_j \uhr \pi =1.$$

\end{enumerate}
\end{definition}

\begin{definition}(k-evaluation)
\label{def:keval}
Let $P$ be a depth-$d$ refutation of 
$\Tseitin(\G_n)$ where $\G_n$ is the grid graph of dimension $n$. 
Let $P^*$ be a the multiset of all subformulas of all formulas in $P$, and
assume that $10k < n$.
A $k$-evaluation for $P^*$
is a mapping $\calT(\cdot)$ which assigns to each
formula $A \in P^*$ a total, $(k,\G_n)$-good decision tree,
$\calT(A)$, satisfying the following properties:

\begin{enumerate}
\item If $A$ has depth $0$ (and therefore is a good height $0$ decision tree) then $\calT(A)=A$;

\item Let $A = \lor_{j=1}^q A_j$ be a depth $i+1$ formula, with
associated $k$-evaluations $\calT(A),\calT(A_1),\ldots,\calT(A_q)$.
Then $\calT(A)$ represents $\lor_j \calT(A_j)$ with respect to $G_n$.

\item Let $A = \neg B$. Then $\calT(A)$ and $\calT(B)$ are $\neg$-consistent.

\item For any two occurrences $A$ and $A'$ of the same formula,
$\calT(A)$ and $\calT(A')$ are consistent.

\item For a clause $A$ of $\Tseitin(\G_n)$, $\calT(A)$ is a 1-tree.

\end{enumerate}
\end{definition}

We point out that in previous papers $k$-evaluations
were defined over a {\it set} of formulas. 
Our definition generalizes the earlier one by allowing
us to consider a ${\it multiset}$ of formulas, so that the mapping
$\calT$ can map different copies of the same formula
to different good decision trees as long as they are locally consistent.

The high level context of how the next lemma will eventually be applied 
is as follows.
Starting with a small refutation of $\Tseitin(\G_{n'})$ (i.e. a sequence of formulas $P'$), let $P = P' \uhr \rho^{(d)}$ denote the sequence of formulas obtained by applying
a restriction $\rho^{(d)}= \rho_1 \ldots \rho_d$ to every formula in the proof.
The `$P$' of Lemma \ref{lemma:k-eval-frege} will be such a $P' \uhr \rho^{(d)}$.
Since proofs are closed under restrictions, and since our restrictions
map an instance of $\Tseitin(\G_{n'})$ on a grid graph to an instance of $\Tseitin$ on
a smaller grid graph, $P$ will be a refutation
of $\Tseitin$ on the smaller grid graph.
(By ``a restriction to the proof,'' what we mean is
just a substitution of variables by their assigned values,
without any further simplification.)
The $P^*$ of Lemma \ref{lemma:k-eval-frege} will be the multiset set of all subformulas of 
$P$.

The following lemma (adapted from Lemma 6.3 in \cite{PRST16}) states that if $P$ is a sequence of formulas
containing the clauses of Tseitin, and $P^*$ is the multiset of
all occurrences of all subformulas in $P$ such that $P^*$ has a $k$-evaluation,
then $P$ cannot be a Frege refutation of $\Tseitin$.

\begin{lemma}
\label{lemma:k-eval-frege}

Let $P'$ be a depth-$d$ Frege refutation of
$\Tseitin(G_{n'})$ where $G_{n'}$ is the grid graph of dimension $n'$.
Let $\rho^{(d)} = \rho^1, \ldots, \rho^d$ be a sequence of
$d$ H{\aa}stad restrictions, and let $P = P' \uhr \rho^{(d)}$
be the refutation under the restriction $\rho^{(d)}$, now of
the smaller Tseitin instance, $\Tseitin(\G_n)$ of dimension
$n$.
If $P^*$ has a $k$-evaluation where $k = o(n)$, then
$P$ cannot be a Frege refutation of $\Tseitin(\G_n)$, and thus
$P'$ cannot be a Frege refutation of $\Tseitin(G_{n'})$.

\end{lemma}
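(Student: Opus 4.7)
The plan is to argue by contradiction: assume $P^*$ admits a $k$-evaluation $\calT$ while $P = P' \uhr \rho^{(d)}$ is a genuine Frege refutation of $\Tseitin(G_n)$, and derive a contradiction by showing inductively that every line $C$ of $P$ has $\calT(C)$ equal to a 1-tree (all leaves labelled $1$). Applying this to the last line of $P$, which is the constant $0$, finishes the proof: by condition~1 of Definition~\ref{def:keval}, $\calT(0)$ is the one-node tree with its sole leaf labelled $0$, a $0$-tree, not a $1$-tree.

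The base case is immediate from condition~5: the initial lines of $P$ are the axioms of $\Tseitin(G_n)$ obtained by applying $\rho^{(d)}$ to the Tseitin clauses of $P'$, and their $\calT$-images are $1$-trees by fiat. The inductive step checks each of Shoenfield's rules. For Excluded Middle, Expansion, Contraction, and Associative, the argument is short and uniform: supposing the conclusion's $\calT$ has a $0$-branch $\pi$, I use condition~2 to transfer this to a $\calT$-value of $0$ on the constituent subformula trees, and then derive a contradiction with the antecedent's $1$-tree property using condition~3 (for $\neg$) or condition~4 (consistency across different occurrences of the same formula). For instance, in Contraction $A \lor A \vdash A$, if $\pi$ is a hypothetical $0$-branch of $\calT(A)$, any branch $\pi'$ of the $1$-tree $\calT(A \lor A) \uhr \pi$ yields a $1$-branch $\sigma = \pi \circ \pi'$ of $\calT(A \lor A)$; condition~2 forces some occurrence $A'$ of $A$ in $A \lor A$ to satisfy $\calT(A') \uhr \sigma = 1$, but condition~4 already forces $\calT(A') \uhr \pi = 0$, and restricting the constant-$0$ tree further by $\pi'$ still gives $0$, a contradiction.

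The main case, and the main obstacle, is the Cut rule $A \lor B,\ \neg A' \lor C \vdash B \lor C$, where $A$ and $A'$ are two occurrences of the same formula. Suppose $\calT(A \lor B)$ and $\calT(\neg A' \lor C)$ are $1$-trees but $\calT(B \lor C)$ has a $0$-branch $\pi$. Condition~2 gives $\calT(B) \uhr \pi = 0$ and $\calT(C) \uhr \pi = 0$. Picking any branch $\pi'$ of the $1$-tree $\calT(A \lor B) \uhr \pi$ and setting $\sigma = \pi \circ \pi'$, condition~2 (with $B$'s contribution killed, since further restriction of the constant-$0$ tree $\calT(B) \uhr \pi$ is still $0$) forces $\calT(A) \uhr \sigma = 1$. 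Repeating the same move inside $\calT(\neg A' \lor C) \uhr \sigma$ produces an extension $\tau = \sigma \circ \pi''$ with $\calT(\neg A') \uhr \tau = 1$, hence $\calT(A') \uhr \tau = 0$ by $\neg$-consistency (condition~3). Finally, $\calT(A) \uhr \sigma = 1$ means some sub-restriction $\sigma'' \subseteq \sigma$ is literally a $1$-branch of $\calT(A)$, so condition~4 (consistency of $\calT(A)$ with $\calT(A')$) gives $\calT(A') \uhr \sigma'' = 1$; since $\tau$ extends $\sigma''$, further restricting the constant-$1$ tree yields $\calT(A') \uhr \tau = 1$, contradicting the previous conclusion.

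The principal delicacy throughout is the bookkeeping of composed restrictions: each of $\pi'$ and $\pi''$ is a branch of a good tree restricted by the earlier branches, so supports remain disjoint and the cumulative restriction stays of support $o(n)$, keeping each intermediate restricted tree $(k,G_n)$-good with $k = o(n)$. The novel ingredient compared to~\cite{PRST16} is that since the $k$-evaluation now lives on a multiset of subformula occurrences, the two occurrences $A$ and $A'$ in the Cut rule may have genuinely different decision trees; this is precisely what forces the Cut argument to pass through an explicit $1$-branch sub-restriction $\sigma''$ of $\calT(A)$ before invoking condition~4, rather than using literal equality as in the single-copy setting of prior work.
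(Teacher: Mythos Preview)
Your proposal is correct and takes essentially the same approach as the paper: induction on the derivation showing every line's $\calT$-image is a $1$-tree, contradicting $\calT(0)=0$ at the end. The paper only spells out Excluded Middle and Associative (waving off the rest with ``handled similarly''), so your detailed Cut argument via the path-extensions $\sigma,\tau$ is a useful elaboration; one small slip is that when you kill the $B$-contribution in $\calT(A\lor B)$ using $\calT(B)\uhr\pi=0$, the $B$ in $B\lor C$ and the $B$ in $A\lor B$ are distinct occurrences, so you need an invocation of property~(4) there too (and likewise for $C$), not just for the pair $A,A'$.
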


\begin{proof}

The basic idea is the same as in \cite{PRST16}- we prove by induction on the
number of lines in the refutation, that under the
restriction $\rho^{(d)}$ (which is guaranteed to exist by our main
switching lemma), that all formulas in $P^*$ convert to 1-trees but
this contradicts the fact that the final formula in the derivation
is a 0-tree.

The proof is by induction on the number of inference steps used to derive $A$.
For the base case, there are no inference steps and thus
$A$ must be an
initial clause of $\Tseitin(G_n)$.
(If $A$ is an instance of the axiom scheme, then $A$ is derived by one inference step; such formulas
will be handled in the inductive step.)
When $A$ is an initial clause, property (5) of Definition \ref{def:keval} immediately implies that $\calT(A)$ is a 1-tree.

For the inductive step, assume that $\calT(A)$ is a 1-tree for each formula $A$ in $P$ that
was derived by at most $i$ inference steps.
Now consider the $(i+1)^{st}$ formula $B$ in the refutation
that was derived from zero, one or two previously derived formulas.
which were themselves each derived in at most $i$ inference steps, using some 
inference rule.
There are different cases depending on which inference rule was used
to derive $B$, but the proof will be very similar in all these cases.

\medskip

\noindent {\bf Case 1.} The first case is where $B$ is
an axiom of the form $A \lor \neg A$.
Assume for sake of contradiction that $\calT(B)$ is
not a 1-tree and let $\pi$ be a 0-path of $\calT(B)$.
By property (2) of Definition \ref{def:keval},
this implies that  $\calT(A) \uhr \pi =0$ and $\calT(\neg A) \uhr \pi =0$.
But this contradicts property (3) which states that
$\calT(A)$ and $\calT(\neg A)$ are $\neg$-consistent.

\medskip

\noindent {\bf Case 2.} $B = (F \lor G) \lor H$ is derived from 
$A= F \lor (G \lor H)$ by the associative rule.

There are 5 ``active'' subformulas of $A=(F \lor G) \lor H$: $F^A,G^A,H^A,(F \lor G)^A, ((F \lor G)\lor H)^A$
where ``A'' in the superscript denotes the occurrence of the subformula in $A$.
Similarly there are 5 active subformulas of $B= F \lor (G \lor H)$:
$F^B,G^B,H^B,(G\lor H)^B, (F \lor (G \lor H))^B$.
Thus altogether we have a multiset, call it ${\cal S}$, of 10 active subformulas associated with this
inference.

By induction, we know that $\calT((F \lor (G \lor H))^A)$ is a 1-tree.
Now assume for sake of contradiction that
$\calT(((F \lor G) \lor H)^B)$ is not a 1-tree, and
therefore, there
exists a path $\pi$ in this tree such that $\calT(((F \lor G) \lor H)^B) \uhr \pi = 0$.
By property (2) this implies that $\calT(H^B) \uhr \pi =0$ and
$\calT((F \lor G)^B) \uhr \pi =0$.
Now again by property (2) this implies that
$\calT(F^B) \uhr \pi =0$ and $\calT(G^B) \uhr \pi =0$.
Now by property (4) this implies that
$\calT(F^A) \uhr \pi =0$, $\calT(G^A) \uhr \pi =0$ and $\calT(H^A) \uhr \pi =0$.
Now using property (2) again (twice) this implies
that $\calT((F \lor (G \lor H))^A) \uhr \pi =0$
which contradicts the fact that
$\calT(F \lor (G \lor H))^A)$ is a 1-tree.

The other rules are handled similarly.
\end{proof}

\subsection{Obtaining a $k$-evaluation}
\label{sec:obtaining-k-eval}

Let $\calP$ be a depth-$d$
refutation of $\Tseitin(\exG_n[\alpha])$ where $d \leq d^\star =  c \sqrt{\log n}$,
and let $\calP^*$ be the multiset of all subformulas of $\calP$.
The main result of this section (Lemma~\ref{lemma:restriction-evaluation}) 
uses our final switching lemma (Theorem~\ref{thm:final-SL-stringy}, stated below) to prove that 
if $\calP$ is ``small'', then there exists a restriction
$\rho^{(d)} = \rho_1 \ldots \rho_d$ such that
the multiset, $\calP^* \uhr \rho^{(d)}$ will have a $k$-evaluation.
Recall that $\calP \uhr \rho^{(d)}$ is a depth-$d$ refutation of
$\Tseitin(\G_{n'}[\alpha^{(d)}])$ where $\alpha^{(d)} = \alpha \uhr \rho^{(d)}$, and where $n'$ is the dimension of the smaller grid graph
after applying $\rho^{(d)}$, and where by definition
$\calP^* \uhr \rho^{(d)}$ is the multiset of all occurrences of subformulas in $\calP \uhr \rho^{(d)}$. 
Applying Lemma \ref{lemma:k-eval-frege} (with its $P$ being $\calP \uhr \rho^{(d)}$ and its $P^*$ being $\calP^* \uhr \rho^{(d)}$), it follows that 
$\calP \uhr \rho^{(d)}$ cannot be a refutation of 
$\Tseitin(\G_{n'}[\alpha^{(d)}])$,
which contradicts our assumption that $\calP$ is a ``small'' refutation
of $\Tseitin({\G}_n[\alpha])$.

\begin{lemma}
\label{lemma:restriction-evaluation}
Let $\calP$ be a size $(s,S)$ depth-$d$ Frege proof of $\Tseitin({\exG}_n[\alpha])$. 
If $S < 2^{n/2^{O(d\sqrt{\log s})}}$, then there exists a pair 
$(\rho^{(d)},H^{(d)}) \in \supp({\cal A}^{(d)})$, such that
$\calP \uhr {\rho}^{(d)}$ has a $k$-evaluation where
$k = td+\ell$, $t = n/2^{O(d\sqrt{\log s})}$ and $\ell = \sqrt{\log s}$.
\end{lemma}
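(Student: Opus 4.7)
The plan is to construct $\rho^{(d)} = \rho_1 \circ \cdots \circ \rho_d$ by iteratively applying the multi-switching lemma (Theorem~\ref{thm:final-SL-stringy}), once per depth layer of $\calP$, and then to read off the $k$-evaluation from the decision trees that the lemma produces. The conceptual shift from \cite{PRST16} is to group the depth-$2$ subformulas currently sitting at the bottom of the proof by line: line $j$ of $\calP$ contributes a family $\fancyF_j$ of size at most $s$, and we would apply the multi-switching lemma to each $\fancyF_j$ separately. With high probability over $\rho_i$, the lemma gives, for each $j$, a common decision tree of depth at most $t = n^2/2^{O(d\sqrt{\log s})}$, on top of which each $F \in \fancyF_j$ becomes a good decision tree of additional local depth at most $\ell = \sqrt{\log s}$. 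Iterating this $d$ times leaves every subformula represented by a good tree of total depth at most $td + \ell = k$, as required by Definition~\ref{def:keval}.

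Next I would union bound over the $d$ levels and the $S$ lines. With the multi-switching parameters tuned so that a single line fails with probability at most $2^{-n^2/2^{O(d\sqrt{\log s})}}$, the total failure probability is at most $dS \cdot 2^{-n^2/2^{O(d\sqrt{\log s})}}$, which is strictly less than one under the hypothesis $S < 2^{n^2/2^{O(d\sqrt{\log s})}}$. Hence some pair $(\rho^{(d)}, H^{(d)}) \in \supp(\calA^{(d)})$ succeeds on every line and every level simultaneously. Crucially, because the lines are processed by \emph{separate} applications of the multi-switching lemma, two different occurrences of the same subformula appearing in different lines may be assigned to different good trees. This is exactly the reason why Definition~\ref{def:keval} has been generalized to operate on the multiset of subformula occurrences in $\calP \uhr \rho^{(d)}$ rather than the set, and it is what lets the per-family bound $s$ enter the calculation in place of the total bound $sS$.

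Finally I would verify the five properties of Definition~\ref{def:keval} for the induced map $\calT$. Properties~(1) and~(3) are syntactic, with $\calT(\neg A) := \calT(A)^c$. Property~(5) is immediate because the grid restrictions keep the Tseitin contradiction trapped in the giant component (Proposition~\ref{prop:closure-is-unique}), so no branch of a good tree can falsify an initial parity clause. Property~(4), the new same-formula consistency, reduces to the observation that on any branch $\pi$ of a good tree the closure $\cl_{G_{n'},\alpha'}(\pi)$ forces the truth value of the formula, so any other good tree for an isomorphic occurrence must produce the same leaf on $\pi$. The main obstacle will be Property~(2): given $\calT(A_j)$ for $j = 1, \ldots, q$, one must build $\calT(\bigvee_j A_j)$ by $\vee$-concatenating and then pruning via the procedure of Section~\ref{sec:prune}, and argue that the result remains $(k,G_{n'})$-good. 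Bounding its depth by $k$ is where the multi-switching guarantee is fully consumed (and where the slack $\ell$ on top of the shared prefix $t$ becomes essential), and guaranteeing that every surviving branch still pushes the contradiction into the giant component is where the closure machinery of Section~\ref{sec:independent} and Fact~\ref{fact:Z2-branches-are-good} do the work.
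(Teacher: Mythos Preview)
Your high-level plan is right and matches the paper's: iterate the multi-switching lemma once per depth layer, group subformulas by the line they occur in, and union bound. But two pieces of your sketch would not go through as written.

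\textbf{The union bound is over more than $dS$ events.} At level $i+1$ you do not apply Theorem~\ref{thm:final-SL-stringy} once per line. For each line $F$ you have already built a common tree $T^{F,i}$ of depth $ti$, and the collection of $\ell$-DNFs you must switch depends on which leaf $\sigma$ of $T^{F,i}$ you are sitting at: the family is $\mathcal{S}(F,\sigma) = \{\bigvee_j \Disj(\calT^{(i)}(A_j^F)\uhr\sigma) : A^F \in F_{\lor,i+1}\}$. So the union bound at level $i+1$ is over $S \cdot 2^{ti}$ applications of the lemma, not $S$. The extra $2^{ti}$ factor is absorbed by the $2^{-\Omega(t)}$ failure probability (this is where the constants hidden in $2^{O(d\sqrt{\log s})}$ are spent), but you need to account for it.

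\textbf{Your argument for Property~(4) does not work.} You write that ``the closure $\cl_{G_{n'},\alpha'}(\pi)$ forces the truth value of the formula,'' but this is false: the closure only fixes the bridges created by $\supp(\pi)$; an arbitrary subformula may depend on edges in the giant component that remain unset. The $k$-evaluation is only \emph{locally} consistent, not truth-functional, so there is no semantic shortcut. The paper's argument is inductive in the formula depth: if $\calT^{(i+1)}(A^l)\uhr\pi = 1$ for one occurrence $A^l = \bigvee_j A_j^l$, Property~(2) gives some $j$ with $\calT^{(i+1)}(A_j^l)\uhr\pi = 1$; since $A_j^l$ and $A_j^r$ have depth $\le i$, the inductive hypothesis for Property~(4) gives $\calT^{(i+1)}(A_j^r)\uhr\pi = 1$; then Property~(2) applied to $A^r$ gives $\calT^{(i+1)}(A^r)\uhr\pi = 1$. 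This is the step where the generalization of $k$-evaluations to multisets actually costs you something, and it cannot be discharged by a one-line appeal to closures.

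A smaller point: for Property~(2), $\calT(\bigvee_j A_j)$ is not obtained by ``$\vee$-concatenating'' the trees $\calT(A_j)$ and pruning. It is read off directly from the $\CCDT$ that the multi-switching lemma produces for the family of $\ell$-DNFs $\{\bigvee_j \Disj(\calT(A_j)\uhr\sigma)\}$; the representation property of Definition~\ref{def:consistency}(3) then follows from the fact that the $\CCDT$ computes these DNFs correctly on every branch.
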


The restrictions $\rho_1, ..., \rho_d$ that make up $\rho^{(d)}$ are drawn from $\Rgrid_\Delta$ with $\Delta = 2^{O(\sqrt{\log s})}$. These parameters are chosen according to the proof of Corollary~\ref{cor:cor-bound} (which gives a correlation bound for small circuits computing parity), so that $td + \ell \ll n/\Delta^d$. As discussed above, 
Theorem \ref{thm:main} follows directly from Lemmas~\ref{lemma:k-eval-frege} and Lemma~\ref{lemma:restriction-evaluation} given this observation. 

We now state our final switching lemma which will be
proved in Section \ref{sec:switchinglemma}. We note that it makes use of the canonical common decision tree ($\CCDT$, see Definition~\ref{defn:CCDT}).

\begin{theorem} [Final switching lemma] \label{thm:final-SL-stringy}
Let $\scr{F} = \{F_1, ..., F_s\}$ be a collection of $\ell$-DNFs over the variables of a grid graph $G_n$. Then for all $t \in \mathbb{N}$,
$$\Pr_{\rho \sim \Rgrid_\Delta}[\depth(\CCDT_{\ell}(\scr{F} \hit \rho)) \geq t] \leq s^{\ceil{t/\ell}}(2^{O(\ell)}/\Delta)^{\Omega(t)}$$
\end{theorem}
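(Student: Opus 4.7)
The plan is to adapt the single-switch encoding argument of~\cite{PRST16} for H{\aa}stad's grid restrictions to the multi-switching framework of~\cite{IMP12,Has14}. The key new ingredient is bookkeeping the identity of which DNF in $\scr{F}$ becomes active at each stage of the canonical common decision tree, which is the source of the $s^{\lceil t/\ell \rceil}$ factor. Conceptually, we fix the deterministic greedy rule that defines $\CCDT_\ell(\scr{F}\hit\rho)$: walk through $F_1,\ldots,F_s$ in order, locate the first index whose DNF has not yet been resolved, take its first surviving term, query up to $\ell$ of its unset variables to form a block, then recurse on the resulting restriction. The event $\{\depth(\CCDT_\ell(\scr{F}\hit\rho)) \geq t\}$ corresponds to a canonical root-to-leaf path $\pi$ of length at least $t$ witnessing this greedy process, and that path passes through at most $\lceil t/\ell\rceil$ active DNFs.

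First, from each bad pair $(\rho,\pi)$ I would extract a compact witness. The witness consists of: (i) the ordered tuple $i_1 < i_2 < \cdots < i_k$ of indices of active DNFs visited by $\pi$, with $k \leq \lceil t/\ell \rceil$, contributing a factor of $s^{\lceil t/\ell \rceil}$; (ii) for each block, which term of $F_{i_j}$ was under inspection and how $\pi$ traversed its literals, contributing $2^{O(\ell)}$ per block (and so $2^{O(t)}$ total in the worst case); and (iii) a modified restriction $\rho'$ obtained from $\rho$ by replacing the $*$-values along $\pi$ with a consistent setting that falsifies each inspected term---this step mirrors PRST16's single-switch encoding but iterated block-by-block. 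The modified restriction $\rho'$ has at least $t$ fewer ``live'' variables along $\pi$ than $\rho$ does, and by the analysis of $\Rgrid_\Delta$ from Section~\ref{sec:grid-restrictions} each such reduction increases the probability weight of the restriction by a factor of $\Theta(\Delta)$, yielding the $(1/\Delta)^{\Omega(t)}$ saving. Union-bounding over the possible witnesses then produces the claimed bound.

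The main obstacle I anticipate is propagating PRST16's structural invariants---notably that the modified restriction remains $G$-independent and that its closure still pushes the contradiction into a giant component---through the block-by-block swap and across multiple DNFs. Concretely, when we ``kill'' a term of $F_{i_j}$ by swapping in forced edge assignments, we must verify that the resulting edge set remains an independent subset of the grid and is compatible with the H{\aa}stad path structure of Section~\ref{sec:grid-restrictions}; the edge-disjointness guarantee of Lemma~\ref{lem:path-disjointness} is essential here, since a single live variable corresponds to a whole path of grid edges and the counting convention for $*$s established there must be respected throughout. Because the greedy CCDT construction exposes only one block at a time, I expect these invariants can be maintained inductively, with the block-index records $i_1,\ldots,i_k$ inserted between blocks without interfering with PRST16's single-block analysis. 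Once this coherence is verified, the accounting---$s^{\lceil t/\ell \rceil}$ for DNF indices, $2^{O(\ell)}$ per block for term-and-position data, and $1/\Delta$ per killed variable---assembles into the stated bound.
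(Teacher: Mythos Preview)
Your proposal takes a genuinely different route from the paper: you describe a Razborov/H{\aa}stad-style \emph{encoding} argument (swap $*$'s to fixed values, record a witness, compare probability weights), whereas the paper proves Theorem~\ref{thm:final-SL-stringy} via the \emph{random-paths} technique of~\cite{PRST16}. In the paper's argument one first union-bounds down to a subfamily $\scr{F}^*$ of size $\lceil t/\ell\rceil$ (this is where $s^{\lceil t/\ell\rceil}$ enters), then defines a randomized path-sampling algorithm $\mcal{A}$ in $\CCDT_\ell(\scr{F}^*\hit\rho)$, converts ``depth $\geq t$'' into ``random path long with probability $\geq 4^{-t}$'' via Markov, and finally couples $\rho\sim\Rgrid_\Delta$ to an adaptive \emph{sampling game} so that the number of $*$'s along a random walk can be controlled by negative-binomial-type tail bounds. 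No restriction is ever modified; the $\Delta^{-\Omega(t)}$ saving comes from the $1/\Delta$ chance that a queried edge's associated center is chosen, tracked through a good/bad/residual case split. Your attribution of an encoding argument to~\cite{PRST16} is off; that paper introduced precisely the random-paths method the present paper extends.

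The substantive gap in your plan is the sentence ``each such reduction increases the probability weight of the restriction by a factor of $\Theta(\Delta)$.'' For $\Rgrid_\Delta$ the $*$-events are highly correlated: a grid edge $e$ is live iff both its associated center and a matching center in the adjacent subgrid are chosen, and a single center choice governs an entire bundle of path-edges. To convert one $*$ into a non-$*$ you must alter a center selection, which simultaneously flips the status of $\Theta(\Delta)$ other edges, some of which may appear elsewhere in the CCDT path $\pi$ or in terms you have not yet inspected. You have not explained how to (i) keep $\rho'$ in the support of $\Rgrid_\Delta$ after $t$ such alterations, (ii) make the map $(\rho,\text{witness})\mapsto\rho'$ injective despite these global side effects, or (iii) certify the $\Theta(\Delta)$ ratio per swap when multiple swaps touch the same subgrid. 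These are exactly the obstacles that led the authors (see the opening of Section~\ref{sec:switchinglemma}) to abandon the encoding route in favor of random paths; H{\aa}stad's single-switch encoding in~\cite{Has21} does handle them, but with considerable machinery that your outline does not reproduce and whose multi-switch extension the paper leaves as its main open problem.
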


\subsection{Proof of Lemma~\ref{lemma:restriction-evaluation}}

\noindent{\bf High Level Overview.} Using the switching lemma, we will prove by induction on $i$, $i \leq d$, that there exists a restriction
$\rho^{(i)}$ and a mapping $\calT^{(i)}$ which assigns to each
occurrence of a subformula in the proof $P$ (restricted by $\rho^{(i)}$) a decision tree of height
at most $ti + \ell $ satisfying the properties in the definition of
a $k$-evaluation.

To do this we will define an iterative process where apply a sequence of $d$ restrictions in order to
reduce the depth of each formula in the proof.
Initially when $i=0$, we start with the multiset of depth at most $d$ formulas.
For $i=1$, using the switching lemma, we argue that there
exists a good restriction $\rho_1$ such that for each formula $F$ in the proof,
all depth-1 subformulas of $F$ have a common canonical
decision tree.
We repeat this for $d$ stages, using good restrictions $\rho_1,\ldots,\rho_d$.
At each stage $i \leq d$, we grow the common decision tree associated with
all subformulas of $F$, increase the common tree from depth $t(i-1)$ to depth $ti$
by adding $t$ new layers to the common tree.

\begin{proof}(Proof of Lemma~\ref{lemma:restriction-evaluation})
Let $P$ be our original size $(s,S)$ depth-$d$ refutation of $\Tseitin({\exG}_n[\alpha])$
and let $P^*$ be the multiset of all subformulas of $P$
(where subformulas are with respect to unbounded-fanin).
Given $i \in [d]$, and a formula $F \in P$, let
$F_{\leq i}$ denote the multiset of all subformulas of $F$
of depth at most $i$, and let $F_i$ denote the set of all subformulas of $F$ of depth exactly $i$.
$F_i$ is the disjoint union of $F_{\lor,i}$ and $F_{\neg,i}$ where
$F_{\lor,i}$ are the depth $i$ subformulas of $F$ where the top gate
is an unbounded fanin OR gate, and $F_{\neg,i}$ are the depth-$i$
subformulas of $F$ where the top gate is negation.
Finally, let $\Gamma_i$ be the multiset consisting of $\cup_{F \in P} F_i$,
and let $\Gamma_{\leq i}$ be the multiset of formulas $\cup_{F \in P} F_{\leq i}$.
Again, $\Gamma_i$ is the disjoint union of $\Gamma_{\lor,i}$ and $\Gamma_{\neg,i}$.

We will first prove by induction on $i$,
that there exists a restriction $\rho^{(i)}$ 
and a mapping $\calT^{(i)}$ such that the following holds:
\begin{quotation}
($\dagger$) For every formula $F \in P$ there exists a {\it common} 
tree $T^{F,i}$ of depth $ti$ such that for every
$A^F \in F_{\leq i}$,
$\calT^{(i)}(A^F \uhr \rho^{(i)})$ will be mapped to a $(ti+\ell,H^{(i)})$-good
decision tree where the common tree $T^{F,i}$ is at the top,
followed by $\ell$ layers at the bottom that are specific to $A^F \uhr \rho^{(i)}$.
\end{quotation}

In each step we start with the
Tseitin formula over a grid graph of some dimension $n$ and apply
a full restriction (to the entire proof) which collapses 
$G_n$ to a grid graph of some smaller dimension $n'$.
We denote the successive full restrictions by $\rho_1,\ldots,\rho_d$,
and $\rho^{(i)}$ denotes the concatenation of the first $i$ restrictions.
Similarly we denote the successively smaller grid graphs under
the restrictions by $H^{(1)}, H^{(2)}, \ldots, H^{(d)}$,
where $H^{(i)}$ has dimension $n_i$.
(Initially $H^{(0)}$ is the original grid graph $G_n$ of dimension $n=n_0$.)

After defining our mapping $\calT^{(i)}$ satisfying ($\dagger$), we will then prove inductively that it
satisfies properties (1)-(5) of Definition \ref{def:keval}.

For the base case when $i=0$, $\rho_0$ is the empty restriction.
For $b \in \{0,1\}$, we define $\calT^{(0)}(b)=b$ ( the one-node tree comprising a single leaf labelled by $b$).
If $e$ is a variable, then we define
$\calT^{(0)}(e)$ to be the depth-1 tree that queries this variable,
where the path $e=0$ has leaf value $0$ and the path $e=1$ has
leaf value $1$.
Similarly we define $\calT^{(0)}(\neg e)$ to be the depth-1
tree where $e$ is queried and the path $e=0$ has leaf value $1$
and the path $e=1$ has leaf value $0$.
It is easy to check that these trees are $(k,G_n)$-good and
satisfy the properties of Definition \ref{def:keval}.

Now assume by the inductive hypothesis that there
exists a restriction $\rho^{(i)}$ and a mapping $\calT^{(i)}$ satisfying ($\dagger$).
Note that $\Gamma_{\leq i+1}$ is the disjoint union of the multisets 
$\Gamma_{\lor,i+1}$, $\Gamma_{\neg,i+1}$, and $\Gamma_{\leq i}$.

\begin{itemize}
\item[(a)] First consider formulas $A \in \Gamma_{\lor,i+1}$.
Each such $A$ is a subformula of some $F \in P$, so we will write this
occurrence of $A$ as $A^F$ to indicate that it is a copy of $A$ occurring in $F$.
Since $A^F \in \Gamma_{\lor,i+1}$, $A^F$ is the disjunction of some set of
at most $s$ formulas, $A_1^F,\ldots,A_s^F \in F_{\leq i}$.
By the induction hypothesis, for all such $A_j^F$ the tree
$\calT^{(i)}(A_j^F \uhr \rho^{(i)})$ is a
$(ti+\ell,H^{(i)})$-good decision tree 
In particular, $\calT^{(i)}(A_j^F \uhr \rho^{(i)})$ consists of
the common tree $T^{F,i}$ of height $ti$ at the top, 
and for each branch $\sigma \in \Branches(T^{F,i})$, the leaf associated with $\sigma$ in $T^{F,i}$ is
labelled with a height-$\ell$ subtree, 
$\calT^{(i)}(A_j^F \uhr \rho^{(i)}) \uhr \sigma$ (specific to $A_j^F$).

For a pair $(F,\sigma)$, where $F \in P$, and $\sigma \in \Branches(T^{F,i})$, 
we define ${\cal S}(F,\sigma)$ to be the following
set of $\ell$-DNF formulas:
$${\cal S}(F,\sigma) = \{ \lor_{j=1}^s Disj( \calT^{(i)}(A_j^F \uhr \rho^{(i)})) \uhr \sigma ~:~ A^F = A_1^F \lor \ldots A_s^F, ~ A^F \in F_{\lor,i+1} \} .$$

Let ${\cal S}$ be the collection of all such sets; that is:
$${\cal S} = \{{\cal S}(F,\sigma) ~:~ F \in P, ~ \sigma \in \Branches(T^{F,i})\} .$$

For each $A^F = A^F_1 \lor \ldots A^F_s \in F_{\lor,i+1}$, and path $\sigma \in T^{F,i}$,
let $\dnf(A^F)$ denote the corresponding $\ell$-DNF in ${\cal S}(F,\sigma)$.
That is:
$$\dnf(A^F) = \lor_{j=1}^s Disj( \calT^{(i)}(A_j^F \uhr \rho^{(i)})) \uhr \sigma.$$

We want show that there exists a restriction, $\rho_{i+1}$,
such that for every set of $\ell$-DNF formulas ${\cal S}(F,\sigma) \in {\cal S}$,
there exists a {\it common} tree $T(F,\sigma)$ of height $t$
such that for every $\ell$-DNF formula, $\dnf(A^F) \in {\cal S}(F,\sigma)$,
$\dnf(A^F) \uhr \rho_{i+1}$ can be written as a height $t+\ell$ decision tree,
with $T(F,\sigma)$ at the top, followed by height-$\ell$ trees at the leaves
that are specific to $\dnf(A^F) \uhr \rho_{i+1}$.

The existence of such a good restriction follows from
our main multiswitching lemma: For each pair $(F,\sigma)$, $|{\cal S}(F,\sigma)| \leq s$,
and the number of sets ${\cal S}(F,\sigma)$ is at most $S \cdot 2^{ti}$.

Applying Theorem~\ref{thm:final-SL-stringy} together with a union bound (over $S \cdot 2^{ti}$ sets),
we get that there exists some
$(\rho^{(i+1)},H^{(i+1)}) \in \supp(\calA^{(i+1)}(\rho^{(i)},H^{(i)}))$ satisfying the following:
(i) $\rho^{(i+1)}$ is the concatenation of the previous
restrictions $\rho^{(i)}$ together with a new restriction
$\rho_{i+1}$;
(ii) for all ${\cal S}(F,\sigma) \in {\cal S}$, there exists
a common tree $T(F,\sigma)$ of height $t$ such that
for all $\ell$-DNFs, $\dnf(A^F) \in {\cal S}(F,\sigma)$, 
$\dnf(A^F) \uhr \rho_{i+1}$ can be represented
by a height $t+\ell$ decision tree consisting of $T(F,\sigma)$ at the
top, followed by height-$\ell$ trees, $T(\dnf(A^F),\sigma,\sigma')$ 
where $\sigma'$ is a path in $T(F,\sigma)$.

Given such a good restriction $\rho^{(i+1)}$, 
we are now ready to define the common trees $T^{F,i+1}$, and
our mapping ${\cal T}^{(i+1)}$.
We define $T^{F,i+1}$ to be the height $t(i+1)$ tree consisting of
$T^{F,i}$ (of height $ti$) at the top, followed by the trees
$T(F,\sigma)$. That is, for each path $\sigma \in \Branches(F^{T,i})$, we attach
$T(F,\sigma)$ to the leaf associated with $\sigma$. The resulting
tree, $T^{F,i+1}$ has the correct form since its height is $t(i+1)$ and it
is common to all $A^F \in F_{\lor,i+1}$.

Now for $A^F \in F_{\lor,i+1}$, we are ready to define
${\cal T}^{(i+1)}(A^F \uhr \rho^{(i+1)})$. 
It consists of the common tree $T^{F,i+1}$
at the top, followed by the height-$\ell$ trees $T(\dnf(A^F),\sigma,\sigma')$.
It is not hard to see that our desired mapping
$\calT^{(i+1)}$ 
satisfies: for every $F \in P$ and every $A^F \in F_{\lor,i+1}$,
$\calT^{(i+1)}$ maps $A^F \uhr \rho^{(i+1)}$ to a $(t(i+1)+\ell,H^{(i+1)})$-good decision tree satisfying
($\dagger$).

\item[(b)] Next consider formulas $A \in \Gamma_{\leq i}$.
Each $A$ is a subformula of some $F \in P$, so as above we
write this occurrence of $A$ as $A^F$ to indicate that it is a copy of $A$ occurring in $F$.
Since $A^F \in \Gamma_{\leq i}$, 
by the induction hypothesis, the tree
$\calT^{(i)}(A^F \uhr \rho^{(i)})$ is a 
$(ti+\ell,H^{(i)})$-good decision tree satisfying ($\dagger$).

Now we define 
$$\calT^{(i+1)}(A^F \uhr \rho^{(i+1)}) = \calT^{(i)}(A^F \uhr \rho^{(i)}) \uhr \rho_{i+1}$$
   
where $\rho^{(i+1)}$ is the extension of $\rho^{(i)}$  by $\rho_{i+1}$
(where $\rho_{i+1}$ is the restriction from case (a)).
It is clear that this tree is still $(ti+1,H^{(i+1)})$-good, 
since it
is just a further restriction of $\calT^{(i)}(A^F \uhr \rho^{(i)})$.

Finally, for convenience we will extend
$\calT^{(i+1)}(A^F \uhr \rho^{(i+1)})$ by replacing
$T^{F,i}$ by $T^{F,i+1}$. This extended tree is like the
original one except that it queries additional
variables and thus it is $(t(i+1)+\ell, H^{(i+1})$-good. 
We note that this refinement/extension isn't necessary, but doing so
allows the trees for all subformulas $A^F \in F_{\leq i+1}$  to share the {\it same} common tree $T^{F,i+1}$ in the top
$t(i+1)$ layers. 

\item[(c)] Next consider formulas $A \in \Gamma_{\neg,i+1}$.
Each such $A$ is a subformula of some $F \in P$, so we will write this
occurrence of $A$ as $A^F$ to indicate that it is a copy of $A$ occurring in $F$.
Since $A^F \in \Gamma_{\neg,i+1}$, it has the form $\neg B^F$ where
$B^F$ is a subformula of $F$ of depth $i$.
By the induction hypothesis, the tree
$\calT^{(i)}(B^F \uhr \rho^{(i)})$ is a
$(ti+\ell,H^{(i)})$-good decision tree satisfying ($\dagger$).
In particular, $\calT^{(i)}(B^F \uhr \rho^{(i)})$ consists of
the common tree $T^{F,i}$ of height $ti$ at the top,
and for each branch $\sigma \in \Branches(T^{F,i})$, the leaf associated with $\sigma$ in $T^{F,i}$ is
labelled with a height-$\ell$ subtree, $\calT^{(i)}(A_j^F \uhr \rho^{(i)}) \uhr \sigma_j$ (specific to $A_j^F$).
We define $\calT^{(i+1)}(\neg B^F \uhr \rho^{(i+1)})$ to be the tree
$\calT^{(i+1)}(B^F \uhr \rho^{(i+1)})$ (as defined in case (b) since $B^F \in F_{\leq i}$)
but with the leaf values toggled.
By construction, it is $(t(i+1)+\ell, H^{(i+1)})$-good and satisfies ($\dagger$).

\end{itemize}

In order to prove that the mapping $\calT^{(i+1)}$ defined above
satisfies the properties of Definition~\ref{def:keval}, we will
need the following lemma which states that $k$-evaluations are preserved under
nonstandard restrictions.

\begin{lemma}
\label{keval-closed-under-rho}
Let $\calT^{(i)}$ be a $k$-evaluation of the set of
formulas $\{A \uhr \rho^{(i)} ~|~ A \in \Gamma_{\leq i} \}$.
Let $\rho_{i+1}$ be a full restriction, and let
$\rho^{(i+1)}$ be $\rho^{(i)}$ concatenated with $\rho_{i+1}$.
Then for every formula $A \in \Gamma_{\leq i}$,
$\calT^{(i)}(A \uhr \rho^{(i+1)}) \uhr \rho_{i+1}$ is a $k$-evaluation.
\end{lemma}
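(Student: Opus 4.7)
The plan is to read the lemma as producing a new mapping: define a candidate $k$-evaluation $\calT'$ on the multiset $\{A \hit \rho^{(i+1)} : A \in \Gamma_{\leq i}\}$ by
\[ \calT'(A \hit \rho^{(i+1)}) \;:=\; \calT^{(i)}(A \hit \rho^{(i)}) \hit \rho_{i+1}, \]
where the outer $\hit$ is the nonstandard restriction by a full restriction described in Section~\ref{sec:prune}, and verify that $\calT'$ satisfies the five conditions of Definition~\ref{def:keval} with respect to the smaller grid $H^{(i+1)}$. Goodness of the trees in the image of $\calT'$ is immediate from the design of the pruning step, which truncates exactly those branches that would violate $H^{(i+1)}$-independence or fail to push the contradiction into the giant component.

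First I would dispense with conditions (1) and (5). For a depth-$0$ subformula---a variable or a constant---either $\rho_{i+1}$ sets the corresponding edge variable to a constant, in which case both the formula $A \hit \rho^{(i+1)}$ and the tree $\calT^{(i)}(A \hit \rho^{(i)}) \hit \rho_{i+1}$ collapse to that same constant, or the variable is relabeled to an edge of $H^{(i+1)}$ and the tree remains the corresponding one-query tree. Condition (5) is similarly transparent: restriction of a $1$-tree under the nonstandard procedure merely deletes or projects branches and never relabels a leaf, so the resulting tree is still a $1$-tree representing the restricted Tseitin clause.

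The core work is condition (2). Fix $A = \lor_{j} A_j \in \Gamma_{\leq i}$ and a branch $\pi \in \Branches_{b}(\calT'(A \hit \rho^{(i+1)}))$. By Lemma~\ref{mapping-between-paths}, $\pi$ lifts to a unique branch $\tilde\pi \in \Branches_{b}(\calT^{(i)}(A \hit \rho^{(i)}))$ that is a sub-restriction of $\rho_{i+1} \circ \cl_{H^{(i+1)}}(\pi)$. Property (2) of the hypothesized $k$-evaluation $\calT^{(i)}$ then yields, according to whether $b=0$ or $b=1$, the value of $\calT^{(i)}(A_j \hit \rho^{(i)}) \hit \tilde\pi$ for all $j$ (respectively, for some $j$). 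To finish I push each such evaluation down to $\calT'(A_j \hit \rho^{(i+1)}) \hit \pi$: descending into $\calT^{(i)}(A_j \hit \rho^{(i)}) \hit \rho_{i+1}$ along $\pi$ is by construction the same as descending into $\calT^{(i)}(A_j \hit \rho^{(i)})$ along $\rho_{i+1} \circ \cl_{H^{(i+1)}}(\pi)$, and the latter already contains the sub-restriction $\tilde\pi$ that forces the desired leaf value. Conditions (3) and (4) are established by the identical lift-and-push recipe, using the inductive $\neg$-consistency and consistency of $\calT^{(i)}$ in place of property (2).

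The main obstacle will be the push-down step: one must verify that the bridge-forced assignments inserted by the nonstandard restriction procedure when computing $\calT^{(i)}(A_j \hit \rho^{(i)}) \hit \rho_{i+1}$ along $\pi$ agree with the assignments in $\cl_{H^{(i+1)}}(\pi)$ that were used to extract $\tilde\pi$ from Lemma~\ref{mapping-between-paths}. This equality of forced values follows from the uniqueness clause of Fact~\ref{fact:bridge-forced} applied in $H^{(i+1)}$ with its induced charge: each bridge has a unique forced value, and both sides of the identification compute precisely that value. Together with Lemma~\ref{good-rho-paths-survive}, which guarantees that $\tilde\pi$ decomposes as a sub-restriction of $\rho_{i+1}$ extended by an $H^{(i+1)}$-independent assignment and hence that the lift is compatible with the pruning, this observation completes the verification of all five properties of Definition~\ref{def:keval} for $\calT'$.
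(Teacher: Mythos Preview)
Your proposal is correct and follows essentially the same approach as the paper: lift a branch $\pi'$ of the restricted tree to a branch $\tilde\pi$ of the unrestricted tree via Lemma~\ref{mapping-between-paths}, apply the $k$-evaluation hypothesis there, and push the conclusion back down to the restricted trees. Your treatment of the push-down step is in fact more careful than the paper's, which simply invokes Lemma~\ref{good-rho-paths-survive} at that point without spelling out the bridge-consistency check you isolate. One minor difference: for properties (3) and (4) the paper uses the simpler observation that the restriction procedure acts identically on $\calT^{(i)}(A)$ and $\calT^{(i)}(\neg A)$ (respectively on two occurrences of the same formula), so complemented or matching leaf labels are preserved directly, whereas you run the full lift-and-push argument; your route is slightly heavier but works just as well and relies only on the abstract consistency conditions rather than on the specific construction.
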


\begin{proof}
Applying a restriction to a decision tree never
increases the depth of the tree and thus all representations are
still decision trees of depth at most $k$.
Also since
$k\ll n_{i+1}$ the resulting trees are well defined.
Properties (1) and (5) continue to hold, since applying a restriction preserves the leaf values, and therefore a 1-tree will still be a 1-tree after applying $\rho_{i+1}$ and similarly for 0-trees.
Property (3) also holds easily since the restriction procedure
is applied the same way to $A$ and $\neg A$ and therefore any 1-branch in
$\calT(A) \uhr \rho_{i+1}$ will be a 0-branch in $\calT(\neg A) \uhr \rho_{i+1}$
and vice versa. Similarly property (4) also holds straightforwardly.

For property (2), let $A \in \Gamma_{\leq i}$, where $A = \lor_j A_j$. 
Let $\calT^{(i)}(A \uhr \rho^{(i)}) = T$, and for all $j$ let
$\calT^{(i)}(A_j \uhr \rho^{(i)}) = T_j$. 
Assume that $T$ is a $(k,H^{(i)})$-good decision tree that
$H^{(i)}$-represents
$\lor_{j} T_j$, where $k = o(n_{d})$.
Let $\rho$ be a full restriction that reduces the grid graph $H^{(i)}$ to
the grid graph $H^{(i+1)}$ of smaller dimension.
Let $T' = T \uhr \rho$, and for all $j$ let $T'_j = T_j \uhr \rho$.
We want to show that $T'$ $H^{(i+1)}$-represents $\lor_j T'_j$.

Let $\pi'$ be a branch in $T'$ with value $b \in \{0,1\}$. Let $\pi$ be the
unique branch in $T$ guaranteed to exist by Lemma \ref{mapping-between-paths} such that
$\pi$ is of the form $(\rho',\beta')$, and such that $\pi$ has the same value $b$ in $T$.
There are two cases depending on the value of $b$.
If $b=0$, then
since $T$ represents $\lor_j T_j$, 
for all $j$, $T_j \uhr \pi = 0$. Now we can apply Lemma \ref{good-rho-paths-survive} which guarantees
that for all $j$, $T'_j \uhr \pi =0$, and thus property (2) is satisfied for this branch $\pi'$.
The other case is when $b=1$ and then since $T$ represents $\lor_j T_j$,
there exists $j^*$ such that $T_{j^*} \uhr \pi =1$.
Again we can apply Lemma \ref{good-rho-paths-survive} which guarantees that $T'_{j*}=1$ and thus again
property (2) is satisfied for this branch $\pi'$.
Since property (2) holds for all branches $\pi'$ of $T'$, 
we can conclude that $T'$ represents $\lor_j T'_j$ as desired.

\end{proof}

It remains to argue that this map $\calT^{(i+1)}(\cdot)$ as defined above satisfies properties (1)-(5) of Definition~\ref{def:keval}.
For $A \in \Gamma_{\leq i}$, this holds by Lemma \ref{keval-closed-under-rho} above.

It is left to verify properties (1)-(5) for $A \in \Gamma_{i+1}$.
Property (1) is immediate.  
For property (2), if $A^F \in F_{\lor,i+1}$, it follows from (a) and (c).

For property (3), we would like to show that $\calT^{(i+1)}(\neg A \uhr \rho^{(i+1)}) = (\calT^{(i+1)}(A \uhr \rho^{(i+1)}))^c$ for all
$\neg A \in \Gamma_{i+1}$. 
If $\neg A \in \Gamma_{i+1}$ we have:
\begin{align*}
\calT^{(i+1)}(\neg A \uhr \rho^{(i+1)})
&= \calT^{(i)}(\neg A \uhr \rho^{(i)}) \uhr \rho_{i+1}\\
&= (\calT^{(i)}(A \uhr \rho^{(i)}))^c \uhr \rho_{i+1}\\
&= (\calT^{(i)}(A \uhr \rho^{(i)}) \uhr \rho_{i+1})^c\\
&= (\calT^{(i+1)}(A \uhr \rho^{(i+1)})) ^c
\end{align*}
where the first equality is by the definition of $\calT^{(i+1)}$;
the second equality holds because property (2) holds for $\calT^{(i)}$ by the induction hypothesis;
the third equality holds because applying a restriction and toggling the leaf bits are commutative;
and the last equality holds by the definition of $\calT^{(i+1)}$.

\medskip

For property (4), 
Let $A^l$, $A^r$ be two occurrences of the same formula, $A$, of depth $i+1$.
There are two cases: either $A = \neg B$, or $A = A_1 \lor \ldots A_q$.
where the depth of each $A_i$ is at most $i$.
In the first case, let $A^l = \neg B^l$ and let $A^r = \neg B^r$.
Since $B^l$ and $B^r$ are two occurrences of the same formula of depth $i$,
by the inductive hypothesis $\calT^{(i)}(B^l)$ and $\calT^{(i)}(B^r)$ are consistent
and therefore $\calT^{(i+1)}(\neg B^l)$ and 
$\calT^{(i+1)}(\neg B^r)$ are consistent.

In the second case, let
$A^l = A_1^l \lor \ldots A_q^l$ and let $A^r = A_1^r \lor \ldots A_q^r$,
where each subformula $A_j$ has depth at most $i$.
For any path $\pi$ in $\calT^{(i+1)}(A^l)$ with leaf value $b$.  We want to show that
$\calT^{(i+1)}(A^r) \uhr \cl_{G,\alpha}(\pi) =b$.
Assume for sake of contradiction that they are different, and suppose without loss of generality
$\calT^{(i+1)}(A^l) \uhr \pi =1$ and $\calT^{(i+1)}(A^r) \uhr \cl_{G,\alpha}(\pi) =0$.

By property (2), $\calT^{(i+1)}(A^l)$ is consistent with
$\calT^{(i+1)}(A_1^l),\ldots,\calT^{(i+1)}(A_q^l)$ and
therefore, there exists some 
$j \in [q]$ such that 
$\calT^{(i+1)}(A_j^l) \uhr \cl_{G,\alpha}(\pi) = 1$.
Now since $A_j^l$ and $A_j^r$ are two occurrences of the same formula of depth $i$,
by induction this implies (by property (4)) that 
$\calT^{(i+1)}(A_j^r) \uhr \cl_{G,\alpha}((\pi) =1$.
Now applying property (2) of Definition \ref{def:keval} to $A^r$, since it
represents $\calT^{(i+1)}(A_1^r),\ldots, \calT^{(i+1)}(A_q^r)$, this implies that 
$\calT^{(i+1)}(A^r) \uhr \cl_{G,\alpha}(\pi) =1$,
which contradicts our assumption that  $\calT^{(i+1)}(A^r) \uhr \cl_{G,\alpha}(\pi) =0$.

\medskip

For property (5), we want to show that each initial clause of $\Tseitin$ 
becomes a 1-tree under $\calT$.  
Each initial clause is one of the four length-4 clauses whose AND gives a constraint from some vertex $v \in V(\exG_n)$; let this vertex be $v$ and without loss of generality we'll
assume that the clause, $A$, is of the form $(x_{a} \vee x_b \vee x_c \vee x_d)$, 
where $a,b,c,d$ are the four edges out of vertex $v$.
This clause rules one of the 4 assignments where the parity constraint is
violated, and the argument for the other 3 assignments is similar.
Since each such clause has depth-1, it suffices to prove that $\calT^{(1)}(A \uhr \rho^{(1)})$ is a 1-tree. Now since $\rho^{(1)}$ is a full restriction, either all variables incident with
$v$ are fixed by $\rho^{(1)}$ and have odd total charge, or none of the edges incident with
$v$ are fixed. In the first case, we are done since then $A$ converts to 1 under $\rho^{(1)}$ and thus by property (1) this implies that $\calT^{(1)}(A \uhr \rho^{(1)}) =1$.
In the second case, since $A \uhr \rho^{(1)} = A$, we just have to verify that
our procedure for converting this 1-DNF into a good decision tree produces a 1-tree.
It can be checked that our CCDT procedure converts $A$ into a depth-4 tree 
that computes the $\lor$ of $x_a,x_b,x_c,x_d$; this tree is then subsequently pruned to
obtain a good decision tree, and this pruning removes all paths where the parity
constraint is violated, leaving us with only 1-paths. (Indeed this was the
whole point of pruning the tree in the first place, so that the initial axioms
of $\Tseitin$ would convert to ``true'' under our $k$-evaluation.)

This concludes the proof of Lemma~\ref{lemma:restriction-evaluation}.

\end{proof}

\section{Tseitin Multi-switching Lemma}
\label{sec:switchinglemma}

\newcommand{\etaG}{\eta^{\textnormal{good}}}
\newcommand{\etaB}{\eta^{\textnormal{bad}}}
\newcommand{\etaR}{\eta^{\textnormal{res}}}

\newcommand{\sigmaG}{\sigma^{\textnormal{good}}}

In this section, we'll prove the key multi-switching lemma that we'll need to establish our lower bounds. In general, switching lemmas identify the parameter tradeoffs under which a CNF or DNF hit by a random restriction is likely to simplify to a shallow decision tree. The typical application is to the bottom two layers of a circuit (which can be thought of as a collection of CNFs or DNFs), where an appropriate random restriction will simplify these two layers enough to allow one layer to be eliminated. By iteratively reducing the depth of the circuit without reducing the complexity of the formula by too much, we can establish lower bounds on the size of the original circuit. 

The key difference between standard switching lemmas and multi-switching lemmas (introduced by H{\aa}stad \cite{Has14}), is that that while the former uses a na\"ive union bound over the collection of CNFs or DNFs that make the bottom two layers of a circuit, the latter considers all of them in aggregate and shows that \emph{each} of them can be represented as a single \emph{common decision tree}, with shallow trees at its leaves which may differ from formula to formula. Depending on the parameters, multi-switching lemmas can give us stronger bounds where the union bounds applied with standard switching lemmas are too coarse. 

The main hurdle in proving our multi-switching lemma is that it must work with the grid random restrictions introduced in Section~\ref{sec:grid-restrictions}. While the usual proofs of the multi-switching lemma are not easily adaptable to nonuniform random restrictions, we can take inspiration from 
\cite{PRST16}, which developed a new technique for proving switching lemmas that has the flexibility to handle random restrictions similar to those we consider. We extend their technique to prove a multi-switching lemma, and we show that the randomness required by our proof is minimal enough for it to be applied to the grid random restrictions. 

For ease of exposition, we will first prove two warm up switching lemmas to introduce each of the key techniques needed for the final proof. In Section~\ref{ssec:warmup1}, we will prove a simple single-switching lemma under uniform random restrictions to illustrate the random paths technique of \cite{PRST16} along with its quantitative tradeoffs. Next, in Section~\ref{ssec:warmup2} we show how to extend the technique to prove a multi-switching lemma for uniform random restrictions. Finally, in Section~\ref{ssec:mainlemma}, we prove our switching lemma in its final form, defining some necessary preliminaries in Section~\ref{ssec:prelims-for-sl}.

\subsection{Warm-up 1: single-switching lemma for uniform restrictions}\label{ssec:warmup1}

In this section, we will demonstrate the random paths technique of \cite{PRST16} through a simple switching lemma showing that DNFs are likely to become shallow decision trees when hit with a random restriction. Both this section and the next work with uniform random restrictions, defined formally as follows. 

\begin{definition}[Uniform restrictions]\label{defn:Rp}
For $p \in [0, 1]$, let $\mcal{R}_p$ be the set of random restrictions that sets each variable independently to $*$ with probability $p$, and to 0 or 1 with probability $(1 - p)/2$ each. 
\end{definition}

Typical switching lemmas show that a DNF or CNF collapses to a small depth decision tree when hit by a random restriction, but the random paths technique actually applies to the following larger class of functions.

\begin{definition}[$k$-clipped trees]
A decision tree is $k$-\emph{clipped} if at any vertex in the tree, there is a choice of at most $k$ decisions that lead to a leaf.
\end{definition}

It is not difficult to see that any $k$-CNF or $k$-DNF can be represented as a $k$-clipped decision tree; in particular the canonical decision tree of a $k$-DNF (defined below) is $k$-clipped. 

\begin{definition}[Canonical decision tree]
\label{defn:CDT}
Let $F = C_1 \lor C_2 \lor \cdots$ be a $k$-DNF. Let $T_i$ be the depth $k$ decision tree that queries that variables of $C_i$ exhaustively until $C_i$ evaluates to either $0$ or $1$. We build the \emph{canonical decision tree} $T$ of $F$ iteratively as follows. Initially $T$ is the empty tree. In the $i$th step, we consider the tree $T_i\hit \pi$. If this tree still has a 1-leaf, we attach $T_i \hit \pi$ to the leaf of $T$ corresponding to $\pi$ for each $\pi \in \Branches_0(T)$. Otherwise $T_i \hit \pi$ is a 0-tree, and we can skip it.
\end{definition}

We note that in the very first step, $\pi$ is just the empty path, and so $T_1 \hit \pi = T_1$. It is not hard to see that $T = \CDT(F)$ is $k$-clipped, since each vertex in the tree is the root of some tree $T_i\hit \pi$ that is not a 0-tree, and the 1-branch of this tree is a path of length at most $k$ from the vertex to a leaf in the overall tree $T$.

The high level strategy will be to show that for a $k$-clipped tree $T$ and a random restriction $\rho \sim \mcal{R}_p$, a random path in $T\hit \rho$ is unlikely to be long, which in turn will imply that $\depth(T\hit \rho)$ is small. For clarity, we define the distribution of random paths in a tree as follows (Definition 8.1 in \cite{PRST16}).

\begin{definition}[Distribution $\mcal{W}(T)$]
For a decision tree $T$, let $\mcal{W}(T)$ be the probability distribution on $\mathrm{Branches}(T)$ under which each $\pi \in \mathrm{Branches}(T)$ has mass $2^{-|\pi|}$, where $|\pi|$ denotes the number of edges on the branch $\pi$.  This corresponds to a uniform random walk down from the root of $T$.  
(If $T$ has depth 0 (it is simply a constant), a draw from $\mcal{W}(T)$ simply outputs the empty branch.)
\end{definition}

One last definition before we prove the first switching lemma. We say that $X \sim \mcal{A}$ is \emph{stochastically dominated} by $Y \sim \mcal{B}$, where $\mcal{A}$ and $\mcal{B}$ are distributions over real numbers, if for any $t$ $\Pr[X \geq t] \leq \Pr[Y \geq t]$. We will use this repeatedly in our switching lemmas when $\Pr[X \geq t]$ is an error probability we want to bound, but the distribution $\mcal{B}$ is much simpler than $\mcal{A}.$

Now we are ready to state the simple switching lemma:

\begin{theorem}\label{thm:simplesl}
Let $T$ be a $k$-clipped decision tree. Then for any $t\in \mathbb{N}$,
$$\Pr_{\rho \sim \mcal{R}_p}[\depth(T\hit \rho) \geq t] \leq (2pk2^{k})^t$$
\end{theorem}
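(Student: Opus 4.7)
The plan is to use the random-paths technique from~\cite{PRST16}. Rather than bound $\Pr[\depth(T \hit \rho) \geq t]$ directly, I will introduce a random walk $\pi \sim \mcal{W}(T \hit \rho)$ and reduce to bounding the tail of $|\pi|$.

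First, I will establish a Kraft-style inequality. For any $\rho$ with $\depth(T \hit \rho) \geq t$, the walk $\pi$ reaches any internal node of $T \hit \rho$ at depth $t$ with probability exactly $2^{-t}$, so
\[ \Pr_{\pi \sim \mcal{W}(T \hit \rho)}[|\pi| \geq t] \;\geq\; 2^{-t}. \]
Averaging over $\rho$ then gives $\Pr_\rho[\depth(T \hit \rho) \geq t] \leq 2^t \cdot \Pr_{\rho, \pi}[|\pi| \geq t]$, so it suffices to show $\Pr_{\rho, \pi}[|\pi| \geq t] \leq (pk\, 2^k)^t$.

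Next, I will couple $\rho$ and $\pi$ into a single random walk $\tau$ down $T$: at each internal node $v$, draw $\rho(v) \in \{0,1,*\}$ from $\mcal{R}_p$; if $\rho(v) \in \{0,1\}$ take that branch, and if $\rho(v) = *$ take a uniformly random branch. A direct calculation shows that under this coupling $\tau$ is a uniform random walk in $T$, and conditioned on $\tau$ each visited variable is independently $*$ with probability $p$. In particular $|\pi|$ equals the number of $*$-queries along $\tau$.

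Finally, I will bound this last quantity using the $k$-clipped hypothesis. From every internal node there is a length-$\leq k$ shortcut to a leaf, which the uniform random walk follows with probability $\geq 2^{-k}$; iterating over blocks of $k$ steps yields the geometric-like tail $\Pr[|\tau| > jk] \leq (1 - 2^{-k})^j$. A standard factorial-moment calculation then gives $\mathbb{E}\!\left[\binom{|\tau|}{t}\right] \leq (k\, 2^k)^t$. Since the number of $*$-queries along $\tau$ given $|\tau|$ is $\Bin(|\tau|, p)$, we have
\[ \Pr[|\pi| \geq t] \;\leq\; \mathbb{E}\!\left[\binom{|\tau|}{t}\right] p^t \;\leq\; (pk\, 2^k)^t, \]
and combining with the Kraft bound above proves the theorem. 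The main point to verify is the coupling, specifically that the marginal on $\tau$ is uniform and that the $*$-labels along $\tau$ are iid Bernoulli$(p)$; the remaining ingredients — the Kraft-style reduction, the shortcut tail bound, and the moment estimate — are routine.
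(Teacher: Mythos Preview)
Your proposal is correct and follows essentially the same approach as the paper: the Kraft-style reduction via Markov, the coupling that identifies $(\rho,\pi)$ with a uniform walk $\tau$ in $T$ plus iid Bernoulli$(p)$ $*$-labels (this is exactly the paper's Lemma~\ref{lem:equiv-sample}), and the $k$-clipped geometric tail bound combined with a moment estimate. The only cosmetic difference is that you phrase the last step via the factorial moment $\E\!\left[\binom{|\tau|}{t}\right]$ whereas the paper goes through $\E[|\tau|^t]/t!$; these are equivalent and give the same $(pk2^k)^t$ bound.
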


The probability guarantee on the right hand side, which is $(p2^{O(k)})^t$ should be contrasted with the guarantee of H{\aa}stad's original switching lemma, which is $(O(pk))^t$. This exponential blow up of the $k$ term in the base is the price we pay for the flexibility of the proof technique. We will suffer an analogous term for our multi-switching lemma as well.

\begin{proof}
First, we note that if $\depth(T\hit \rho) \geq t$, then this means that if we sample a uniform random path $\pi$ in $T\hit \rho$ then 
$$\Pr_{\pi \sim \mcal{W}(T\hit\rho)}[|\pi| \geq t] \geq \frac{1}{2^t}.$$
We can consider  $\Pr_\pi[|\pi| \geq t]$ to be a random variable, in which case the above inequality is an event that is implied by the event that $\depth(T\hit \rho) \geq t$. This means that 
$$\Pr_{\rho \sim \mcal{R}_p}[\depth(T\hit \rho) \geq t] \leq \Pr_{\rho \sim \mcal{R}_p}\left[\Pr_{\pi\sim \mcal{W}(T\hit\rho)}[|\pi| \geq t] \geq 2^{-t}\right] \leq 2^{t}\E_{\rho \sim \mcal{R}_p} \left[\Pr_{\pi\sim \mcal{W}(T\hit\rho)}[|\pi| \geq t]\right]$$
by applying Markov's inequality. Since $\E_{\cal{A}}[\Pr_{\cal{B}}[X]] = \Pr_{\cal{A}, \cal{B}}[X]$, it follows that 
$$\Pr_{\rho \sim \mcal{R}_p}[\depth(T\hit \rho) \geq t] \leq 2^{t}\Pr_{\substack{\rho \sim \mcal{R}_p\\ \pi\sim \mcal{W}(T\hit\rho)}} [|\pi| \geq t].$$

Now, we can sample the restriction $\rho$ and the path $\pi$ in the following way. First, we consider taking a random path $\sigma$ in the \emph{original} tree $F$, and then we can consider restricting the variables along that path with $*$-probability $p$. The number of $*$s is exactly the length of the path $\pi$. The equivalence between this way of sampling $\pi$ and the conventional way is made formal in Lemma~\ref{lem:equiv-sample}.

Given a particular path $\sigma$, the number of $*$s is a binomial variable $\Bin(|\sigma|, p)$, so it follows that 
$$\Pr_{\rho \sim \mcal{R}_p, \pi} [|\pi| \geq t] = \Pr_{\sigma \sim \mcal{W}(T)}[\Bin(|\sigma|, p) \geq t]
  = \sum_\ell \Pr_{\sigma \sim \mcal{W}(T)}[\Bin(\ell, p) \geq t \mid |\sigma| = \ell] \Pr_{\sigma \sim \mcal{W}(T)}[|\sigma| = \ell]$$
$$\leq \sum_\ell \binom{\ell}{t} p^t \Pr[|\sigma| = \ell] \leq \frac{p^t}{t!} \sum_\ell \ell^t\Pr_{\sigma \sim \mcal{W}(T)}[|\sigma| = \ell] = \frac{p^t}{t!}\E_{\sigma \sim \mcal{W}(T)}\left[|\sigma|^t\right]$$

Now we can just consider random paths in $T$, and their lengths. We can think of the edges of $\sigma$ as being generated in batches of length $k$. The convenience of this is that because $T$ is $k$-clipped, each batch has probability at least $2^{-k}$ of ending in a leaf. This means that $|\sigma|$ is stochastically dominated by $kX$ where $X$ is a geometric random variable $\Geo(2^{-k})$. In particular, this means that 
$$\E_{\sigma \sim \mcal{W}(T)}\left[|\sigma|^t\right] \leq \E[(kX)^t] \leq k^t\E[X^t] \leq k^t \cdot 2^{kt}t!$$
using the well known bound on the moments of geometric random variables that if $Y\sim \Geo(q)$ then $\E[Y^t] \leq t!/q^t$. Putting it all together, we have 
$$\Pr_{\rho \sim \mcal{R}_p}[\depth(T\hit \rho) \geq t] \leq 2^t \cdot \frac{p^t}{t!} \cdot k^t 2^{kt}t! = (2pk2^k)^t $$
as desired. 
\end{proof}

We note that the following lemma is equivalent to Lemma~8.7 in \cite{PRST16}, though we give a more intuitive proof.

\begin{lemma}\label{lem:equiv-sample}
Given a decision tree $T$, the following two ways of sampling a path $\pi$ are equivalent
\begin{enumerate}[label=(\arabic*)]

	\item Sample $\rho\sim \mcal{R}_p$, and output $\pi \sim \mcal{W}(T\hit\rho)$.

	\item Sample $\sigma \sim \mcal{W}(T)$, and then let $\pi$ be sublist of $\sigma$ obtained by keeping each variable independently with probability $p$.

\end{enumerate}
\end{lemma}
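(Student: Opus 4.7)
The plan is to prove the equivalence by induction on the structure of the decision tree $T$. For the base case, when $T$ is a single leaf, both procedures trivially output the empty path with probability $1$. For the inductive step, write $T = (e;\, T_0, T_1)$ and compare, step by step, the joint distribution of (i) whether $e$ appears in the sampled $\pi$, (ii) which branch is taken at $e$, and (iii) the continuation of $\pi$ into the appropriate subtree. Once the first step matches, the remainder reduces to the same problem on a strictly smaller tree.

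For procedure (1), the root $e$ survives in $T \hit \rho$ iff $\rho(e) = \ast$, which occurs with probability $p$; conditioned on this, $\pi$ records $(e,b)$ for $b \in \{0,1\}$ chosen uniformly. If instead $\rho(e) = b \in \{0,1\}$ (each with probability $(1-p)/2$), then $T \hit \rho = T_b \hit \rho$ and $e$ is absent from $\pi$. For procedure (2), $\sigma$ records $(e,b)$ at the root with $b$ uniform, and $e$ is retained in $\pi$ independently with probability $p$. A direct tally shows that in both procedures the four events ``direction $b$, $e$ kept/not kept'' carry probabilities $p/2$ and $(1-p)/2$ respectively, so the first step agrees in distribution.

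Now condition on going to subtree $T_b$ at the root (and on whether $e$ was recorded). In procedure (1), the continuation is a draw from $\mathcal{W}(T_b \hit \rho')$, where $\rho'$ is $\rho$ restricted to the variables of $T_b$. Because $T$ is proper, $e$ does not occur in $T_b$, so $\rho'$ is independent of $\rho(e)$ and is distributed as a draw from $\mathcal{R}_p$ on $T_b$'s variables. In procedure (2), the continuation is obtained by drawing $\sigma' \sim \mathcal{W}(T_b)$ and independently keeping each edge with probability $p$; this is also independent of whether $e$ was kept. Applying the inductive hypothesis to $T_b$ shows these continuations agree in distribution, completing the induction.

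The only subtlety is the properness step in the reduction: we need to be sure that after conditioning on $\rho(e)$ in procedure (1), the remaining restriction on the variables queried inside $T_b$ is still an independent draw from $\mathcal{R}_p$. This follows from properness (no variable is queried twice on a branch), together with the independence across coordinates built into $\mathcal{R}_p$. With that observation the induction goes through verbatim.
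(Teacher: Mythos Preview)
Your inductive proof is correct. The paper takes a different, non-inductive route: it introduces a single explicit coupling. Each variable gets two independent coins, a $\{\ast,\lnot\ast\}$ coin (biased $p$) and a $\{0,1\}$ coin (fair). Procedure~(1) is recovered by flipping the first coins to form $\rho$, and then using the unused second coins of the $\ast$-variables to drive the walk in $T\hit\rho$. Procedure~(2) is recovered by flipping the second coins first to trace a branch $\sigma\sim\mathcal{W}(T)$, and then flipping the first coins along $\sigma$ to decide which vertices survive. Since both readings consult the same coin table, the outputs coincide pointwise, not just in distribution.

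Your argument peels off the root and recurses, matching the four root events and then invoking independence (via properness and the product structure of $\mathcal{R}_p$) to reduce to the subtree. This is clean and fully rigorous. The paper's coupling is arguably more conceptual and has the advantage that it directly foreshadows the ``sampling game'' used later for the grid restrictions $\Rgrid_\Delta$, where the same trick of revealing randomness in a different order underlies Lemma~\ref{lem:equiv-sample-grid}. Your induction, by contrast, would be harder to adapt there because the grid restriction is not a product measure. But for the statement at hand, both proofs are equally valid.
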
 

\begin{proof}
We can think of sampling a restriction $\rho \sim \mcal{R}_p$ in the following way. Suppose that each variable has two coins associated with it: a coin that is $\{*, \lnot*\}$ with probability $p$ and $1 - p$ respectively, and a coin that is $\{0, 1\}$ with probability $\frac12$ each. All of the coins for all of the variables are independent. Then, we can sample $\rho$ by flipping each variable's first coin, setting the variable to be $*$ if it that's the result of the flip, and otherwise flipping the second coin to determine its value.

Notice that this doesn't use the second coin for any of the $*$s, so we can use these coins to sample a random walk in $T \hit \rho$ (whenever we step to the next   variable in the walk, we flip its coin to determine which direction to go). This samples $\pi$ in the same way as (1). 

However, we can also read off $\pi$ directly from the coin flips. In particular, $\pi$ is the sequence of variables along the branch of $T$ that is consistent with the second coin flips of all the variables, whose first coin flip is $*$. 

Now we can see how $\pi$ is sampled by (2) by changing the order in which we flip the coins. We can flip the second coins first, which naturally samples a uniformly random path $\sigma \sim \mcal{W}(T)$ (the unique branch of $T$ that is consistent with the second coin flips). Then flipping the first coins of the variables along $\sigma$ and setting $\pi$ to be the $*$s is the same as keeping each variable with probability $p$ as in (2). Thus, (1) and (2) are indeed equivalent.  
\end{proof}

The part of the proof where we show that $\Pr_{\sigma \sim \mcal{W}(T)}[\Bin(|\sigma|, p) \geq t] \leq (pk2^k)^t$ will be helpful for the other switching lemmas that follow. We can actually prove the following more general theorem using the same techniques. This will come in handy for our main switching lemma.

\begin{lemma}\label{lem:path-tail-bound}
Let $T_1, ..., T_m$ be $k$-clipped decision trees with $m \leq t$. Let $\sigma_i \sim \mcal{W}(T_i)$, and $X_i \sim \Bin(|\sigma_i|, p)$. Then the variable $X = \sum X_i$ has the following tail bound.
$$\Pr[X \geq t] \leq (30pk2^{2k})^t.$$
In the case where $m = 1$, we have $\Pr[X \geq t] \leq (pk2^k)^t$, which has better constants.
\end{lemma}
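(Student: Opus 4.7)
My plan is to reduce the $m$-tree statement to the $m=1$ case already handled inside the proof of Theorem~\ref{thm:simplesl}, and to absorb the cost of the reduction into the combinatorial factor of splitting $t$ ``deviation units'' among the $m$ trees. This cleanly separates the per-tree tail (which yields the $pk2^k$ factor of the single-tree bound) from the compositional cost (which yields the extra constant factor permitted by the statement).

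The reduction exploits two observations. First, since each $X_i$ is a nonnegative integer, whenever $X_1 + \cdots + X_m \geq t$ we may greedily allocate $t_i \in \{0,1,\dots,X_i\}$ with $\sum_i t_i = t$, witnessing the event $\{X_i \geq t_i \text{ for all } i\}$. Second, by assumption the pairs $(\sigma_i, X_i)$ are independent across $i$. Together these give
\[ \Pr[X \geq t] \;\le\; \sum_{\substack{t_1,\dots,t_m \geq 0 \\ t_1+\cdots+t_m=t}} \prod_{i=1}^m \Pr[X_i \geq t_i]. \]
For each factor on the right I would invoke the intermediate estimate established inside the proof of Theorem~\ref{thm:simplesl}: for any $t' \geq 0$ and any $k$-clipped tree $T$, $\Pr_{\sigma \sim \mcal{W}(T)}[\Bin(|\sigma|, p) \geq t'] \leq (pk2^k)^{t'}$. (The final $(2pk2^k)^t$ bound stated in that theorem is precisely this quantity multiplied by the Markov factor $2^t$ that converts a path-length tail into a depth tail, and is not what I want here.) Plugging in,
\[ \Pr[X \geq t] \;\le\; \sum_{\mathbf{t}} (pk2^k)^{\sum_i t_i} \;=\; \binom{t+m-1}{m-1}\,(pk2^k)^t. \]

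Since $m \leq t$, the number of weak compositions satisfies $\binom{t+m-1}{m-1} \leq \binom{2t-1}{t-1} \leq 4^t$, and therefore $\Pr[X \geq t] \leq (4pk2^k)^t \leq (30pk2^{2k})^t$, as claimed. When $m=1$ the binomial coefficient equals $1$, no composition cost is incurred, and we recover the tighter $(pk2^k)^t$ bound.

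I do not anticipate a genuine obstacle: the whole proof is a short assembly once the composition-style union bound is combined with independence across the $m$ trees. The only bookkeeping point is to extract the intermediate $(pk2^k)^{t'}$ estimate from within the proof of Theorem~\ref{thm:simplesl} rather than quoting its final statement, which would otherwise introduce a superfluous factor of $2^{t'}$ per tree and yield a weaker bound than what the lemma asserts.
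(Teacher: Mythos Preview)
Your proof is correct and takes a genuinely different route from the paper's. The paper aggregates all $m$ random walks into a single length $L = \sum_i |\sigma_i|$, observes that $X \sim \Bin(L,p)$, and bounds $\E[L^t]$ by dominating $L$ with $k$ times a negative binomial variable (the number of $\Ber(2^{-k})$ trials before $m$ successes); this requires a separate moment computation for the negative binomial (their Lemma~\ref{lem:neg-bin-moments}). You instead decompose the event $\{X\ge t\}$ over weak compositions $(t_1,\dots,t_m)$ of $t$, apply the single-tree estimate $(pk2^k)^{t_i}$ coordinatewise via independence, and pay only the combinatorial cost $\binom{t+m-1}{m-1}\le 4^t$ for the union. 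Your argument is more elementary---no auxiliary moment lemma is needed---and in fact yields the sharper bound $(4pk2^k)^t$, with $2^k$ rather than $2^{2k}$ in the base; the paper's extra factor of $2^k$ enters through their negative-binomial moment estimate $\E[Y^t] \le t^t(10\cdot 2^{2k})^t$. Both approaches recover the $m=1$ case identically from the body of Theorem~\ref{thm:simplesl}.
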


\begin{proof}
The case where $m = 1$ is proved in the body of the proof of Theorem~\ref{thm:simplesl}. In the following, we consider the general case.

Since $X_i$ is just the sum of binomial variables with the same success probability, we have $X \sim \Bin(\sum |\sigma_i|, p)$. Let $L$ denote the random variable $\sum |\sigma_i|$. Then we have
$$\Pr[X \geq t] = \Pr_{\sigma_i \sim \mcal{W}(T_i)}[\Bin(L, p) \geq t] = \sum_{\ell} \Pr_{\sigma_i \sim \mcal{W}(T_i)}[\Bin(\ell, p) \geq t \mid L = \ell]\Pr_{\sigma_i \sim \mcal{W}(T_i)}[L = \ell]$$
$$\leq \sum_\ell \binom{\ell}{t}p^t\Pr_{\sigma_i \sim \mcal{W}(T_i)}[L = \ell] \leq \frac{p^t}{t!}\sum_\ell \ell^t \Pr_{\sigma_i \sim \mcal{W}(T_i)}[L = \ell] =  \frac{p^t}{t!}\E_{\sigma_i \sim \mcal{W}(T_i)}[L^t]$$
Thus, it suffices to bound the moments of $L$. We can view the paths $\sigma_i$ as being sampled in chunks of length $k$. Since the tree is $k$-clipped, each chunk has probability $2^{-k}$ of ending at a leaf of $T_i$. This means that, $|\sigma_i|$ is stochastically dominated by $k$ times a variable sampled as the number of $\Ber(2^{-k})$ trials before a success (which is a geometric random variable, with success probability $2^{-k}$). It follows that $L$ is stochastically dominated by $kY$ where $Y$ is sampled as the number of  $\Ber(2^{-k})$ trials before $m$ successes (this is known as a negative binomial variable). Using this, we have 
$$\E_{\sigma_i \sim \mcal{W}(T_i)}[L^t] \leq k^t\E[Y^t].$$
Now, $\E[Y^t]$ is not difficult to bound. In particular, using standard probabilistic arguments (done in detail in Lemma~\ref{lem:neg-bin-moments}), we can show that $\E[Y^t] \leq (10\cdot2^k(t/m))^m \cdot (2^k t)^t$. Simplifying, we have 
$$\E[Y^t] \leq (10\cdot2^k(t/m))^m \cdot (2^k t)^t = t^t (2^k (10\cdot 2^k (t/m))^{m/t})^t \leq t^t(10\cdot2^{2k})^t$$ 
using $m \leq t$ and $x^{1/x} < 2$. Putting everything together, we get 
$$\Pr[X \geq t] \leq \frac{p^t}{t!}\cdot k^t\cdot t^t (10\cdot2^{2k})^t \leq (30pk2^{2k})^t$$
since $t^t/t! \leq e^t$. This gives us the desired conclusion.
\end{proof}

\subsection{Warm-up 2: multi-switching lemma for uniform restrictions}\label{ssec:warmup2}

Next, we'll show how the random paths technique can be adapted to show a multi-switching lemma for uniform restrictions. 

First, we define the \emph{canonical common $\ell$-partial decision tree} of a collection of DNFs. This object will be analogous to the canonical decision tree of Definition~\ref{defn:CDT}, but so that we can nearly represent a number of DNFs by a single tree.

\begin{definition}[Canonical common $\ell$-partial tree for $k$-clipped trees]\label{defn:CCDT}
Let $\scr{F} = (F_1, F_2, ..., F_s)$ be an ordered collection of $k$-clipped decision trees. Then the canonical common $\ell$-partial decision tree for $\scr{F}$, denoted $\CCDT_\ell(\scr{F})$, is defined inductively as follows.
\begin{enumerate}
\setcounter{enumi}{-1}

	\item If $s = 0$, then $\CCDT_\ell(\scr{F})$ consists of one tree; the empty tree.

	\item If $\depth(F_1) \leq \ell$, then $\CCDT_\ell(\scr{F}) = \CCDT_\ell(\scr{F}')$ where $\scr{F}' = (F_2, F_3, ..., F_s)$.

	\item If $\depth(F_1) > \ell$, then we consider the leftmost path of length $\ell + 1$ in $F_1$, $\eta$. We consider the tree $T_{\eta}$, of depth $\ell + 1$ which exhaustively queries all of the variables in $\eta$. Then for each leaf of $T_\eta$ corresponding to a restriction $\pi$, we recursively attach $\CCDT_\ell(\scr{F} \hit \pi)$. The result is $\CCDT_\ell(\scr{F})$.  

\end{enumerate}

\end{definition}

We note that a similar definition was given in \cite[Definition~3.8]{ST18}, but they allowed for several possible canonical common $\ell$-partial decision trees by considering any possible path $\eta$ of length $\ell + 1$, but for us it is easier to just have one canonical tree.

Now we are ready to state the multi-switching lemma for uniform restrictions.

\begin{theorem}\label{thm:uniform-msl}
Let $\scr{F} = \{F_1, ..., F_s\}$ be a collection of $k$-DNFs. Then for all $\ell, t \in \mathbb{N}$,
$$\Pr_{\rho \sim \mcal{R}_p}[\depth(\CCDT_\ell(\scr{F} \hit \rho)) \geq t] \leq s^{\ceil{t/\ell}}(p2^{O(k)})^t$$
\end{theorem}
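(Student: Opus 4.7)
The plan is to generalize the random-paths technique of Theorem~\ref{thm:simplesl} from a single $k$-clipped tree to the collection $\scr{F}$. The gain over a naive union bound (which would cost a factor of $s$ per length-$\ell$ block of depth, giving $s^{t/\ell}$ from a factor of $s$ per unit depth, which is already what we want) comes from exploiting the structure of $\CCDT_\ell$: a long root-to-leaf path in $\CCDT_\ell(\scr{F}\hit\rho)$ is composed of chunks of length $\ell+1$, each drawn from the leftmost path of some $F_{i_j}$, and the sequence of indices $i_1 \le i_2 \le \cdots \le i_m$ is non-decreasing by the construction in Definition~\ref{defn:CCDT}.

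First I would apply the same Markov reduction as in Theorem~\ref{thm:simplesl}: since any tree of depth $\geq t$ places weight $\geq 2^{-t}$ under $\mcal{W}(\cdot)$ on branches of length $\geq t$,
\[ \Pr_{\rho \sim \Rgrid_\Delta}[\depth(\CCDT_\ell(\scr{F}\hit\rho)) \geq t] \leq 2^t \Pr_{\rho,\,\pi \sim \mcal{W}(\CCDT_\ell(\scr{F}\hit\rho))}[|\pi| \geq t]. \]
Next, decompose any length-$t$ branch $\pi$ of $\CCDT_\ell(\scr{F}\hit\rho)$ into $m := \lceil t/(\ell+1) \rceil$ chunks of length $\ell+1$, each attached by Case~2 of Definition~\ref{defn:CCDT} on behalf of some $F_{i_j}$. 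Union-bounding over the $\leq s^m$ non-decreasing label sequences $\iota = (i_1,\ldots,i_m)$ (the non-decreasing constraint gives $\binom{s+m-1}{m}$, crudely bounded by $s^m$), it suffices to show that for each fixed $\iota$,
\[ \Pr_{\rho,\,\pi}\!\bigl[\pi\text{'s first $m$ chunks carry labels }\iota\bigr] \;\leq\; (p\cdot 2^{O(k)})^t. \]

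The key step, and the main obstacle, is to reduce this per-sequence probability to the tail bound of Lemma~\ref{lem:path-tail-bound} applied to the $m$ trees $F_{i_1},\ldots,F_{i_m}$. The plan is to generalize the coupling of Lemma~\ref{lem:equiv-sample}: sample the two independent coin sequences $s_v \in \{*,\neg *\}$ and $b_v \in \{0,1\}$ for each variable, read $\rho$ off the $s_v$'s together with the $b_v$'s on non-$*$ variables, and then read the $\mcal{W}(\CCDT_\ell(\scr{F}\hit\rho))$-walk off the $b_v$'s on $*$-variables. Under this sampling, the $j$-th chunk's $\ell+1$ queried variables are precisely the first $\ell+1$ $*$-variables encountered along the $b_v$-walk in $F_{i_j}$ after applying $\rho$ and the earlier chunk assignments. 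Thus the event ``chunk $j$ has length $\ell+1$'' implies that the $b_v$-walk down $F_{i_j}$ accumulates at least $\ell+1$ stars before reaching a leaf; summed over $j$, we need at least $m(\ell+1)\geq t$ stars total. Since the $F_{i_j}$-walks are dominated by independent draws $\sigma_j \sim \mcal{W}(F_{i_j})$ with subsequent Bernoulli$(p)$ thinning (prior chunks can only \emph{shorten} a subsequent walk by fixing variables), Lemma~\ref{lem:path-tail-bound} applied with the $k$-clipped trees $F_{i_1},\ldots,F_{i_m}$ (and $m \leq t$) yields the $(30pk2^{2k})^t = (p\cdot 2^{O(k)})^t$ bound.

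Combining the Markov factor $2^t$, the $s^m \leq s^{\lceil t/\ell \rceil}$ union bound, and the per-sequence bound gives the claimed $s^{\lceil t/\ell \rceil}(p\cdot 2^{O(k)})^t$. The hardest piece to make rigorous is the domination claim: prior chunks' $b_v$ outcomes induce correlations across the $F_{i_j}$-walks (and affect which $F_{i_j}$ is actually next up in the CCDT), so the argument needs to carefully show that conditioning on earlier chunks does not help the subsequent walk accumulate stars, so that the independent-walks analysis of Lemma~\ref{lem:path-tail-bound} stochastically dominates the real process. Everything else is bookkeeping in the spirit of Theorem~\ref{thm:simplesl}.
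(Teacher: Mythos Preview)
Your high-level plan (Markov reduction, union bound over the non-decreasing label sequence $\iota$, per-sequence tail bound via Lemma~\ref{lem:path-tail-bound}) is close in spirit to the paper's proof, but the coupling step has a genuine gap that is more basic than the cross-chunk correlation issue you flag at the end.

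The problematic assertion is that ``the $j$-th chunk's $\ell+1$ queried variables are precisely the first $\ell+1$ $*$-variables encountered along the $b_v$-walk in $F_{i_j}$.'' This is false. By Definition~\ref{defn:CCDT}, the $j$-th chunk's variables are those of the \emph{leftmost} path of length $\ell+1$ in $F_{i_j}\hit(\rho\circ\pi_{<j})$, a deterministic choice once $\rho$ and $\pi_{<j}$ are fixed. Your $b_v$-walk, on the other hand, at the very first $*$-variable goes in the direction $b_v$ dictated by the $\mcal{W}(\CCDT)$-walk, which is uniform and need not be ``left.'' Once it diverges from the leftmost path, the subsequent $*$-variables it meets are unrelated to the chunk's variables, and it may well reach a leaf having seen fewer than $\ell+1$ $*$'s even though $\depth(F_{i_j}\hit(\rho\circ\pi_{<j})) > \ell$. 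So the event you feed into Lemma~\ref{lem:path-tail-bound} (``the $b_v$-walk accumulates $\geq \ell+1$ stars in each of the $m$ trees'') is simply not implied by ``$\CCDT$ has depth $\geq t$ with labels $\iota$,'' and the lemma cannot be invoked.

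The paper repairs exactly this by introducing a \emph{second}, independent stream of random bits $x$ alongside the $y$-bits that play the role of your $b_v$'s on $*$-variables. The $x$-bits drive a separate random walk in $T_{i,\pi}\hit\rho$ whose sole purpose is to \emph{discover} a path with $\ell+1$ stars; if $\depth(T_{i,\pi}\hit\rho)>\ell$, some choice of $x$ does find one (the leftmost path, in particular), so over uniform $x$ this succeeds with probability at least $2^{-(\ell+1)}$. The $y$-bits are reserved for walking down $T_\eta$ once such a path is found. This is why the Markov factor is $4^t$, not your $2^t$. Now the $x$-walk is honestly a random walk in a $k$-clipped tree followed by Bernoulli-$p$ thinning (Lemma~\ref{lem:equiv-sample}), and Lemma~\ref{lem:path-tail-bound} applies. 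In short, a single coin stream cannot simultaneously drive the $\mcal{W}(\CCDT)$-walk and serve as the random-walk witness for the depth of each $F_{i_j}\hit\rho$; decoupling the two is the missing idea.
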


\begin{proof}[Proof of Theorem~\ref{thm:uniform-msl}]
Before we get into the meat of the proof, we'll give a high level description of it. The proof of Theorem~\ref{thm:simplesl} has three key steps. First, we argued that if the depth of $T \hit \rho$ is big, then a random path $\pi \sim \mcal{W}(T\hit\rho)$ has some chance of being long. Second, we argued that rather than sampling $\pi$ via a random path in $T \hit \rho$, we can sample it via a random path in the original tree $T$, and determine the $*$s after the fact. Last, we argued that because the trees are $k$-clipped, we can abstract away the tree itself, and just relate the random variable we're interested in (the length of $|\pi|$) to geometric random variables.

Our proof of the multi-switching lemma will have three analogous steps, with some key changes to adjust for the $\CCDT$s. In particular, it is not clear how to relate a random path in $\CCDT_\ell(\scr{F} \hit \rho)$ to random paths in the $\CDT$s of the formulas of $\scr{F}$, since these trees do not necessarily share a common structure. For this reason, we'll define a special algorithm $\mcal{A}$ which will not quite sample a random path in $\CCDT_\ell(\scr{F} \hit \rho)$, but will share the property that if the depth of this common tree is large, then the path that the algorithm outputs has some chance of being long.

Algorithm $\mcal{A}$'s inputs are family of formulas $\scr{F} = \{F_1, F_2, ..., F_s\}$, a restriction $\rho$ which should be thought of as sampled from $\mcal{R}_p$, and two strings $x, y \in \{0, 1\}^t$ which should be thought of as streams of uniformly random bits. The algorithm will output a tree $T$ and a path $\pi$ in $T$. 

The algorithm $\mcal{A}$ is as follows.

\begin{framed}
\textbf{Algorithm $\mcal{A}(\scr{F}, \rho, x, y)$:} 

\begin{itemize}

	\item Initialize $T \leftarrow \varnothing$, $\pi \leftarrow \varnothing$, $i \leftarrow 1$.

	\item While $i \leq s$ and the streams $x$ and $y$ are nonempty:

	\begin{itemize}

		\item Build $T_{i,\pi} = \CDT(F_i \hit \pi)$.

		\item Traverse a path in $T_{i,\pi}$ by following the restriction $\rho$ in the usual way, with the following additional rules:

		\begin{enumerate}[label=(\arabic*)]

			\item Whenever we encounter a $*$ in $\rho$, we read a bit from $x$ to determine which direction to continue.

			\item We stop when either we have reached a leaf of $T_{i,\pi}$, or we have encountered $\ell + 1$ $*$s.

		\end{enumerate}

		\item If we reached a leaf first, we keep $T$ and $\pi$ the same, we set $i \leftarrow i + 1$, and we restore the bits of $x$ that we used in this iteration of the loop, as if we had never read them.

		\item If we encounter $\ell + 1$ $*$s first, then we add the tree that exhaustively queries these $\ell + 1$ variables to the leaf of $T$ at the end of the path $\pi$. Then we read $\ell + 1$ bits of $y$ to take a random path down this subtree, and append this path to $\pi$. We do not increment $i$.

	\end{itemize}

	\item Return $T$ and $\pi$.

\end{itemize}

\end{framed}

Note the halting conditions of the algorithm; it stops if it runs out of formulas to work with (i.e., $i > s$), or it runs out of $x$ and $y$ (which means $\pi$ is length $m$).

Algorithm $\mcal{A}$ mimics the construction of the $\CCDT$, picking out paths and adding the trees that exhaustively query the variables on these paths, if the paths are sufficiently long. However, the algorithm does not precisely construct the $\CCDT$ since the paths that it picks may be wrong. The key is the following. If $\depth(\CCDT_\ell(\scr{F} \hit \rho)) \geq t$, then \emph{there exists} a choice of $x$ and $y$ such that $x$ makes $T$ a part of the CCDT (only a part, since we don't continue along all of the leaves), and such that $y$ picks out a path of length $t$ in the CCDT, given by $\pi$.

In more detail, following along the construction in Definition~\ref{defn:CCDT}, imagine that if $\depth(T_{i,\pi} \hit \rho) > \ell$, the next $\ell + 1$ bits of $x$ pick out $\eta$, the leftmost path of $T_{i,\pi}$ of length $\ell + 1$. On the other hand, if $\depth(T_{i,\pi}\hit \rho) \leq \ell$, then any choice for the next bits of $x$ will reach a leaf before it encounters $\ell + 1$ $*$s. This is why we can restore the bits of $x$ that we read, so that the randomness is reserved for $*$s in paths that will eventually result in adding to $T$. 

All of this simply means that if $\depth(\CCDT_\ell(\scr{F} \hit \rho)) \geq t$, then
$$\Pr_{x, y \sim \mcal{B}(t)}[|\pi| \geq t] \geq \frac{1}{4^t}$$
where $\mcal{B}(t)$ denotes the uniform distribution over $\{0,1\}^t$.

Now we can apply some of the same manipulations we used in the proof of Theorem~\ref{thm:simplesl}. We can view $\displaystyle\Pr_{x, y\sim \mcal{B}(t)}[|\pi| \geq t]$ as a random variable, depending on $\rho$. Then the above inequality is an event implied by the event that $\depth(\CCDT_\ell(\scr{F}\hit \rho)) \geq t$. Therefore, 
 $$\Pr_{\rho \sim \mcal{R}_p}[\depth(\CCDT_\ell(\scr{F} \hit \rho)) \geq t] \leq \Pr_{\rho \sim \mcal{R}_p}\left[\Pr_{x, y\sim \mcal{B}(t)}[|\pi| \geq t] \geq 4^{-t}\right] \leq 4^{t}\E_{\rho \sim \mcal{R}_p} \left[\Pr_{x, y\sim \mcal{B}(t)}[|\pi| \geq t]\right]$$

 by applying Markov's inequality. Since $\E_{\cal{A}}[\Pr_{\cal{B}}[X]] = \Pr_{\cal{A}, \cal{B}}[X]$, it follows that 
$$\Pr_{\rho \sim \mcal{R}_p}[\depth(\CCDT_\ell(\scr{F} \hit \rho)) \geq t] \leq 4^{t}\Pr_{\substack{\rho \sim \mcal{R}_p\\ x, y\sim \mcal{B}(t)}} [|\pi| \geq t].$$

Now we make the second step of the proof -- showing how to sample $\pi$ via random paths in the trees $T_{i, \pi}$. In particular, consider the following alternate version of algorithm $\mcal{A}$, which we will call $\tilde{\mcal{A}}$. $\tilde{\mcal{A}}$ will have the same output as $\mcal{A}$, but its input will just be the family $\scr{F}$ and the string $y$.

\begin{framed}
\textbf{Algorithm $\tilde{\mcal{A}}(\scr{F}, y)$:} 

\begin{itemize}

	\item Initialize $T \leftarrow \varnothing$, $\pi \leftarrow \varnothing$, $i \leftarrow 1$.

	\item While $i \leq s$ and the stream $y$ is nonempty:

	\begin{itemize}

		\item Build $T_{i,\pi} = \CDT(F_i \hit \pi)$.

		\item Sample a random path $\sigma_{i, \pi} \sim \mcal{W}(T_{i, \pi})$. Let $\eta_{i, \pi}$ be a sublist of $\sigma_{i, \pi}$ obtained by keeping each variable independently with probability $p$.

		\item If $|\eta_{i, \pi}| \leq \ell$, we keep $T$ and $\pi$ the same, and we set $i \leftarrow i + 1$.

		\item If $|\eta_{i, \pi}| > \ell$, then we add the tree that exhaustively queries the first $\ell + 1$ variables of $\eta_{i, \pi}$ to the leaf of $T$ at the end of the path $\pi$. Then we read $\ell + 1$ bits of $y$ to take a random path down this subtree, and append this path to $\pi$. We do not increment $i$.

	\end{itemize}

	\item Return $T$ and $\pi$.

\end{itemize}

\end{framed}

We will argue that the distribution of $\pi$ output by $\mcal{A}$ is the same as that output by $\tilde{\mcal{A}}$. In The key difference is that in $\mcal{A}$, we use $x$ to sample up to $\ell + 1$ variables of a random path in $T_{i, \pi} \hit \rho$, in $\tilde{\mcal{A}}$ we use some $\sigma_{i, \pi} \sim \mcal{W}(T_{i, \pi})$, and find the path by keeping each variable with probability $p$. These are indeed equivalent by Lemma~\ref{lem:equiv-sample}. 

Finally, we will use the fact that the trees $T_{i, \pi}$ are $k$-clipped to abstract away much of algorithm $\tilde{\mcal{A}}$. Say that an iteration of the loop is \emph{good} if $|\eta_{i, \pi}| \leq \ell$, and so we increment $i$, and \emph{bad} if $|\eta_{i, \pi}| > \ell$, and so we add $\ell + 1$ variables to $\pi$. In this view, the algorithm finishes with $|\pi|\geq t$ if and only if we have at least $t/\ell$ bad iterations before we have $s$ good iterations.  Our goal is to show that this is unlikely.

By Lemma~\ref{lem:path-tail-bound}, if $T$ is a $k$-clipped tree then 
$$\Pr_{\sigma \sim \mcal{W}(T)}[\Bin(|\sigma|, p) \geq \ell] \leq (pk2^k)^\ell$$
Since $\eta_{i, \pi} \sim \Bin(|\sigma_{i, \pi}|, p)$ and $\sigma_{i, \pi} \sim \mcal{W}(T_{i, \pi})$, this means that the probability of a bad iteration is at most $(pk2^k)^\ell$. Since the event that an iteration is bad is independent from this event for other iterations, and using the fact that the algorithm terminates after at most $s + t/\ell$ iterations, we have that the probability that we have $t/\ell$ bad iterations is at most 
$$\binom{s + t/\ell}{t/\ell} ((pk2^k)^\ell)^{t/\ell} \leq s^{t/\ell}(2pk2^k)^t.$$
And so it follows that 
$$\Pr_{\rho \sim \mcal{R}_p}[\depth(\CCDT_\ell(\scr{F} \hit \rho)) \geq t] \leq 4^{t}\Pr_{\substack{\rho \sim \mcal{R}_p\\ x, y\sim \mcal{B}(t)}} [|\pi| \geq t] \leq s^{t/\ell}(8pk2^k)^t$$
as desired.
\end{proof}

We note that just as the weak switching lemma given by Theorem~\ref{thm:simplesl} can be used to get nontrivial circuit lower bounds for parity, the weak multi-switching lemma given by Theorem~\ref{thm:uniform-msl} can be used to get nontrivial correlation bounds for parity. In particular, we have the following corollary.

\begin{corollary}\label{cor:cor-bound}
Any size $s$ depth $d$ circuit on $n$ variables has correlation at most $\exp(-n/2^{O(d\sqrt{\log s})})$ with parity on $n$ variables. 
\end{corollary}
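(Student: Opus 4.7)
The plan is to use $d-1$ rounds of random restrictions together with Theorem~\ref{thm:uniform-msl} to collapse any depth-$d$ size-$s$ circuit $C$ to a shallow decision tree, and then invoke the standard Fourier-analytic observation that any function computed by a decision tree of depth strictly less than its number of free variables has vanishing top Fourier coefficient, and hence zero correlation with parity on those variables. Fix $\ell = \sqrt{\log s}$, $p = 2^{-C\ell}$ for a sufficiently large constant $C$, and $t = \Theta(n/(d\cdot 2^{Cd\ell}))$; via a standard preliminary width-reduction restriction we may assume the bottom fan-in of $C$ is at most $\ell$, so the bottom two layers form a family of at most $s$ $\ell$-DNFs.

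Next, draw independent restrictions $\rho_1,\ldots,\rho_{d-1}\sim \mcal{R}_p$. In round $i$, I would apply Theorem~\ref{thm:uniform-msl} with parameters $(\ell,t)$ to the current family of bottom-level $\ell$-DNFs. Substituting $\ell = \sqrt{\log s}$ and $p = 2^{-C\ell}$ into the failure bound $s^{t/\ell}(p\cdot 2^{O(\ell)})^t$ simplifies it to $2^{-\Omega(t\ell)}$ per round. On the success event the common decision tree has depth $<t$ and every bottom DNF becomes a decision tree of depth at most $t+\ell$, which I rewrite as a $(t+\ell)$-CNF or $(t+\ell)$-DNF and merge into the layer above, reducing the circuit depth by one. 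After all $d-1$ successful rounds, $C|_{\rho_1\cdots\rho_{d-1}}$ is a single decision tree of depth at most $(d-1)(t+\ell)$ on the surviving variables.

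A Chernoff bound then yields that the number of surviving variables is at least $np^{d-1}/2 = \Theta(n/2^{Cd\ell})$ with probability $1-\exp(-\Omega(np^{d-1}))$, and by the choice of $t$ this count strictly exceeds $(d-1)(t+\ell)$. On the intersection of the switching successes and the Chernoff success, the Fourier argument forces the correlation of $C|_\rho$ with $\chi_{\mathrm{free}}$ to be zero. Decomposing $\mathbb{E}[C\chi]$ as $\mathbb{E}_\rho[\chi(\rho_{\mathrm{fixed}})\cdot \mathbb{E}_y[C|_\rho(y)\chi_{\mathrm{free}}(y)]]$ and union-bounding the bad event over the $d-1$ per-round switching failures plus the Chernoff tail yields $|\mathbb{E}[C\chi]|\le d\cdot 2^{-\Omega(t\ell)} + \exp(-\Omega(np^{d-1})) \le \exp(-n/2^{O(d\sqrt{\log s})})$.

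The hard part will be simultaneously balancing three parameter conditions: the per-round failure $2^{-\Omega(t\ell)}$ and the Chernoff tail $\exp(-\Omega(np^{d-1}))$ must both be at most the target $\exp(-n/2^{O(d\sqrt{\log s})})$, while the decision tree depth $(d-1)(t+\ell)$ must remain strictly below the surviving-variable count $np^{d-1}/2$. A secondary technical point is carrying out the preliminary width-reduction step cleanly; when the original bottom fan-in is much larger than $\ell$, one should either fold width reduction into the first application of multi-switching, or case-split on $d$ versus $\sqrt{\log s}$ so that the stronger bound implied by a more aggressive initial restriction is still within the claimed form $\exp(-n/2^{O(d\sqrt{\log s})})$.
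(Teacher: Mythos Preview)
Your proposal has a genuine gap in the iterative depth-reduction step. After one round of multi-switching you correctly observe that each bottom DNF is computed by a depth-$(t+\ell)$ decision tree and propose to ``rewrite as a $(t+\ell)$-CNF or $(t+\ell)$-DNF and merge into the layer above.'' But once you do that, the new bottom fan-in is $t+\ell$, not $\ell$. In the next round Theorem~\ref{thm:uniform-msl} must be invoked with width parameter $k=t+\ell$, and its failure bound becomes $s^{t/\ell}(p\,2^{O(t+\ell)})^t$. Since $p=2^{-C\ell}$ while $t=\Theta(n/(d\cdot 2^{Cd\ell}))\gg \ell$, the base $p\,2^{O(t+\ell)}$ is exponentially large and the bound is vacuous. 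Your sentence ``apply Theorem~\ref{thm:uniform-msl} with parameters $(\ell,t)$ to the current family of bottom-level $\ell$-DNFs'' is exactly where the argument breaks: after flattening, those DNFs are no longer $\ell$-wide.

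The paper avoids this by \emph{not} flattening the common tree back into the circuit. It keeps the growing common tree separate: at each leaf $\pi$ of the current common tree one has a depth-$(d-i)$ circuit whose bottom fan-in is genuinely $\ell$ (only the leaf-specific depth-$\ell$ trees are merged upward, not the common part). One then applies the multi-switching lemma \emph{at every leaf} and union-bounds over the leaves. Because the number of leaves grows, the paper lets $t$ grow across rounds, taking $t_i=2^i t_1$ so that the per-round failure $2^{-\Omega(t_i)}$ absorbs the union bound over the $2^{O(2^i t_1)}$ leaves accumulated so far; the final common tree then has depth $\sum_i t_i \le 2^d t_1$, which is arranged to be below the $p^d n$ surviving variables. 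Your fixed-$t$ scheme could also be made to work in the regime $d=o(\sqrt{\log s})$ once you insert the missing union bound over leaves, but as written the flattening step is the fatal error.
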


\begin{proof}[Proof sketch]
We will apply the multi-switching lemma $d$ times using random restrictions $\rho_1, \rho_2, ..., \rho_d \sim \mcal{R}_p$, with $p = 2^{-O(\sqrt{\log s})}$. We will choose $k = \ell = \sqrt{\log s}$ so that $(p2^{O(k)})^t = 2^{-t}$. 

The successive applications of the multi-switching lemma build a large tree with circuits at its leaves, where each step the depth of the circuits is reduced by 1 and the tree grows by augmenting by common trees. For the first application of the switching lemma, we will set $t$ to be $t_1 = n/2^{O(d\sqrt{\log s})}$, and in the $i$th iteration we will set $t$ to be $t_i = 2^i t_1$. This ensures that the error in the $i$th step, after union bounding over the $(2^{i} - 1)t_1$ leaves of the current common tree, is still $2^{-t}$.

Thus, except for an error probability of at most $d2^{-t} = \exp(-n/2^{O(d\sqrt{\log s})})$, at the end we have a tree of depth at most $2^d\cdot t_0 = n/2^{O(d\sqrt{\log s})}$, which is less than the number of variables that remain, $p^dn$ (choosing constants carefully). A tree of depth less than $m$ cannot have any correlation with parity on $m$ variables, so the only contribution to the correlation of the original circuit is in the error of the multi-switching lemmas. This gives the desired bound. 
\end{proof}

This bound is better than the $\exp(-\Omega(n^{\frac{1}{d - 1}}))$ bound that is trivially implied by H\aa{}stad's original parity lower bounds \cite{Hastad86}. In the typical setting of parameters $s = \poly(n)$, our bound is $\exp(-n^{1-o(1)})$, while H\aa{}stad gives $\exp(-n^{o(1)})$. As expected, our bound falls short of what's given by H\aa{}stad's original multi-switching lemma \cite{Has14}, which is $\exp(-\Omega(n/(\log s)^{d - 1}))$. Interestingly, our bound is even slightly better than the previous best before \cite{Has14}, which was $\exp(-n/2^{O(d(\log s)^{4/5})})$ given by \cite{BeameIS12} (despite the very similar bound, their techniques are quite different).

\subsection{Preliminaries for adapting to grid restrictions}\label{ssec:prelims-for-sl}

We would like to adapt the proof of Theorem~\ref{thm:uniform-msl} to work beyond uniformly random restrictions since our grid restrictions are not uniform. Let's summarize the three key steps of the proof to see what needs to change so that we can adapt to the grid restrictions.

First, we defined an algorithm $\mcal{A}$ which samples a path $\pi$ from $\CCDT_\ell(\mcal{F} \hit \rho)$ in such a way that if $\depth(\CCDT_\ell(\mcal{F} \hit \rho)) \geq t$ then $\pi$ has positive probability of having length at least $\ell$. The randomness is over two uniformly random bit strings $x, y$ whose lengths are at most $t$, so in fact this probability is at least $1/4^t$.

Second, we defined an alternate algorithm $\tilde{\mcal{A}}$ that sampled a random path $\pi$ from $\CCDT_\ell(\mcal{F} \hit \rho)$ under the same distribution as $\mcal{A}$, but now the distribution of $|\pi|$ is much easier to analyze. The crucial difference between the two algorithms is that where $\mcal{A}$ finds augmentations to $\pi$ through random paths in trees $T_{i, \pi} \hit \rho$, $\tilde{\mcal{A}}$ finds augmentations through random paths in $T_{i, \pi}$ directly, and then determines where the $*$s in the random path are after the fact. This allows us to argue about paths in $T_{i, \pi}$ instead.

Third, we used the fact that the trees $T_{i, \pi}$ are $k$-clipped to argue that random paths in these trees cannot be too long, in expectation at most $2^k$ (we used more sophisticated probabilistic arguments, but this is the intuition). Since the number of $*$s in a path $\sigma_{i, \pi} \sim T_{i, \pi}$ is distributed by $\Bin(|\sigma|, p)$, we can use this to argue that with exponentially small probability, the path $\pi$ does not grow very often.

The second step turns out to be the most challenging of the three to adapt; the other two only need small (but careful) adjustments. In this section, we will build the infrastructure to adapt the second step. 

\subsubsection{Independent canonical decision trees.}

In both algorithm $\mcal{A}$ and $\tilde{\mcal{A}}$ we create trees of the form $T = \CDT(F\hit \pi)$. Before, the key property of these trees was that they are $k$-clipped. In the grid restrictions setting, it is also crucial for us that these trees $T$ be $(G_n -\supp(\pi))$-independent, so that no variables in a branch can be forced by an assignments upstream in the tree. This ensures that under our random restrictions they take on uniformly random values when they are non-$*$s.  

Conveniently, in Section~\ref{sec:prune}, we defined a process where we can hit a tree $T$ with a full or partial restriction, and ensure that the result is a independent decision tree. By simply reinterpreting our original definition of the $\CDT$ (Definition~\ref{defn:CDT}) using the process in Section~\ref{sec:prune}, we can ensure that the $\CDT$s are both $k$-clipped and independent.

As a reminder, here is our definition of the $\CDT$. 

\begin{repdefn}{defn:CDT}
\textnormal{Let $F = C_1 \lor C_2 \lor \cdots$ be a $k$-DNF. Let $T_i$ be the depth $k$ decision tree that queries that variables of $C_i$ exhaustively until $C_i$ evaluates to either $0$ or $1$. We build the \emph{canonical decision tree} $T$ of $F$ iteratively as follows. Initially $T$ is the empty tree. In the $i$th step, we consider the tree $T_i\hit \pi$. If this tree still has a 1-leaf, we attach $T_i \hit \pi$ to the leaf of $T$ corresponding to $\pi$ for each $\pi \in \Branches_0(T)$. Otherwise $T_i \hit \pi$ is a 0-tree, and we can skip it.}
\end{repdefn}

In the very first iteration, $\pi$ is the empty path, but unlike before $T_1 \hit \pi$ is not necessarily $T_1$. Instead, $T_1 \hit \pi$ is $T_1$ with the bad paths pruned. $\CDT(F)$ is still $k$-clipped for the same reason as it was originally -- each vertex in the tree is the root of one of the trees $T_i \hit \pi$. $\CDT(F)$ is also now independent, since the process in Section~\ref{sec:prune} ensures that no bridges are queried in the tree.

Note that this ``reinterpretation'' of the definition of the $\CDT$ also affects the way we constructed the $\CCDT$ in Definition~\ref{defn:CCDT}, since this definition uses $\CDT$s as building blocks.

\subsubsection{The sampling game}

At the heart of the switch from $\mcal{A}$ to $\tilde{\mcal{A}}$ is the idea that the restriction $\rho$ can be sampled both directly, and in a piecewise manner through random paths in various trees. Something similar is true of restrictions $\rho \sim \Rgrid_\Delta$. In particular we will define what we'll call the \emph{sampling game} that shows how to sample a restriction $\rho \sim \Rgrid_\Delta$ in a piecewise way.

\begin{definition}[Sampling game]\label{defn:sampling-game}
The sampling game is played between two players, an adversary and a sampler. Throughout the game, each variable is either \emph{set}, meaning it has been given a value in $\{0, 1, *\}$, or \emph{unset} meaning it has not yet been assigned a value. Initially, all the variables are unset, and the set variables define a \emph{partial restriction} $\tilde{\rho}$ of the variables on the grid $G_n$.

At each round of the game, the adversary chooses an unset variable $e$ that is not a bridge in $G_n -\supp( \tilde{\rho})$. Then, the sampler determines (in a randomized fashion) $e$ will be a $*$ or a non-$*$ . If it will be a non-$*$ then it is given a uniformly random value in $\zo$. The sampler also has the additional option to set more variables, but cannot create any new $*$ variables if $e$ was made a non-$*$, and if it was made a $*$, she can only create at most $C$ additional $*$ variables for some fixed constant $C$. We will call these \emph{residual $*$s}.
\end{definition}

The motivation for this game is in the setting of algorithm $\tilde{\mcal{A}}$, where the adversary will feed the sampler variables from the paths $\sigma_{i, \pi}$ one by one. Note that the fact that the trees $T_{i, \pi}$ are independent ensures that the variables along the branches can never be bridges.  The goal of the sampler is to make sure that regardless of the adversary's choices, she can ensure that the final restriction sampled is either under the same distribution as $\Rgrid_\Delta$, or instead the adversary samples a distribution of restrictions $\mcal{R}$ where $*$s are only more likely. 
The ``same distribution'' case corresponds to the claim that the algorithms $\mcal{A}$ and $\tilde{\mcal{A}}$ sample the same distribution of paths $\pi$, whereas the ``$*$s only more likely case'' corresponds to the claim that the distribution of $|\pi|$ under $\tilde{\mcal{A}}$ stochastically dominates the distribution of $|\pi|$ under $\mcal{A}$. This claim is in fact enough for our proof. 

We will describe two sampling strategies -- one which samples the same distribution, and one that makes $*$s only more likely. The second approach is a small modification of the first, which ends up making the proof work more smoothly.\\

\noindent\textbf{Sampling Approach I.}

Throughout the process, the sampler will keep track of which centers have been deemed chosen centers, and in subgrids without a chosen center, she will keep track of which centers have been \emph{eliminated}. Initially, no centers are chosen or eliminated.

For each edge $e$ in the grid, by Lemma~\ref{lem:path-disjointness}, either it does not lie on any of the paths between centers, or there exists a center that is an endpoint of every such path that $e$ lies on. This center will be called $e$'
s \emph{associated} center. If $e$'s associated center is called $A$, then $e$ lies on a set of paths between $A$ and several centers in a neighboring subgrid. Let the set of other endpoints of these paths be $S_e$. Note that for most edges, $S_e$ is a singleton set, but when $e$ is close to $A$, it can be larger (as quantified in Lemma~\ref{lem:path-disjointness}). 

Now, suppose that at some point in the game, the adversary chooses an edge $e$ to be set, and let $A$ be the associated center of $e$. Suppose that in the subgrid that contains $A$, $r$ centers have not yet been eliminated. Then with probability $1/r$, the sampler decides that $A$ will be a chosen center, and with probability $1 - 1/r$, the sampler decides that $A$ will be eliminated. If $A$ is eliminated, $e$ is immediately declared a non-$*$ by the sampler. Otherwise, we must determine if one of the centers in $S_e$ will be a chosen center, in order to know if $e$ will be a $*$. Let $U$ be the set of uneliminated centers in the subgrid containing $S_e$. Then, with probability $|U\cap S_e|/|U|$, we declare $e$ to be a $*$ and eliminate all the centers in $U\setminus S_e$, and with probability $|U\setminus S_e|/|U|$, we declare $e$ to be a non-$*$, and eliminate the centers in $U \cap S_e$. At each step of the game where we have a partial restriction $\tilde{\rho}$. we also check if there are any bridges in $G_n - \supp(\tilde{\rho})$, and give them their forced values according to Fact~\ref{fact:bridge-forced}. In addition, introducing a $*$ in each step means picking a center from two adjacent subgrids, so this can introduce at most 6 additional $*$s if the centers for the 6 subgrids surrounding our two are already chosen. These are the residual $*$s, and so this process satisfies Definition~\ref{defn:sampling-game} with $C = 6$.

It is not difficult to check that the careful choice of the probabilities ensures that we chose a uniformly random center from each subgrid. The rules of the sampling game also set the non-$*$s to be uniform when they are not bridges, and so after fixing all the bridges that that appear, the distribution of the non-$*$ is a uniformly random solution to the Tseitin instance with charges 0 at the centers and 1 elsewhere. Thus, this process eventually samples a restriction identically to $\Rgrid_\Delta$. \\

\noindent\textbf{Sampling Approach II.}

The drawback of the first sampling approach is that in determining if any of the centers in $S_e$ will be $*$s, it is possible that a number of centers are eliminated. This will not be good for us, since we would like to eventually argue that if a center is likely to be chosen, it is because in several previous steps other centers weren't chosen. 

This issue has an easy fix -- at the point where we determine if $A$ is eliminated or chosen, we immediately declare $e$ to be a non-$*$ if $A$ eliminated, \emph{and} if there are any uneliminated centers in $S_e$ we declare $e$ to be a $*$ if $A$ is chosen. This modified process still eventually chooses a uniform random center from each subgrid, and because of the property that that $e$ is a $*$ only if $A$ is chosen, this means that $*$s are only more likely in this modified sampling approach. We fix bridges in the same way we did in the first process, but we note that introducing a new $*$ can only introduce 3 additional $*$, one for each subgrid adjacent to the one with $A$ besides the one that the new $*$'s path goes to from $A$. Thus, this process satisfies Definition~\ref{defn:sampling-game} with $C = 3$. This is the process we will use henceforth when we refer to ``the sampler.''

At first glance it may seem that we are losing something by not verifying that $e$ is a $*$ because of \emph{both} endpoints of a path rather than just one. To give some intuition for why this doesn't cost much, we can imagine that by only considering $e$'s associated center, we make it a $*$ with probability $\frac1\Delta$ when it might have been even smaller, in particular $\frac1{\Delta^2}$. However, the difference between the 1 and 2 in the exponent only corresponds to constants in our error bound, so it doesn't make much of a difference to us.\\

To conclude this section, what we have shown can be formally expressed as the following lemma, which is analogous to Lemma~\ref{lem:equiv-sample}. Indeed, the proof of Lemma~\ref{lem:equiv-sample} can be interpreted as a strategy for the sampler in an analogous sampling game for $\mcal{R}_p$. 

\begin{lemma}\label{lem:equiv-sample-grid}
Suppose we are at an intermediate step of the sampling game, with a partial restriction $\tilde{\rho}$. Let $\mcal{R}'$ be the distribution of extensions $\rho'$ of $\tilde{\rho}$ such that $\rho = \tilde{\rho} \circ \rho'$ is a restriction from $\Rgrid_\Delta$. Suppose we have a $(G_n - \supp(\tilde{\rho}))$-independent tree $T$. Consider the two distributions
\begin{enumerate}[label=(\arabic*)]

	\item[$\mcal{D}_1$:] Sample $\rho'\sim \mcal{R}'$, sample $\pi \sim \mcal{W}(T\hit \tilde{\rho} \circ \rho')$ and output $|\pi|$.

	\item[$\mcal{D}_2$:] Sample $\sigma \sim \mcal{W}(T\hit \tilde{\rho})$, feed the variables of $\sigma$ to the sampler (except any residual $*$s, which have already been assigned), and sample $\pi$, the sublist of $\sigma$ consisting of the $*$ variables in $\sigma$. Output $|\pi|$.

\end{enumerate}
Then $\mcal{D}_2$ stochastically dominates $\mcal{D}_1$.
\end{lemma}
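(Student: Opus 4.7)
The plan is a two-step reduction. Let $\mcal{D}_2^{I}$ denote the variant of $\mcal{D}_2$ in which the sampler uses Sampling Approach I in place of Approach II. I will show (a) $\mcal{D}_2^{I}$ is identically distributed to $\mcal{D}_1$, and (b) there is a coupling of Approach I and Approach II under which every edge marked $*$ by Approach I is also marked $*$ by Approach II, so that $|\pi|$ under $\mcal{D}_2$ stochastically dominates $|\pi|$ under $\mcal{D}_2^{I}$. Together, (a) and (b) yield the claim.

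For step (a), I will mimic the coin-flipping argument in the proof of Lemma~\ref{lem:equiv-sample}. To each unrestricted variable $e$ associate two independent random bits $s_e$ and $v_e$: the bit $s_e$ is consumed by the sampler (together with the appropriate conditional probability given the history) to decide whether $e$ is $*$ or non-$*$, while $v_e$ plays the dual role of walk direction at $e$ when $e$ is $*$, or value assigned to $e$ when $e$ is non-$*$. Since Approach I faithfully samples $\rho' \sim \mcal{R}'$, sampling $\rho'$ upfront and then walking in $T \hit \tilde{\rho} \circ \rho'$ produces the same joint distribution over the path-and-$*$-set as drawing $\sigma \sim \mcal{W}(T \hit \tilde{\rho})$ with directions $v_e$ and then letting the sampler reveal $\rho'$ on demand. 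The $(G_n - \supp(\tilde{\rho}))$-independence of $T$ is used here to guarantee that no variable queried on a branch of $T$ is itself a bridge, so that the sampler never has to overwrite a $v_e$ bit with a forced value along $\sigma$; residual $*$s and bridge-forced values off of $\sigma$ agree across the two orderings by routine bookkeeping.

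For step (b), I will couple the two approaches by pre-sampling, for each subgrid, a uniformly random chosen center, and declaring ``$A$ is chosen'' consistently with this choice whenever either approach queries a variable associated with $A$. This is a valid coupling because the conditional probability that $A$ is declared chosen given the history of eliminations in its subgrid is $1/r_A$ in both approaches, which matches the conditional probability that the pre-sampled chosen center equals $A$. Both approaches then feed from the same walk $\sigma \sim \mcal{W}(T \hit \tilde{\rho})$ (which by definition is sampled independently of the sampler). Now for any queried $e$ with associated center $A$ and target set $S_e$, Approach I marks $e$ as $*$ iff $A$ is the chosen center of its subgrid and the chosen center of $S_e$'s subgrid lies in $S_e$. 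Whenever this holds, that chosen center is never eliminated by Approach II (the coupling guarantees Approach II's decisions are consistent with the pre-sampled chosen centers), so $S_e$ retains an uneliminated center under Approach II, and Approach II likewise marks $e$ as $*$. This yields the pointwise bound $|\pi_{II}| \ge |\pi_{I}|$ in the coupling.

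The step I expect to be the main obstacle is (a): the coin-flipping coupling is conceptually straightforward but has to be set up carefully for the grid sampler, in particular to justify that the on-demand version of Approach I produces the same joint distribution over the walk path and the $*$-set as the upfront sampling, including when the sampler must force bridge values and introduce residual $*$s in neighbouring subgrids off of $\sigma$. The $T$-independence hypothesis is what makes this tractable: it rules out the pathological case in which the walk itself queries a bridge, which is exactly the configuration that would decouple the walk direction from the independent uniform bit $v_e$.
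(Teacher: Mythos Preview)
Your proposal is correct and takes essentially the same approach as the paper. The paper does not give a standalone proof of this lemma but instead states that it summarizes the preceding discussion: Sampling Approach I faithfully produces $\Rgrid_\Delta$ (your step (a), via the coin-flipping argument of Lemma~\ref{lem:equiv-sample}), and Sampling Approach II only makes $*$s more likely than Approach I (your step (b)); your two-step decomposition and the pre-sampled-centers coupling are exactly this argument spelled out in more detail than the paper itself provides.
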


\subsection{Proof of main lemma: multi-switching lemma for grid random projections}\label{ssec:mainlemma}

We will prove the following theorem. 

\begin{theorem}
Let $\mcal{F} = \{F_1, F_2, ..., F_s\}$ be a collection of $k$-DNFs over the variables of a grid graph $G_n$. Then for all $t \in \mathbb{N}$, 
$$\Pr_{\rho \leftarrow \Rgrid_\Delta}[\depth(\CCDT_{\ell}(\scr{F} \hit \rho)) \geq t] \leq  s^{\ceil{t/\ell}}(2^{O(k)}/\Delta)^{\Omega(t)}$$ 
\end{theorem}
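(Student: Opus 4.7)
The plan is to adapt the three-step structure of the proof of Theorem~\ref{thm:uniform-msl} to the grid setting, substituting the sampling game of Definition~\ref{defn:sampling-game} for the coin-flipping view of uniform random restrictions used in Lemma~\ref{lem:equiv-sample}.

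First, I would define an algorithm $\mcal{A}(\scr{F}, \rho, x, y)$ that is structurally identical to the one in Warm-up~2, but with one crucial change: the canonical decision trees $T_{i, \pi} = \CDT(F_i \hit \pi)$ are constructed via the pruning procedure of Section~\ref{sec:prune}, so that each of them is not only $k$-clipped but also $(G_n - \supp(\pi))$-independent. The same argument as in the uniform case then shows that whenever $\depth(\CCDT_\ell(\scr{F}\hit \rho)) \geq t$, there is a choice of $x, y \in \{0,1\}^t$ producing $|\pi| \geq t$, which via Markov yields
\[
\Pr_{\rho \sim \Rgrid_\Delta}[\depth(\CCDT_\ell(\scr{F}\hit\rho)) \geq t] \le 4^t \Pr_{\rho, x, y}[|\pi| \geq t].
\]

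Next, I would define $\tilde{\mcal{A}}$ so that $\rho$ is sampled implicitly via the sampling game rather than being drawn in advance. At iteration $i$, after building $T_{i, \pi}$, the algorithm draws a uniform random walk $\sigma_{i, \pi} \sim \mcal{W}(T_{i, \pi})$ and feeds its variables one-by-one to Sampling Approach~II; the $G$-independence of $T_{i, \pi}$ guarantees that none of these variables are bridges at the moment they are fed, which is required by Definition~\ref{defn:sampling-game}. Lemma~\ref{lem:equiv-sample-grid} then shows that the distribution of $|\pi|$ under $\tilde{\mcal{A}}$ stochastically dominates the corresponding distribution under $\mcal{A}$, reducing the bound to the analysis of $\tilde{\mcal{A}}$.

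Third, I would bound $\Pr[|\pi|\geq t]$ under $\tilde{\mcal{A}}$ by showing that bad iterations (those with $|\eta_{i,\pi}|>\ell$) are rare. At each step of Sampling Approach~II the fed variable becomes a $*$ with probability at most $O(1/\Delta)$ regardless of history, and each $*$-decision creates at most $C=3$ residual $*$s. Thus the number of $*$s along $\sigma_{i,\pi}$ is stochastically dominated by a constant multiple of $\Bin(|\sigma_{i,\pi}|, O(1/\Delta))$. Combining this with the $k$-clipped tail bound on $|\sigma_{i,\pi}|$ (exactly as in Lemma~\ref{lem:path-tail-bound}, but with $p = O(1/\Delta)$) gives $\Pr[\text{iteration }i\text{ is bad}] \leq (2^{O(k)}/\Delta)^{\ell}$ conditional on any history. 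Since each bad iteration grows $\pi$ by $\ell+1$, having $|\pi| \geq t$ requires at least $\lceil t/\ell\rceil$ bad iterations among at most $s + \lceil t/\ell\rceil$ total iterations. Multiplying probabilities across iterations, counting $\binom{s+\lceil t/\ell\rceil}{\lceil t/\ell\rceil}$ arrangements, and absorbing the leading $4^t$ yields the target $s^{\lceil t/\ell\rceil}(2^{O(k)}/\Delta)^{\Omega(t)}$.

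The main obstacle is the second step: justifying that the sampling game can really replace the direct draw of $\rho \sim \Rgrid_\Delta$ throughout the execution of $\mcal{A}$, given that the trees $T_{i,\pi}$ depend on the already-revealed portion of $\rho$. The resolution is exactly what Section~\ref{ssec:prelims-for-sl} provides: the $G$-independence of the pruned $\CDT$s (so that the adversary never hands a bridge to the sampler and Definition~\ref{defn:sampling-game} remains applicable), and the constant cap on residual $*$s per step in Sampling Approach~II (so that the effective per-step $*$-probability stays $O(1/\Delta)$ uniformly over the history). With these two ingredients the remainder of the argument mirrors the uniform warm-up.
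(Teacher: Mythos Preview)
Your outline follows the paper's three-step structure and correctly identifies the role of the sampling game and the pruned, $G$-independent $\CDT$s. However, there is a genuine gap in your third step.

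You assert that ``at each step of Sampling Approach~II the fed variable becomes a $*$ with probability at most $O(1/\Delta)$ regardless of history.'' This is false. In Sampling Approach~II, when the adversary feeds an edge $e$ with associated center $A$, the variable becomes a $*$ with probability $1/r$, where $r$ is the number of uneliminated centers remaining in $A$'s subgrid. If prior steps have already eliminated most centers in that subgrid, $r$ can be as small as $1$, and the $*$-probability can be $\Theta(1)$ rather than $O(1/\Delta)$. So your per-iteration bound $\Pr[\text{iteration bad}] \le (2^{O(k)}/\Delta)^\ell$ ``conditional on any history'' does not hold, and the subsequent product argument collapses.

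The paper handles exactly this issue by abandoning the per-iteration viewpoint and instead classifying each $*$ along the concatenated path $\eta$ as \emph{good} (when $r \ge \Delta/2$, so the $*$-probability really is $\le 2/\Delta$), \emph{bad} (when $r < \Delta/2$), or \emph{residual}. Good $*$s are bounded via Lemma~\ref{lem:path-tail-bound} as you intended. Bad $*$s require a separate global argument: each bad $*$ witnesses $\ge \Delta/2$ prior queries into the same subgrid, with disjoint witness sets, so many bad $*$s force $\sum_{i,\pi} |\sigma_{i,\pi}|$ to be huge, which is ruled out by a concentration bound on the sum of geometric variables. For this global argument to work, the number of iterations must be $O(t/\ell)$ rather than $s + t/\ell$; this is why the paper first passes, by a union bound over $\binom{s}{\le t/\ell}$ subfamilies, to a family $\scr{F}^*$ of size $\le t/\ell$ before running $\mcal{A}$, extracting the $s^{\lceil t/\ell\rceil}$ factor up front rather than at the end as you do. Your arrangement-counting over $s + \lceil t/\ell\rceil$ iterations would not combine cleanly with the bad-$*$ analysis.
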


\begin{proof}
The proof will follow nearly the same outline as the proof of Theorem~\ref{thm:uniform-msl} that we described at the start of Section~\ref{ssec:prelims-for-sl}. It will be helpful for us to consider smaller sets of DNFs at a time, so we begin by arguing that if $\depth(\CCDT_\ell(\scr{F} \hit \rho)) \geq t$, then there must be a smaller set $\scr{F}^*$ of $t/\ell$ $k$-DNFs that are ``responsible'' for this. i.e., $\depth(\CCDT_\ell(\scr{F}^* \hit \rho)) \geq t$. 

Consider a path of length $t$ in $\CCDT_\ell(\scr{F} \hit \rho)$. The common tree is constructed in a way that if we look at the original DNFs of $\scr{F}$ from which the edges of any root to leaf path came, each DNF would be responsible for a segment of at least $\ell$ edges in the path. This means that the edges in the path correspond to at most $t/\ell$ different DNFs. Let this set be $\scr{F}^*$. Now, if we consider constructing the tree $\CCDT_\ell(\scr{F}^* \hit \rho)$, we see that the same path in $\CCDT_\ell(\scr{F} \hit \rho)$ appears in exactly the same way, since the only difference is that DNFs that may have been skipped over before (in the construction of the common tree, along this path) are no longer in the set at all. Hence, it follows that $\depth(\CCDT_\ell(\scr{F}^* \hit \rho)) \geq t$ as claimed. 

It follows by the union bound that 
\begin{align*}
\Pr_{\rho \sim \mcal{R}_p}[\depth(\CCDT_\ell(\scr{F} \hit \rho)) \geq t] &\leq \Pr_{\rho \sim \mcal{R}_p}\left[\bigvee_{\scr{F^*} \in \binom{\scr{F}}{\leq t/\ell}}\depth(\CCDT_\ell(\scr{F}^* \hit \rho)) \geq t\right]\\
&\leq \sum_{\scr{F^*} \in \binom{\scr{F}}{\leq t/\ell}}\Pr_{\rho \sim \mcal{R}_p}[\depth(\CCDT_\ell(\scr{F}^* \hit \rho)) \geq t]
\end{align*}

This allows us to consider sets of at most $t/\ell$ DNFs at a time. Noting that there are at most $s^{t/\ell}$ such sets, it suffices to show
\begin{equation}\label{eq:suffices-warm2}
\Pr_{\rho \sim \mcal{R}_p}[\depth(\CCDT_\ell(\scr{F}^* \hit \rho)) \geq t] \leq  (2^{O(k)}/\Delta)^t.
\end{equation}
Let $\scr{F}^* = \{F_1, F_2, ..., F_{t/\ell})$. Once again we define the algorithm $\mcal{A}$, identically to before, now with $\scr{F}^*$ as an input.

\begin{framed}
\textbf{Algorithm $\mcal{A}(\scr{F}^*, \rho, x, y)$:} 

\begin{itemize}

	\item Initialize $T \leftarrow \varnothing$, $\pi \leftarrow \varnothing$, $i \leftarrow 1$.

	\item While $i \leq t/\ell$ and the streams $x$ and $y$ are nonempty:

	\begin{itemize}

		\item Build $T_{i,\pi} = \CDT(F_i \hit \pi)$. %

		\item Traverse a path in $T_{i,\pi}$ by following the restriction $\rho$ in the usual way, with the following additional rules:

		\begin{enumerate}[label=(\arabic*)]

			\item Whenever we encounter a $*$ in $\rho$, we read a bit from $x$ to determine which direction to continue.

			\item We stop when either we have reached a leaf of $T_{i,\pi}$, or we have encountered $\ell + 1$ $*$s.

		\end{enumerate}

		\item If we reached a leaf first, we keep $T$ and $\pi$ the same, we set $i \leftarrow i + 1$, and we restore the bits of $x$ that we used in this iteration of the loop, as if we had never read them.

		\item If we encounter $\ell + 1$ $*$s first, then we add the tree that exhaustively queries these $\ell + 1$ variables to the leaf of $T$ at the end of the path $\pi$. Then we read $\ell + 1$ bits of $y$ to take a random path down this subtree, and append this path to $\pi$. We do not increment $i$.

	\end{itemize}

	\item Return $T$ and $\pi$.

\end{itemize}

\end{framed}

We note that while $\mcal{A}$ is written the same, the execution is slightly different since $\CDT(F_i \hit \pi)$ is always constructed in a way that they are $(G_n - \supp(\pi))$-independent, and additional bookkeeping needs to be done since $*$s can correspond to the same variables. However, the crucial fact is that these changes in the execution also apply to the construction of the $\CCDT$ in our setting, and ultimately algorithm $\mcal{A}$ just traces, in a randomized fashion, the construction of a $\CCDT$. This means that just as before, if $\depth(\CCDT_\ell(\scr{F}^* \hit \rho)) \geq t$, then there exist choices of $x$ and $y$ such that the outputted path $\pi$ is of length at least $\ell$. i.e., 
$$\Pr_{x, y \sim \mcal{B}(t)}[|\pi| \geq t] \geq \frac{1}{4^t}.$$
Then, by the same manipulations we used in the proof of Theorem~\ref{thm:uniform-msl}, we can view $\displaystyle\Pr_{x, y \sim \mcal{B}(t)}[|\pi| \geq t]$ as a random variable depending on $\rho \sim \Rgrid_\Delta$, and it follows that 
$$\Pr_{\rho \sim \Rgrid_\Delta}[\depth(\CCDT_\ell(\scr{F}^* \hit \rho)) \geq t] \leq 4^{t}\Pr_{\substack{\rho \sim \Rgrid_\Delta\\ x, y\sim \mcal{B}(t)}} [|\pi| \geq t].$$

Next, we introduce the algorithm $\tilde{\mcal{A}}$, which will be guided by the sampling game we defined in the previous section.

\begin{framed}
\textbf{Algorithm $\tilde{\mcal{A}}(\scr{F}^*, y)$:} 

\begin{itemize}

	\item Initialize $T \leftarrow \varnothing$, $\pi \leftarrow \varnothing$, $i \leftarrow 1$, and whatever parameters are needed by the sampler.

	\item While $i \leq t/\ell$ and the stream $y$ is nonempty:

	\begin{itemize}

		\item Build $T_{i,\pi} = \CDT(F_i \hit \pi)$.

		\item Sample a random path $\sigma_{i, \pi} \sim \mcal{W}(T_{i, \pi})$. Feed the variables of $\sigma_{i, \pi}$ to the sampler (except any residual $*$s, which have already been assigned) one at a time, determining which will be $*$s until either $\ell + 1$ $*$s are encountered, or we reach a leaf. Let $\eta_{i, \pi}$ be the sublist of $\sigma_{i, \pi}$ consisting of the $*$ variables encountered.

		\item If $|\eta_{i, \pi}| \leq \ell$ (i.e, we reach a leaf first), we keep $T$ and $\pi$ the same, and we set $i \leftarrow i + 1$.

		\item If $|\eta_{i, \pi}| > \ell$ (i.e., we see $\ell + 1$ $*$s first), then we add the tree that exhaustively queries the first $\ell + 1$ variables of $\eta_{i, \pi}$ to the leaf of $T$ at the end of the path $\pi$. Then we read $\ell + 1$ bits of $y$ to take a random path down this subtree, and append this path to $\pi$. We do not increment $i$.

	\end{itemize}

	\item Return $T$ and $\pi$.

\end{itemize}

\end{framed}

By Lemma~\ref{lem:equiv-sample-grid}, the distribution of $|\pi|$ output by $\mcal{A}$ is stochastically dominated by the distribution of $|\pi|$ output by $\tilde{\mcal{A}}$. Hence, it suffices to argue about the distribution of $|\pi|$ output by $\tilde{\mcal{A}}$. 

First, we note that as each iteration either uses $\ell$ bits of $y$ or increments $i$, so the algorithm goes on for at most $2t/\ell$ iterations. Instead of arguing about $|\pi|$, we will instead argue about the total length of the paths $\eta_{i, \pi}$, which we will denote $|\eta|$ (we can think of $\eta$ as the concatenation of all these paths). As $|\pi| \leq |\eta|$, we have
$$\Pr_{\substack{\rho \sim \Rgrid_\Delta\\ y\sim \mcal{B}(t)}} [|\pi| \geq t] \leq \Pr_{\substack{\rho \sim \Rgrid_\Delta\\ x, y\sim \mcal{B}(t)}} [|\eta| \geq t].$$
Because we have at most $2t/\ell$ iterations, $\eta$ will not be too different from $\pi$, but it will be easier to argue about because we don't have to account for when the paths $\eta_{i, \pi}$ are discarded.

Consider a particular iteration of $\tilde{\mcal{A}}$. There are a few different ways we can come across a $*$ in the path $\sigma_{i, \pi}$. The first are residual $*$s. Then, for each variable $e$ fed to the sampler, we want to distinguish between when it is unlikely to be chosen as a $*$ and when it is likely to be. In particular, if $e$'s associated center is $A$ and there are $r$ eliminated centers in the subgrid containing $A$, then $e$ becomes a center with probability at most $1/r$. Let's say that $e$ is \emph{good} if $r \geq \Delta/2$ and \emph{bad} if $r < \Delta/2$. If $e$ becomes a $*$, we call it a good $*$ or a bad $*$ if $e$ was a good or bad edge respectively.

Each of the $*$s that eventually consists of $\eta$ is either residual, good, or bad. Let  $\etaR$, $\etaG$, and $\etaB$ denote the sublists of $\eta$ corresponding to residual, good, or bad $*$s respectively. We will argue about these three separately. In particular we will argue that with failure probability comparable to the final desired failure probability $(2^{O(k)}/\Delta)^{\Omega(t)}$, we have $|\etaG|, |\etaB| < t/8$, $|\etaR|< 3t/4$. A final union bound tells us that with similar failure probability, $|\eta| < t$.

First, let's deal with $\etaG$, since this is most similar to what we did in the proof of Theorem~\ref{thm:uniform-msl}. Let $\sigmaG_{i, \pi}$ denote the good variables on the path $\sigma_{i, \pi}$, and let $\etaG_{i, \pi}$ denote the good variables that become good $*$s. Each of these good variables becomes a $*$ with probability at most $2/\Delta$, so the distribution of $|\etaG_{i, \pi}|$ is stochastically dominated by the distribution $\Bin(|\sigmaG_{i, \pi}|, 2/\Delta)$, which in turn is dominated by $\Bin(|\sigma_{i, \pi}|, 2/\Delta)$. Summing over all $\sigma_{i, \pi}$, it follows that $|\etaG|$ is stochastically dominated by $\Bin(|\sum \sigma_{i, \pi}|, 2/\Delta)$. But then, applying Lemma~\ref{lem:path-tail-bound}, it follows that 
$$  \Pr_{\substack{\rho \sim \Rgrid_\Delta\\ x, y\sim \mcal{B}(t)}} [|\etaG| \geq t/8]  \leq (60k2^{2k}/\Delta)^{t/8}.$$
This failure probability is of the form that we're aiming for.

Next, let's deal with $\etaB$. The crucial observation is that for a variable $e$ in $\sigma_{i, \pi}$ to be a bad variable, then there must have been at least $\Delta/2$ variables previously queried whose associated center was in the same subgrid as the associated center of $e$. Since each edge has a unique (or no) associated center, and each subgrid can only have one chosen center, it follows that these sets of $\Delta/2$ variables for each bad $*$ are all disjoint. From this, it follows that if $|\etaB| > t/8$ all of the paths $\sigma_{i, \pi}$, must total at least $\Delta t/16$ variables. Intuitively, this is extremely unlikely, since there are at most $2t/\ell$ paths $\sigma_{i, \pi}$ to consider, the length of each is stochastically dominated by $k \Geo(2^{-k})$, and we will eventually choose $\Delta \gg 2^{k}$ in order for our final error guarantee to be nontrivial.

More formally, let $X_{i, \pi} \sim  \Geo(2^{-k})$, so that $kX_{i, \pi}$ dominates $|\sigma_{i, \pi}|$. Then $k\sum X_{i, \pi}$ dominates the total length of all the paths $|\sigma_{i, \pi}|$. Thus, it suffices to show that $\Pr[k\sum X_{i, \pi} \geq \Delta t/16]$ is small. Noting that since there are at most $2t/\ell$ variables $X_{i, \pi}$ we are considering, the mean of $X = \sum X_{i, \pi}$ is at most $2^k\cdot 2t/\ell$. Massaging the probability we're concerned with to the more typical form $\Pr[X \geq (1 + \delta) \mu]$, we have 
$$\Pr\left[k\sum X_{i, \pi} \geq \Delta t/16\right] \leq \Pr\left[\sum X_{i, \pi} \geq \left(\frac{\Delta t}{16k 2^k \cdot 2t/\ell}\right)\cdot 2(t/\ell)2^k\right] =  \Pr\left[\sum X_{i, \pi} \geq \left(\frac{\Delta \ell}{32k 2^k}\right)\cdot 2(t/\ell)2^k\right].$$
Now, when $X$ is the sum of $n$ independent variables $X_i \sim \Geo(p)$, we have for $d > 2$, $\Pr[X \geq d\mu] \leq \exp(-d\mu/8)$, where $\mu = n/p$. For a short proof of this using the Chernoff bound, see Lemma~\ref{lem:geo-sum-concentration}. In our case, $d = \frac{\Delta \ell}{32k 2^k}$ (which is greater than $2$ since we can assume $\Delta > 64 k2^k$; otherwise our theorem is trivial), $n = 2t/\ell$, and $p = 2^{-k}$. Thus, we get
$$\Pr\left[\sum X_{i, \pi} \geq \left(\frac{\Delta \ell}{32k 2^k}\right)\cdot 2(t/\ell)2^k\right] \leq \exp\left(-2(t/\ell)2^k \cdot \frac{\Delta \ell}{32 k 2^k}\cdot \frac{1}{8}\right) = \exp\left(-\frac{\Delta}{256 k2^k} t\right).$$
This is clearly smaller than $(256k2^k/\Delta)^{t}$ for instance. Thus, we do indeed have that $|\etaB| \leq t/8$ with high enough probability.

Finally, we come to $\etaR$. A residual $*$ can only arise when a good or bad $*$ appears, and each good or bad $*$ can only bring at most 3 residual $*$s by Definition~\ref{defn:sampling-game}. This means that $|\etaR| \leq 3(|\etaG| + |\etaB|)$, and since we showed that with high probability $|\etaG|, |\etaB| < t/8$, we also have that $|\etaR| <3t/4$ with high probability. 

In conclusion, we have shown that 
$$\Pr_{\substack{\rho \sim \Rgrid_\Delta\\ x, y\sim \mcal{B}(t)}} [|\eta| \geq t] \leq (256k2^{2k}/\Delta)^{t/8}$$
and so 
$$\Pr_{\rho \sim \Rgrid_\Delta}[\depth(\CCDT_\ell(\scr{F}^* \hit \rho)) \geq t] \leq 4^{t}\Pr_{\substack{\rho \sim \Rgrid_\Delta\\ x, y\sim \mcal{B}(t)}} [|\pi| \geq t] \leq (305k2^{2k}/\Delta)^{t/8}.$$
which is of the desired form given in (\ref{eq:suffices-warm2}).
\end{proof}

\section{Conclusion}

We have given the first tradeoffs between the size of each line and number of lines in small-depth Frege proofs.  With the parallels to correlation bounds in circuit complexity in mind, we conjecture that the parameters of our main result, Theorem~\ref{thm:main}, can be improved to 
\begin{equation} S \ge \exp(n/(\log s)^{O(d)}). \label{eq:conjecture}
\end{equation} 
This would match the current strongest correlation bound against size-$s$ depth-$d$ circuits~\cite{Has14}, and would be optimal for the Tseitin principle.

The most natural route towards achieving this result is to establish a multiswitch generalization of the switching lemma from~\cite{Has21}, similar to how we obtained Theorem~\ref{thm:main} by establishing a multiswitch version of the switching lemma for $k$-clipped decision trees from~\cite{PRST16}.   The switching lemma of~\cite{Has21} shows that under a grid random restriction drawn from $\mathcal{R}^{\mathrm{grid}}_{\Delta}$, a $k$-DNF collapses to a depth-$t$ decision tree with probability $1-(k\max\{t,k\}/\Delta)^{O(t)}$.  It can be verified that a multiswitching lemma showing that a collection of $M$ many $k$-DNF formulas collapses to an $\ell$-common depth-$t$ decision tree with probability $\ge 1-M^{\lceil t/\ell\rceil} (k/\Delta)^{t}$ would yield the conjectured bound~(\ref{eq:conjecture}).   A seemingly necessary first step, before one attempts such a multiswitch generalization of~\cite{Has21}'s switching lemma, is to first improve its failure probability to $(k/\Delta)^{O(t)}$, removing the factor of $\max\{t,k\}$ in the base.

\section*{Acknowledgements} 

We are grateful to the anonymous reviewers, whose comments and suggestions have helped improve this paper, and to Johan H{\aa}stad for pointing out a typo in Theorem~\ref{thm:main} in an earlier version of the paper.  Toni is supported in part by NSF grant CCF-1900460, and NSERC.  Li-Yang and Pras supported by NSF CAREER Award CCF-1942123. Pras is also suppored by a Simons Investigator Award.

\bibliography{allrefs}
\bibliographystyle{alpha}

\appendix
\section{Probabilistic tools}

\begin{lemma}\label{lem:neg-bin-moments}
Suppose that $X$ is a random variable sampled as the number of Bernoulli trials with success probability $q \leq \frac12$ before $s$ successes. Then 
$$\E[X^t] \leq  \left(10\frac{t}{sq}\right)^{s} \left(\frac{t}{q}\right)^{t}.$$ 
\end{lemma}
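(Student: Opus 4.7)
The plan is to decompose $X$ as $X = Y_1 + \cdots + Y_s$, where $Y_1, \ldots, Y_s \sim \Geo(q)$ are i.i.d.\ (each $Y_i$ being the number of Bernoulli trials to the $i$-th success), and then expand using the multinomial theorem:
\[
  \E[X^t] \;=\; \sum_{\vec{k}:\, \sum_i k_i = t,\, k_i \ge 0} \binom{t}{k_1,\ldots,k_s}\prod_{i=1}^s \E[Y_i^{k_i}].
\]
The key single-variable estimate I would prove first is
\[
  \E[Y^k] \;\le\; e\cdot k!/q^k \qquad \text{for } Y \sim \Geo(q),\ q \le 1.
\]
This follows by stochastic dominance: since $(1-q)^{j-1} \le e^{-q(j-1)}$, the variable $Y$ is stochastically dominated by $E+1$ with $E \sim \Exp(q)$, and the integral $\E[(E+1)^k] = qe^q\int_1^\infty w^k e^{-qw}dw$ is at most $qe^q \cdot k!/q^{k+1} = e^q k!/q^k \le e\cdot k!/q^k$ after the substitution $w = x+1$ and extending the domain of integration to $[0,\infty)$.

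Plugging this bound into the multinomial expansion (and overestimating $\E[Y^0]=1$ by $e\cdot 0!/q^0 = e$ so the bound is uniform in $\vec{k}$), the $k_i!$'s cancel with those in the multinomial coefficient and every term becomes at most $e^s\cdot t!/q^t$. Since the number of terms in the sum is $\binom{t+s-1}{s-1}$, we obtain
\[
  \E[X^t] \;\le\; e^s \cdot \frac{t!}{q^t} \cdot \binom{t+s-1}{s-1}.
\]
Now I would use $t! \le t^t$ together with $\binom{t+s-1}{s-1} \le \binom{t+s}{s} \le (e(t+s)/s)^s \le (2et/s)^s$, where the last step uses $s \le t$ (the regime in which the lemma is invoked inside the proof of Lemma~\ref{lem:path-tail-bound}). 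This yields
\[
  \E[X^t] \;\le\; (t/q)^t \cdot (2e^2 t/s)^s.
\]
Finally, since $q \le 1/2$ gives $10/q \ge 20 > 2e^2$, we have $(2e^2 t/s)^s \le (10 t/(sq))^s$, which produces the claimed bound.

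The main pitfall to be mindful of is the choice of single-variable moment estimate. A naive approach via the Chernoff tail bound for $X$ (i.e.\ Lemma~\ref{lem:geo-sum-concentration}) integrated against $\int t x^{t-1}dx$ yields a bound of the form $\max\{(Cs/q)^t,\ (C't)^t\}$, which does \emph{not} factor into the target product form $(10 t/(sq))^s (t/q)^t$ once $t$ exceeds $s$. The multinomial expansion sidesteps this because each factor in the product contributes one factor $k_i!/q^{k_i}$, which after summing over $\vec{k}$ cleanly produces the $(\text{stuff}/s)^s(t/q)^t$ shape; but this only works because the single-variable estimate loses only a constant factor $e$ rather than a factor growing with $k$, and because the assumption $q\le 1/2$ gives the final slack needed to absorb the constant $2e^2$ into $10/q$.
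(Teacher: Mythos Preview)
Your proof is correct, and it takes a genuinely different route from the paper's. The paper bounds $\Pr[X\ge m]\le (2em/s)^s e^{-qm}$ and then runs a layer-cake estimate over intervals $[rt/q,(r+1)t/q)$, summing $\sum_r((r+1)r^{s/t}e^{-r})^t$ and asserting this sum is at most a constant. Your approach instead writes $X=Y_1+\cdots+Y_s$ with $Y_i\sim\Geo(q)$, expands $\E[X^t]$ multinomially, and uses the single-variable bound $\E[Y^k]\le e\cdot k!/q^k$ so that the factorials cancel; this directly yields $\E[X^t]\le e^s\binom{t+s-1}{s-1}\,t!/q^t$, which is arguably cleaner and avoids the somewhat delicate $r=0$ term in the paper's sum.

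Both arguments implicitly need $s\le t$ (you are explicit about it; the paper's claim that the sum is ``at most $10$'' also fails once $s/t$ is large, since the summand peaks near $r\approx s/t$ with value roughly $((s/t)^{s/t}e^{-s/t})^t$). Since the only invocation, in Lemma~\ref{lem:path-tail-bound}, has $m\le t$, this is harmless either way. Your closing remark about the ``pitfall'' of a direct Chernoff-plus-integration approach is apt, though it is worth noting that the paper's tail-bound route succeeds precisely because their tail bound retains the $(\cdot)^s$ prefactor rather than collapsing it into a single exponential; your multinomial expansion achieves the same $(\cdot)^s(t/q)^t$ factorization by a more algebraic mechanism.
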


\begin{proof}
First, we note that for any $m\geq s$,
$$\Pr[X \geq m] = \binom{m}{m - s} (1 - q)^{m - s} \leq \left(e\frac{m}{s}\right)^s(1 - q)^{m - s} \leq \left(2e\frac{m}{s}\right)^se^{-qm}.$$ 
Next, we consider $X$ falling into intervals of the form $[r\frac{t}{q}, (r + 1)\frac{t}{q})$. In particular 
$$ \Pr\left[X \in \left[r\frac{t}{q}, (r + 1)\frac{t}{q}\right)\right] \leq \left(2e\frac{rt}{qs}\right)^s e^{-rt}.$$

It follows that 
\begin{align*}
\E[X^t] &\leq \sum_{r} \left((r + 1)\frac{t}{q}\right)^{t}\Pr\left[X \in \left[r\frac{t}{q}, (r + 1)\frac{t}{q}\right)\right] \\
&\leq \sum_{r} \left((r + 1)\frac{t}{q}\right)^{t} \left(2e\frac{rt}{qs}\right)^s e^{-rt} \\
&\leq \left(2e\frac{t}{qs}\right)^s \left(\frac{t}{q}\right)^t \sum_{r} ((r + 1)r^{s/t}e^{-r})^t
\end{align*}
The final sum decays exponentially, and is in particular at most 10. This gives us the desired bound.
\end{proof}

\begin{lemma}\label{lem:geo-sum-concentration}
Let $X = X_1 + X_2 + \cdots + X_n$ where $X_i \sim \Geo(p)$ are independently sampled. Then 
$$\Pr[X \geq d \cdot n/p] \leq \exp(-dn(1 - 1/d)^2/2).$$
In particular, if $d \geq 2$, then this is at most $\exp(-kn/8)$.
\end{lemma}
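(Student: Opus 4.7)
}

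The plan is to reduce the tail bound on a sum of independent geometrics to a standard Chernoff lower-tail bound on a binomial. Recall that if each $X_i \sim \Geo(p)$ counts the number of Bernoulli$(p)$ trials up to and including the first success, then $X = X_1 + \cdots + X_n$ is distributed as the number of Bernoulli$(p)$ trials required to accumulate $n$ successes (the negative binomial). From this viewpoint, the event $\{X \ge m\}$ is exactly the event that the first $m-1$ trials produce fewer than $n$ successes, giving the identity
\[
\Pr[X \ge m] = \Pr[\Bin(m-1,p) \le n-1].
\]

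I would then set $m = dn/p$ and apply the standard Chernoff lower-tail bound
\[
\Pr[\Bin(N,p) \le (1-\delta)\mu] \le \exp(-\delta^2 \mu/2),
\]
where $\mu = Np$. Taking $N = m-1 \le dn/p$ (so that $\mu \le dn$) and choosing $\delta = 1 - 1/d$ yields $(1-\delta)\mu \le n$, whence
\[
\Pr[X \ge dn/p] \le \Pr[\Bin(m-1,p) \le n] \le \exp\!\bigl(-(1-1/d)^2 \, dn/2\bigr),
\]
which is the claimed bound. (Using $N = m-1$ rather than $dn/p$ introduces only a harmless lower-order adjustment; alternatively one can absorb the $-1$ by rounding $m$ up to the nearest integer, since the bound is monotone in $m$.)

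For the ``in particular'' clause, when $d \ge 2$ we have $(1 - 1/d)^2 \ge 1/4$, so the exponent is at least $dn/8$, giving $\exp(-dn/8)$ as desired (the paper's ``$k$'' appears to be a typo for $d$). The only step requiring slight care is matching the parametrization of $\Geo(p)$ to a negative binomial and handling the off-by-one in $m-1$ versus $m$; neither is a real obstacle, and there is no serious difficulty in the argument beyond invoking the right Chernoff bound.
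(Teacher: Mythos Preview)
Your proposal is correct and follows essentially the same approach as the paper: reinterpret the sum of geometrics as a negative binomial, rewrite $\{X \ge dn/p\}$ as a binomial lower-tail event, and apply the standard Chernoff bound with $\delta = 1 - 1/d$. You are in fact slightly more careful than the paper about the $m-1$ versus $m$ off-by-one, and your observation that ``$k$'' should be ``$d$'' in the final clause is correct.
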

\begin{proof}
Viewing the geometric random variables as the number of Bernoulli trials with success probability $p$ before one success, we can view the event $X \geq kn/p$ equivalently as the probability that $dn/p$ such Bernoulli trials have at most $n$ successes. i.e., 
$$\Pr[X \geq d \cdot n/p] \leq \Pr[\Bin(dn/p, p) \leq n].$$
By the Chernoff bound, $\Pr[\Bin(dn/p, p) \leq (1 - \delta)dn] \leq \exp(-dn \delta^2/2)$. Applying this with $1 - \delta = 1/d$, we get the claimed result.

\end{proof}

\end{document}